\documentclass[11pt,a4paper]{article}
\usepackage[margin=2.2cm]{geometry}
\usepackage{tikz,graphicx} % Required for inserting images
\usepackage{amsmath,amsthm,bbm,mathrsfs}
\usepackage{amssymb}
\usepackage{xcolor,todonotes,soul}
\usepackage[pdfpagelabels]{hyperref}
\hypersetup{
	breaklinks=true,   % splits links across lines
	colorlinks=true,   % displays links as colored text instead of blocks
	pdfusetitle=true,  % \title and \author values into pdf metadata
}
 
\usepackage{comment}

\newcommand{\even}{{\textnormal{even}}}
\newcommand{\reg}{{\textnormal{reg}}}
\DeclareMathOperator{\Tr}{Tr}

\newcommand{\FS}{\textnormal{FS}}
\newcommand{\FW}{\textnormal{FW}}
\newtheorem{theorem}{Theorem}[section]
\newtheorem{definition}[theorem]{Definition}
\newtheorem{proposition}[theorem]{Proposition}
\newtheorem{lemma}[theorem]{Lemma}
\numberwithin{equation}{section}

\usepackage{scalerel}
\usetikzlibrary{svg.path}

\definecolor{orcidlogocol}{HTML}{A6CE39}
\tikzset{
	orcidlogo/.pic={
		\fill[orcidlogocol] svg{M256,128c0,70.7-57.3,128-128,128C57.3,256,0,198.7,0,128C0,57.3,57.3,0,128,0C198.7,0,256,57.3,256,128z};
		\fill[white] svg{M86.3,186.2H70.9V79.1h15.4v48.4V186.2z}
		svg{M108.9,79.1h41.6c39.6,0,57,28.3,57,53.6c0,27.5-21.5,53.6-56.8,53.6h-41.8V79.1z M124.3,172.4h24.5c34.9,0,42.9-26.5,42.9-39.7c0-21.5-13.7-39.7-43.7-39.7h-23.7V172.4z}
		svg{M88.7,56.8c0,5.5-4.5,10.1-10.1,10.1c-5.6,0-10.1-4.6-10.1-10.1c0-5.6,4.5-10.1,10.1-10.1C84.2,46.7,88.7,51.3,88.7,56.8z};
	}
}

\newcommand\orcidicon[1]{\href{https://orcid.org/#1}{\mbox{\scalerel*{
				\begin{tikzpicture}[yscale=-1,transform shape]
					\pic{orcidlogo};
				\end{tikzpicture}
			}{|}}}}

\newcommand{\xb}{\boldsymbol{x}}
\newcommand{\dd}{\mathrm{d}}
\newcommand{\ee}{\mathrm{e}}
\newcommand{\ii}{\mathrm{i}}
\newcommand{\RR}{\mathbb{R}}
\newcommand{\NN}{\mathbb{N}}
\newcommand{\CC}{\mathbb{C}}

\DeclareMathOperator{\diag}{diag}

\newcommand{\DD}{\mathscr{D}}

\newcommand{\Gc}{\mathcal{G}}
\newcommand{\Hc}{\mathcal{H}}
\newcommand{\Jc}{\mathcal{J}}
\newcommand{\Kc}{\mathcal{K}}
\newcommand{\Lc}{\mathcal{L}}
\newcommand{\Tc}{\mathcal{T}}
\newcommand{\Wc}{\mathcal{W}}
\newcommand{\Lor}{\mathfrak{L}}

\DeclareMathOperator{\SO}{SO}

\DeclareMathOperator{\Hess}{Hess}
\DeclareMathOperator{\expc}{expc}
\DeclareMathOperator{\sinc}{sinc}
\DeclareMathOperator{\sinhc}{sinhc}

\title{Quantum Energy Inequalities with a twist:\\ rotating frames along stationary worldlines}
\author{Christopher J. Fewster\,\orcidicon{0000-0001-8915-5321}\thanks{\tt chris.fewster@york.ac.uk}
~and Daniel Lomax Ford\,\orcidicon{0009-0002-8210-5633}\thanks{\tt dblf500@york.ac.uk}\\ 
\emph{\small Department of Mathematics, Ian Wand Building, Deramore Lane,}\\
\emph{\small University of York, York YO10 5GH, UK}
 }
\date{\today}

\begin{document}

\maketitle

\begin{abstract}
The energy densities of the quantised scalar and electromagnetic fields need not be positive, but
their averages along timelike worldlines are constrained by Quantum Energy Inequalities
(QEIs) that hold in flat and curved spacetimes for arbitrary Hadamard quantum field states. However the general QEI bound is not easy to
compute in closed form. Here, we compute closed form expressions for the massless scalar and electromagnetic fields along stationary worldlines in Minkowski spacetime. These worldlines are solutions to the four-dimensional Frenet--Serret equations with constant curvature invariants and include inertial, uniformly linearly accelerated, and uniformly rotating worldlines as special cases. 
The QEI bound depends on a choice of tetrad along the worldline, and we compute it for any tetrad with constant angular
velocity relative to the Frenet--Serret tetrad. Optimising over the angular velocity, we show that the tightest electromagnetic QEI bound occurs when the tetrad is invariant under Fermi--Walker tetrads. Surprisingly, the Fermi--Walker tetrad does not produce the tightest QEI bound for massless scalar fields, and we investigate the optimal bound both analytically and numerically. 
Our results and methodology significantly improve on those in [CJ Fewster \& J Thompson, Class. Quantum Grav. {\bf 40} (2023) 175008],
in which only Frenet--Serret tetrads were considered for massless scalar fields. 
\end{abstract}

\section{Introduction}
 
Typical classical matter models obey the Weak Energy Condition (WEC): all observers register an everywhere nonnegative energy density. In particular, the WEC holds for  the electromagnetic and minimally coupled scalar fields. Similar energy conditions are often assumed on unknown classical matter models as, for example, in the singularity theorems of general relativity~\cite{Penrose:1965,Hawking:1966}. However, no quantum field theory (QFT) can satisfy any local energy condition~\cite{EpsteinGlaserJaffe:1965}, and under mild assumptions on the theory it has been shown that quantum states can be tuned to give an arbitrarily expectation value of energy density at a given point \cite{Fews05}. These violations of classical energy conditions are of particular interest as they could allow for a range of phenomena from violating the second law of thermodynamics \cite{Ford1978QuantumCE} to the ability to construct wormholes \cite{MorrisThorne:1988} and warp drives \cite{Alcubierre_1994}. Such possibilities motivate the investigation of what restrictions might replace the classical energy conditions in QFT.  These restrictions are known as quantum energy inequalities (QEIs), and place lower bounds on the expectation value of the average energy density of a quantum field. The simplest QEIs are independent of quantum state, and constrain the magnitude and duration of negative energy densities, or negative energy fluxes. They limit attempts to use quantum fields to find candidates for exotic matter that are required to construct exotic spacetimes \cite{FewsterRoman:2005,FordRoman:1996worm,PfenningFord:1997}, and have been used to give versions of the singularity theorems in semiclassical gravity~\cite{FewsterKontou2022}. See~\cite{Fewster2017} and~\cite{KontouSanders:2020} for recent reviews of classical energy conditions and QEIs.

Specific QEIs have been derived for a variety of QFTs, starting from works of Ford and Roman~\cite{Ford1978QuantumCE,Ford:1991,FordRoman:1995}. For example, in $4$-dimensional Minkowski spacetime, the averaged energy density of the massless scalar field and electromagnetic field along an inertial curve $\gamma(s)$
obey inequalities
\begin{equation}\label{eq:sample_QEIs}
 \int_{-\infty}^{\infty}\dd s\rvert g(s) \rvert ^{2} \langle :T_{\mu\nu}\dot{\gamma}^{\mu}\dot{\gamma}^{\nu}: \rangle _{\omega} (\gamma(s))    \ge - \frac{N}{16\pi^2}\int_{-\infty}^{\infty} \dd s\, |g''(s)|^2, 
\end{equation}
for any smooth function $g$ of sufficiently rapid decay,
where $N=1$ for the scalar field~\cite{BoundNegEnDenFewsterEveson1998} and $N=2$ for the electromagnetic field~\cite{Pfenning:2002}, and assuming the inertial curve is parametrised by proper time. With $N=1$, the QEI~\eqref{eq:sample_QEIs} also applies to massive scalar fields, although tighter mass-dependent bounds are available~\cite{BoundNegEnDenFewsterEveson1998,FewsterEveson:2007}. The overall factor of $2$ for electromagnetism is explained by its two independent polarisations.

The left-hand side of~\eqref{eq:sample_QEIs} is independent of the state $\omega$ and its simple form makes it straightforward to interpret. For instance, replacing $g$ by a scaled version $g_\lambda(s)=\lambda^{-1/2}g(s/\lambda)$ gives
\begin{equation}
 \int_{-\infty}^{\infty}\dd s\rvert g_\lambda(s) \rvert ^{2} \langle :T_{\mu\nu}\dot{\gamma}^{\mu}\dot{\gamma}^{\nu}: \rangle _{\omega} (\gamma(s))    \ge - \frac{N\|g''\|^2}{16\pi^2\lambda^4}, 
\end{equation}
where $\|\cdot\|$ denotes the $L^2$-norm, $\|g\|^2=\int_{-\infty}^\infty \dd s\,|g(s)|^2$, which shows how the QEI constrains the magnitude and (fourth power of the) duration of negative energy density in a manner reminiscent of the uncertainty principle. As $\lambda\to 0$, the lower bound tends to $-\infty$, as it must because pointwise energy densities are unbounded below; as $\lambda\to\infty$, on the other hand, the bound tends to zero, showing that a static negative energy density cannot be maintained along an inertial worldline by any state on Minkowski spacetime.

Generalisations of the QEIs to arbitrary timelike trajectories in general curved spacetimes have been proved, starting with~\cite{GenQEI2000CFewster} which established 
a mathematically rigorous QEI for the minimally coupled free scalar field of mass $m\ge 0$ for averages along any smooth timelike curve $\gamma$ in any globally hyperbolic spacetime (of arbitrary dimension).  This bound takes the form
\begin{equation}\label{eq:QEI_point-split}
    \int_{-\infty}^{\infty}\dd s\,\rvert g(s) \rvert ^{2} \langle :T_{\mu\nu}\dot{\gamma}^{\mu}\dot{\gamma}^{\nu}: \rangle _{\omega} (\gamma(s)) \geq -\int_{0}^{\infty}\frac{\dd\alpha}{\pi}[(g\otimes g)T]^\wedge(-\alpha,\alpha) > -\infty,
\end{equation}
where $\gamma(s)$ is parametrised by proper time $s$, $g$ is any real-valued smooth compactly supported function and the hat indicates Fourier transform (conventions are given later). The colons on the left-hand side denote normal ordering relative to a Hadamard reference state which is also used to construct the point-split energy density $T$ as a bi-distribution on the curve, as is explained in Section~\ref{sec:QEI_background}. The inequality then holds for any Hadamard state $\omega$ of the field. A similar bound was established for the electromagnetic field (and also the massive spin-$1$ Proca field) in~\cite{FewsterPfenning:2003}.  

It is clear that the bound in~\eqref{eq:QEI_point-split} is much harder to compute than~\eqref{eq:sample_QEIs}: one needs to be able to find a reference state for which the point-split energy density and its Fourier transform may be evaluated, before taking the final integral. The point-split energy density $T$ also depends on a choice of tetrad in a neighbourhood of $\gamma$, and the impact of this choice on the QEI bound has not been investigated in detail.  Accordingly~\eqref{eq:QEI_point-split} is harder to interpret than~\eqref{eq:sample_QEIs},
to which it reduces for inertial worldlines in Minkowski spacetime, using the Minkowski vacuum as the reference state~\cite{GenQEI2000CFewster}. Similar reductions to other known results (e.g.~\cite{FewsterTeo:1999}) were obtained in stationary spacetimes, using the stationary ground state as the reference, because the point-split energy density is translationally invariant, facilitating the computation of the Fourier transform.  

One way to extract further insight from~\eqref{eq:QEI_point-split} is to consider particular noninertial worldlines in Minkowski spacetime, using the vacuum state as the reference. In particular, we will consider \emph{stationary worldlines}~\cite{Kottler1912,Synge1967,Letaw:1981},
namely, timelike curves that solve the generalised Frenet--Serret equations with constant curvature invariants. In four-dimensional Minkowski spacetime there are three invariants: the curvature $\kappa$, measuring proper acceleration, and the torsion $\tau$ and hypertorsion~$\upsilon$ measuring proper angular velocity. Their values are constrained so that $\kappa$ and $\tau$ are nonnegative, and that $\upsilon$ vanishes if $\tau=0$, and $\tau$ and $\upsilon$ vanish if $\kappa=0$. These trajectories can be described as timelike Killing orbits or as curves whose velocity vectors evolve under a $1$-parameter subgroup of the Lorentz group. They fall into six types, including inertial motion, uniform linear acceleration, and uniform rotation, together with three less familiar types called the semicubical parabola, catenary and loxodromic worldlines (other names exist in the literature). Conveniently, the Frenet--Serret equations provide a tetrad along $\gamma$, which can be used to construct the point-split energy density. 

Stationary worldlines are of particular interest in QFT, partly because they are sufficiently simple to allow closed form expressions for various quantities, but also for their physical significance. For example,
the Unruh effect describes the thermal excitation of small probe that travels along the uniformly accelerating trajectory \cite{DeBievreMerkli:2006,Unruh:1976} while interacting with a QFT in its Minkowski vacuum state. One way to understand this effect is that the Minkowski vacuum restricts to the Rindler wedge determined by the trajectory as a thermal state at the Unruh temperature. A related fact is that there is a Fulling ground state on the wedge in which the probe would not register excitations.

The Unruh effect has inspired research into the thermal excitation for observers following other stationary worldlines \cite{GoodJuarezAubryetal:2020,Letaw:1981}, and whether coordinates adapted to these worldlines are also associated with ground states following from restricting the Minkowski vacuum to thermal states, similar to the Rindler state \cite{LetawPfautsch:1981}. In particular, the uniformly rotating trajectory plays an important role in proposals to observe the Unruh effect, as circular motion is confined to a finite volume \cite{BellLeinaas:1983, Gooding_etal_PRL:2020}.

In~\cite{FewsterThompson2023} the QEI bound~\eqref{eq:QEI_point-split} along stationary worldlines was evaluated in closed form for the massless scalar field, using the Frenet--Serret tetrad. The result was given in terms of the curvature invariants as 
\begin{equation}\label{eq:scalar_QEI}
    \int_{-\infty}^{\infty}\dd s\rvert g(s) \rvert ^{2} \langle :T_{\mu\nu}\dot{\gamma}^{\mu}\dot{\gamma}^{\nu}: \rangle _{\omega} (\gamma(s)) \geq -\frac{1}{16\pi^{2}}\int_{-\infty}^{\infty}\dd s\,(|g''(s)|^{2}+2A|g'(s)|^{2}+B|g(s)|^{2}),
\end{equation}
where $A$ and $B$ are 
\begin{equation}\label{scalar_curvature_formula}
    A=\kappa^{2}+\tau^{2}+\upsilon^{2} ,\quad B=\frac{1}{90}(3\kappa^{4}+62\kappa^{2}\tau^{2}+30(\kappa^{2}+\tau^{2}+\upsilon^{2})^{2}).
\end{equation}
Note that $A$ and $B$ depend on all three curvatures, and $B$ grows quartically in all of them. If $g$ is replaced by the scaled version $g_\lambda$ then we find 
\begin{equation}\label{eq:scaled_QEI}
    \int_{-\infty}^{\infty}\dd s\rvert g_\lambda(s) \rvert ^{2} \langle :T_{\mu\nu}\dot{\gamma}^{\mu}\dot{\gamma}^{\nu}: \rangle _{\omega} (\gamma(s)) \geq -\frac{1}{16\pi^{2}}\left(
    \frac{\|g''\|^2}{\lambda^4} + 2A\frac{\|g'\|^2}{\lambda^2} + B\|g\|^2\right),
\end{equation}
with the same asymptotic behaviour as in the inertial case as $\lambda\to 0+$, a very natural result bearing in mind the equivalence principle. At intermediate scales this QEI permits departures from the inertial result, determined by the sums of the squares of the curvatures, while for $\lambda\to+\infty$ the bound tends to $-B\|g\|^2/(16\pi^2)$, which only vanishes in the inertial case $\kappa=\tau=\upsilon=0$. Thus the QEI permits, but does not guarantee, the existence of a constant negative energy density $-B/(16\pi^2)$ along the stationary worldlines. In the case of uniform linear acceleration, this value is achieved by the Fulling--Rindler vacuum state on the wedge generated by the trajectory, while the vanishing value for inertial trajectories is of course attained by the Minkowski vacuum. 
For other stationary curves, the QEI would allow the energy density could grow in magnitude with the fourth powers of $\tau$ and $\upsilon$. The question of whether there are states saturating the bounds for other stationary worldlines was left open in~\cite{FewsterThompson2023}.

A more detailed consequence of~\eqref{eq:scalar_QEI} is that it provides a lower bound on the maximum value of the energy density along any interval of bounded proper duration
\begin{equation}\label{eq:eigenproblem_estimate}
    \max_{s\in [s_1,s_2]} \langle :T_{\mu\nu}\dot{\gamma}^{\mu}\dot{\gamma}^{\nu}: \rangle _{\omega} (\gamma(s)) \ge -\frac{B}{16\pi^2}-\frac{\mu(AT^2)}{16\pi^2 T^4},
\end{equation}
where $T=s_2-s_1$. That is, the energy density cannot remain below the value on the right-hand side for more than a time $T$. Here, $\mu(a)$ is the minimum eigenvalue of a particular differential operator depending on $a$. Details of this argument, which is closely based on those in~\cite{FewsterTeo:2000,QEI2006FewsterPfenning}, can be found in Appendix~\ref{appx:eigenvalues}, where it is shown that $\mu(a)\to \mu(0)=500.56$ as $a\to 0$, that $\mu(a)$ increases monotonically with $a$ and that
$\mu(a)\sim 2\pi^2 a$ for $a\gg \pi^2$. Again, $B$ determines the long time $T\to\infty$ limit of~\eqref{eq:eigenproblem_estimate}, while $A^{-1/2}$ is a timescale distinguishing the short-term regime, in which
\begin{equation}
    \max_{s\in [s_1,s_2]} \langle :T_{\mu\nu}\dot{\gamma}^{\mu}\dot{\gamma}^{\nu}: \rangle _{\omega} (\gamma(s)) \gtrsim -\frac{B}{16\pi^2}-\frac{\mu(0)}{16\pi^2 T^4}, 
    \qquad (T\ll A^{-1/2})
\end{equation}
from the long-term regime, in which one finds a modified power-law in $T$,
\begin{equation}
    \max_{s\in [s_1,s_2]} \langle :T_{\mu\nu}\dot{\gamma}^{\mu}\dot{\gamma}^{\nu}: \rangle _{\omega} (\gamma(s)) \gtrsim -\frac{B}{16\pi^2}-\frac{A}{8T^2}. 
    \qquad (T\gg A^{-1/2})
\end{equation} 
Thus the constants in the QEI bound~\eqref{eq:scalar_QEI} have a clear physical meaning and their value is important in determining the possible distribution of negative energy density along stationary worldlines. According to~\eqref{scalar_curvature_formula}, the timescale $A^{-1/2}$ depends on the sum of the squares of the curvatures, and the $T\to\infty$ limit grows quartically with $\tau$ and $\upsilon$ as well as $\kappa$. Increasing one or both of $\tau$, $\upsilon$ with fixed $\kappa$ and $T$ leads into the short-term regime and makes both terms in~\eqref{eq:eigenproblem_estimate} increasingly negative.
One of the purposes of this paper is to show that this conclusion is too pessimistic and that the bound~\eqref{eq:scalar_QEI} and~\eqref{scalar_curvature_formula} can be improved significantly (though they remain true).

More widely, we will develop the study of QEIs along stationary worldlines in three ways. First, we compute the QEI for the electromagnetic field, in order to understand how the spin of the field interacts with the twisting motion of stationary worldlines with nonzero torsion and hypertorsion (we thank Aron Wall for raising this interesting question). Second, we have been able to simplify the methodology of~\cite{FewsterThompson2023} by expressing many relevant quantities as functions of a matrix formed from the curvature invariants. In turn, this has permitted us to compute the QEI for both electromagnetism and the scalar field using any tetrad that is uniformly rotating with 
constant angular velocity $\boldsymbol{\varpi}$ relative to the Frenet--Serret tetrad,
in the general form
\begin{equation}\label{eq:overall_QEI}
    \int_{-\infty}^{\infty}\dd s\rvert g(s) \rvert ^{2} \langle :T_{\mu\nu}\dot{\gamma}^{\mu}\dot{\gamma}^{\nu}: \rangle _{\omega} (\gamma(s)) \geq -\frac{N}{16\pi^{2}}\int_{-\infty}^{\infty}\dd s\,(|g''(s)|^{2}+2A|g'(s)|^{2}+B|g(s)|^{2}),
\end{equation}
with $N=1$ for scalars and $N=2$ for electromagnetism as before, but with parameters $A$ and $B$ that now depend both on the type of field and on $\boldsymbol{\varpi}$. 
The expressions for $A$ and $B$ can be found in~\eqref{eq:A_scalar_Omega} and~\eqref{eq:B_scalar_Omega} for massless scalar fields and in~\eqref{eq:A_em_Omega} and~\eqref{eq:B_em_Omega} for electromagnetic fields.
By optimising over the angular velocity, we then determine the tightest bound in this class of QEIs. 

The optimisation is simplest for electromagnetism: we show that the global optimum is obtained from a tetrad that is invariant under Fermi--Walker transport along $\gamma$ (see, e.g., Section 6.5 of~\cite{misner2017gravitation}) and therefore represents the rods and clocks of a nonrotating observer travelling along the stationary worldline. This is a very natural tetrad, and indeed its use was advocated in~\cite{QEI2006FewsterPfenning} as a way to obtain a locally covariant QEI. The bound obtained for the Fermi--Walker tetrad has
\begin{equation}\label{eq:ABFW}
    A_{\textnormal{e.m.}}^\FW = \kappa^2, \qquad B_{\textnormal{e.m.}}^\FW= \frac{1}{90}\left( 33\kappa^4+32\kappa^2\tau^2\right).
\end{equation}
In particular, $A_{\textnormal{e.m.}}^\FW$ is independent of both torsion and hypertorsion, 
while $B_{\textnormal{e.m.}}^\FW$ is independent of hypertorsion and grows only quadratically with the torsion. Accordingly,  
the leading order departure from the inertial QEI at small scales is determined by the curvature alone, while \emph{constant} energy densities on stationary worldlines are bounded below by
\begin{equation}\label{eq:constantbound_em}
    \langle \rho\rangle \ge -\frac{B^\FW_{\textnormal{e.m.}}}{8\pi^2}=-\frac{33\kappa^4+32\kappa^2\tau^2}{720\pi^2}.
\end{equation}

This is a much stronger conclusion than would be reached if only the Frenet--Serret tetrad was considered, for which the QEI takes the form~\eqref{eq:scalar_QEI} with parameters
\begin{equation}
   A_{\textnormal{e.m.}}^\FS =A_{\textnormal{e.m.}}^\FW + 2(\tau^2+\upsilon^2), \qquad B_{\textnormal{e.m.}}^\FS=B_{\textnormal{e.m.}}^\FW + 
   \frac{2}{3}(2\kappa^2+\tau^2+\upsilon^2)(\tau^2+\upsilon^2), 
\end{equation} 
which (like the scalar case) depend on all curvatures with $B_{\textnormal{e.m.}}^\FS$
growing quartically in $\tau$ and $\upsilon$, and admits the possibility that there could be  states with constant energy density growing unboundedly in $\upsilon$; this possibility is ruled out by~\eqref{eq:ABFW}. 
 
For the massless scalar field, the QEI obtained using the Fermi--Walker tetrad yields the same parameters $A^\FW=A^\FW_\textnormal{e.m.}$ and $B^\FW=B^\FW_\textnormal{e.m.}$, which implies that the electromagnetic QEI is exactly twice the scalar one, if the Fermi--Walker tetrad is used in both cases. 
The values $A$ and $B$ from~\eqref{eq:ABFW} are both lower than the corresponding parameters~\eqref{scalar_curvature_formula} obtained in~\cite{FewsterThompson2023} using Frenet--Serret tetrads. Indeed, one has
\begin{equation}\label{eq:AB_FS}
    A^\FS = A^\FW + \tau^2 + \upsilon^2
    ,\qquad 
    B^\FS = B^\FW + \frac{1}{3}(3(\kappa\tau)^2+2(\kappa\upsilon)^2 +(\tau^2+\upsilon^2)^2) ,   
\end{equation}
so the QEIs obtained here are more stringent than those of~\cite{FewsterThompson2023}, 
replacing the quartic dependence of $B$ on $\tau$ and $\upsilon$ by quadratic dependence on $\tau$. Surprisingly, however, and in contrast with electromagnetism, it turns out that the Fermi--Walker tetrad does not minimise $B$ (even locally).  We will show that the global minimum can be written
\begin{equation}
    B_\textnormal{min} = \frac{\kappa^2}{90}(33\kappa^2+\beta(\tau/\kappa,\upsilon/\kappa)\tau^2),
\end{equation}
where $26<\beta(t,u)<32$ for all $(t,u)\in(0,\infty)\times\RR$. The lower bound is approached for $(t,u)\to 0$ in which regime one has
\begin{equation}
    \beta(t,u) = 26+\frac{6}{125}(t^2+25u^2) +  \mathcal{O}((t^2+u^2)^2),
\end{equation} 
and the minimum is achieved for a tetrad with angular velocity approximately $(\upsilon,0,4\tau/5)$ relative to the Frenet--Serret frame (while the Fermi--Walker frame has relative angular velocity $\boldsymbol{\varpi}^\FW=(\upsilon,0,\tau)$). On the other hand,
the upper bound is approached in various limits as $t^2+u^2\to\infty$, with the optimal angular velocity asymptotically equal to the Fermi--Walker $\boldsymbol{\varpi}^\FW$.

As a result, a constant energy density of the massless scalar field along a stationary worldline obeys
\begin{equation}\label{eq:constantbound}
    \langle \rho\rangle \ge -\frac{33\kappa^4+\beta(\tau/\kappa,\upsilon/\kappa)\kappa^2\tau^2}{1440\pi^2}.
\end{equation}
We caution that even once the QEI~\eqref{eq:QEI_point-split} is optimised over uniformly rotating tetrads, there is no guarantee that it is sharp. Thus it is unclear whether there are states with constant energy density approaching $-B^\FW_{\textnormal{e.m.}}/(8\pi^2)$ except in the inertial and uniform linear acceleration, for which the bound is saturated by the Minkowski vacuum and the Fulling--Rindler vacuum respectively as mentioned above. As a consistency check, we will compute the energy density of the electromagnetic field along a loxodromic worldline (the most general of the standard types) in the Fulling--Rindler vacuum on the wedge generated as the worldline's causal hull. It turns out that the energy density is constant, and satisfies~\eqref{eq:constantbound_em}, saturating it in the limit $\tau\to 0$ as expected.

We briefly describe the structure of the paper. In Section~\ref{sec:QEI_background} we recall some background information on QEIs, describing in particular how the point-split energy density $T$ is computed in terms of a tetrad defined along the observer's trajectory. This leads to expressions for $T$ in terms of smooth matrix-valued functions $C(s,s')$ and $D(s,s')$ that can be constructed from the worldline and tetrad. We also describe various simplifications to the QEI bound in the translationally-invariant situation where $T(s,s')=\Tc(s-s')$ for some distribution $\Tc$. In particular, an observation made in~\cite{FewsterThompson2023} can be used to extract 
the QEI bound from the short-distance behaviour of $\Tc$ without having to compute and then integrate the Fourier transform $\widehat{\Tc}$. Then, in Section~\ref{sec:stationary} we define and give the main properties of the stationary worldlines, including the 
spectral properties and functional calculus of a matrix formed form the curvatures. These results are used in Section~\ref{sec:summary} to compute the matrix functions $C$ and $D$ for tetrads with uniform angular velocity relative to the Frenet--Serret tetrad, leaving most of the details to the appendices. Section~\ref{sec:QEIs} gives the computation of the QEIs along stationary worldlines for both the massless minimally coupled scalar field and the electromagnetic field, and optimises the resulting bounds with respect to the tetrad angular velocity.  Section~\ref{sec:loxodromic} provides a consistency check on our QEIs by considering the energy density along loxodromic worldlines in a Fulling--Rindler state for a corresponding wedge region. 
Finally, we summarise and give some outlook in Section~\ref{sec:conclusion}. Four appendices give technical details deferred from the main text.

\paragraph{Units and conventions}

Units are employed in which $\hbar$, $c$ and the vacuum permittivity $\varepsilon_0$ are all set to unity. The Minkowski metric is $\eta=\dd t^2-\dd x^2-\dd y^2-\dd z^2$ in standard inertial coordinates $(t,x,y,z)$. Greek and Latin alphabets are used for tensor and frame indices respectively.  The analytic functions $\sinc$, $\sinhc$ and $\expc$ are defined by 
\begin{equation}\label{eq:sin(h)c}
    \sinc z =\frac{\sin z}{z}, \qquad \sinhc z= \frac{\sinh z}{z},
    \qquad \expc(z) = \frac{\exp(z)-1}{z}
\end{equation}
for $z\neq 0$, and $\sinc(0)=\sinhc(0)=\expc(0)=1$. 
More generally, the \emph{cardinal function} of an entire analytic function $f$ is the entire analytic function $f_\textnormal{c}$ defined by 
\begin{equation} 
f_\textnormal{c}(z)=\begin{cases} (f(z)-f(0))/z & z\neq 0\\ f'(0) & z=0; \end{cases}
\end{equation}
we also use the same notation if $f\in C^\infty(\RR)$, obtaining $f_\textnormal{c}\in C^\infty(\RR)$.
The convention used for the Fourier transform is 
\begin{equation}
\widehat{u}(\alpha) =\int_{-\infty}^\infty \dd s\,  \ee^{\ii\alpha s}u(s).
\end{equation}

\section{QEIs for the scalar and electromagnetic fields}
\label{sec:QEI_background}
\subsection{Massless scalar field}

The massless scalar field, with Lagrangian $\mathcal{L}=\tfrac{1}{2}(\nabla_\mu\phi)\nabla^\mu\phi$, has classical stress-energy tensor
\begin{equation}
    T_{\mu\nu} = (\nabla_\mu\phi)\nabla_\nu\phi - \frac{1}{2}\eta_{\mu\nu} (\nabla^\lambda\phi)\nabla_\lambda\phi
\end{equation}  
in Minkowski space. It satisfies the weak energy condition: if $u^\mu$ is a timelike unit vector at some spacetime point $x$, let $e^\mu_a$ be any tetrad at $x$ such that $e_0^\mu=u^\mu$. Then the energy density seen by an observer at $x$ with $4$-velocity $u^\mu$ may be written
\begin{equation}
      T_{\mu\nu}u^\mu u^\nu= \frac{1}{2}\sum_{a=0}^3 (e_a^\mu \phi)^2,
\end{equation}
which is manifestly nonnegative. 

The quantized field has vacuum two-point function
\begin{equation}
    \langle \phi(x)\phi(x')\rangle = G_0(x,x'),
\end{equation}
where, for $x=(t,\xb)$, $x'=(t',\xb')$, 
\begin{equation}\label{scalar_2-point}
    G_0(x,x')=\lim_{\epsilon\rightarrow 0+}\frac{-1}{4\pi^2\sigma_\epsilon(x,x')}, \qquad  \sigma_\epsilon(x,x')=(t-t'-\ii\epsilon)^2-\|\xb-\xb'\|^2,
\end{equation}
and the limit is understood in a distributional sense. 
The quantised theory does not obey the weak energy condition. Instead, in~\cite{GenQEI2000CFewster}, it was shown that averages of the normal ordered energy density along
a smooth timelike curve $\gamma$ obey a QEI of the form
\begin{equation}\label{eq:QEI}
    \int_{-\infty}^{\infty}\dd s\,g(s)^{2} \langle :T_{\mu\nu}\dot{\gamma}^\mu\dot{\gamma}^\nu: \rangle _{\omega} (\gamma(s)) \geq -\int_{0}^{\infty}\frac{\dd\alpha}{\pi}[(g\otimes g)T]^{\wedge}(-\alpha,\alpha) > -\infty
\end{equation}
for all real-valued test functions $g\in C^{\infty}_{0}(\mathbb{R})$ and all Hadamard states $\omega$, where 
\begin{equation}\label{eq:pointsplit_T_scalar}
    T(s,s') = \frac{1}{2}\sum_{a=0}^{3}
    ((\nabla_a\otimes\nabla_a)G_0)(\gamma(s),\gamma(s'))  
\end{equation} 
is the vacuum expectation value of the point-split energy density for the massless scalar field (see~(3.11) in~\cite{FewsterThompson2023}). Here, it is assumed that $\gamma$ is parametrised by proper time, while $\nabla_a=e_a^\mu\nabla_\mu$, where $e_a$ is a tetrad defined along $\gamma$ so that $e_0 = \dot{\gamma}$, and ~\eqref{eq:pointsplit_T_scalar} is defined as a distributional pull-back. 

In fact the QEI established in~\cite{GenQEI2000CFewster} also applies in general globally hyperbolic spacetimes, with normal ordering relative to any reference Hadamard state, 
whose Wightman function then replaces $G_0$ in~\eqref{eq:pointsplit_T_scalar}. However, only the Minkowski spacetime version will concern us here.

\subsection{Electromagnetic field}

The electromagnetic field in $4$-dimensional Minkowski spacetime has Lagrangian $\mathcal{L}=-\frac{1}{4}F_{\mu\nu}F^{\mu\nu}$ and classical stress-energy tensor
\begin{equation}\label{Classical_stress-tensor}
    T_{\mu\nu}=F_{\mu\alpha}F^{\alpha}_{\phantom{\alpha} \nu}+\frac{1}{4}\eta_{\mu\nu}F_{\alpha\beta}F^{\alpha\beta},
\end{equation}
where, as usual, $F_{\mu\nu}=\nabla_\mu A_\nu-\nabla_\nu A_\mu$ expresses the electromagnetic field strength $F_{\mu\nu}$ in terms of the electromagnetic potential $A_\mu$.

The electromagnetic field also obeys the weak energy condition. 
Indeed, if $u^\mu$ is a timelike unit vector at some spacetime point $x$, we again choose a tetrad $e^\mu_a$ at $x$ with $e_0^\mu=u^\mu$. Then the energy density seen by an observer at $x$ with $4$-velocity $u^\mu$ may be written
\begin{equation}\label{classical_energy_tetrad}
    T_{\mu\nu}u^\mu u^\nu=\frac{1}{4}\sum_{a,b=0}^{3}
    (e_{a}^{\mu}e_{b}^{\nu}F_{\mu\nu})^2, 
\end{equation} 
which is evidently nonnegative. 

Passing to quantum field theory and adopting the Feynman gauge, the electromagnetic potential may be represented on an indefinite inner product space with Minkowski vacuum two-point function
\begin{equation}\label{eq:A_2-point}
\langle A_{\mu}(x) A_{\nu}(x')\rangle = -\eta_{\mu\nu} G_0(x,x'),
\end{equation}
where $G_0$ is as in~\eqref{scalar_2-point} and the minus sign in~\eqref{eq:A_2-point} is a consequence of our signature convention. 
The field-strength tensor is represented on the physical Hilbert space and has vacuum two-point function
\begin{align}\label{eq:F_2-point}
    \langle F_{\mu\nu}(x) F_{\rho\sigma}(x')\rangle &= 
    -\eta_{\nu\sigma}((\nabla_\mu\otimes\nabla_\rho) G_0)(x,x')
    + \eta_{\mu\sigma}((\nabla_\nu\otimes\nabla_\rho) G_0)(x,x')\nonumber\\ &\qquad
    + \eta_{\nu\rho}((\nabla_\mu\otimes\nabla_\sigma) G_0)(x,x')
    - \eta_{\mu\rho}((\nabla_\nu\otimes\nabla_\sigma) G_0)(x,x').
\end{align}

Whilst the classical energy density \eqref{Classical_stress-tensor} is nonnegative, the expectation value of the quantized energy density may take negative values. However, the averaged energy density along a timelike curve is again constrained by QEIs.
Consider an observer following a timelike curve $\gamma$, parameterized by proper time, and choose any smooth tetrad $e_a^\mu$ defined around $\gamma$ so that $e_0|_{\gamma(s)}=\dot{\gamma}(s)$. It was shown in~\cite{FewsterPfenning:2003} that the normal ordered energy density along $\gamma$ satisfies the QEI~\eqref{eq:QEI} but with $T$ replaced by 
\begin{equation}\label{eq:pointsplit_T_em}
    T_{\textnormal{em}}(s,s') = \frac{1}{4}\sum_{a,b=0}^{3}
     \langle 
    F_{ab}(\gamma(s))F_{ab}(\gamma(s')) \rangle,
\end{equation} 
which is the point-split electromagnetic energy density in the vacuum state. Here, $F_{ab}=e^{\mu}_a e^\nu_bF_{\mu\nu}$ are the tetrad components of the field strength. In fact~\cite{FewsterPfenning:2003} established the above QEI in arbitrary globally hyperbolic spacetimes -- see the analogous comments for the scalar field above.

\subsection{Computing the point-split energy densities}

The point-split energy densities~\eqref{eq:pointsplit_T_scalar} and~\eqref{eq:pointsplit_T_em} may be computed as follows. First, the distributional derivatives of $G_0$ may be taken under the $\epsilon\to 0+$ limit to give
\begin{equation}
    (\nabla_{\mu} \otimes\mathbbm{1})G_0(x,x')=\lim_{\epsilon\rightarrow 0+}\frac{(x-x'-\ii\epsilon)_\mu}{2\pi^2\sigma_\epsilon(x,x')^2},
\end{equation}
and then
\begin{equation}
    (\nabla_{\mu} \otimes\nabla_{\nu})G_0(x,x')=\lim_{\epsilon\rightarrow 0+}\frac{-\eta_{\mu\nu}}{2\pi^2\sigma_\epsilon(x,x')^2}-\frac{2(x-x'-\ii\epsilon)_\mu (x-x'-\ii\epsilon)_\nu}{\pi^2\sigma_\epsilon(x,x')^3}.
\end{equation}
We can now pull back to the worldline, $\gamma(s)$, to find
\begin{equation}\label{eq:nabla_nabla_G}
    (\nabla_{\mu} \otimes\nabla_{\nu})G_0(\gamma(s),\gamma(s'))=\lim_{\epsilon\rightarrow 0+}\left(\frac{-\eta_{\mu\nu}}{2\pi^2\sigma_\epsilon(\gamma(s),\gamma(s'))^2}-\frac{2(\gamma(s)-\gamma(s'))_\mu (\gamma(s)-\gamma(s'))_\nu}{\pi^2\sigma_\epsilon(\gamma(s),\gamma(s'))^3}\right).
\end{equation}
In order to represent the contractions of the differentiated two-point function, we introduce matrix-valued functions 
$C,D:\RR\times\RR\to M_4(\RR)$ by
\begin{equation}\label{eq:Cab_def}
    C_{ab}(s,s') = \eta_{\mu\nu}e_a^{\mu}(s) e_b^\nu(s'),
\end{equation} 
which reduces to $C_{ab}(s,s)= \eta_{ab}$ at the diagonal, and
\begin{align}\label{Dab_definition}
    D_{ab}(s,s')=(\gamma(s)-\gamma(s'))_{\mu}e^{\mu}_{a}(s)(\gamma(s')-\gamma(s))_{\nu}e^{\nu}_{b}(s').
\end{align}  

Given these definitions, the vacuum point-split energy density of the massless scalar field is 
\begin{equation}\label{eq:T_formula_scalar}
    T(s,s') =   \lim_{\epsilon\rightarrow 0+}\left(\frac{-\Tr(C(s,s'))}{4\pi^2\sigma_\epsilon(\gamma(s),\gamma(s'))^2}+\frac{\Tr(D(s,s'))}{\pi^2\sigma_\epsilon(\gamma(s),\gamma(s'))^3}\right)
\end{equation} 
from~\eqref{eq:pointsplit_T_scalar}, where $\Tr$ denotes the matrix trace
$\Tr C=\sum_{a=0}^3 C_{aa}$.
Meanwhile, using~\eqref{eq:F_2-point}, the vacuum point-split energy density for electromagnetism may be written 
\begin{align}\label{eq:T_em}
    T_{\textnormal{em}}(s,s') &= -\frac{1}{2}\sum_{a=0}^{3}\sum_{b=0}^{3}
    \left( C_{bb}(s,s') ((\nabla_a\otimes\nabla_a)G_0)(\gamma(s),\gamma(s')) \right. \nonumber\\
    &\qquad\qquad\qquad -\left. C_{ab}(s,s') ((\nabla_b\otimes\nabla_a)G_0)(\gamma(s),\gamma(s'))\right), 
\end{align}
where (as above) we write $\nabla_{a}=e^a_\mu\nabla_\mu$, and consequently
\begin{align} 
    T_{\textnormal{em}}(s,s')&=\lim_{\epsilon\rightarrow 0+}\frac{1}{2}\sum_{a=0}^{3}\sum_{b=0}^{3}\left(
    \frac{C_{bb}C_{aa}-C_{ab}C_{ba}}{2\pi^{2}\sigma_\epsilon(\gamma(s),\gamma(s'))^{2}}-\frac{2(C_{bb}D_{aa}-C_{ab}D_{ba})}{\pi^{2}\sigma_\epsilon(\gamma(s),\gamma(s'))^{3}}\right) \nonumber \\
    &= \lim_{\epsilon\rightarrow 0+}\left(
    \frac{(\Tr C )^2-\Tr(C^2)}{4\pi^{2}\sigma_\epsilon(\gamma(s),\gamma(s'))^{2}}-\frac{(\Tr(C)\Tr(D)-\Tr(CD))}{\pi^{2}\sigma_\epsilon(\gamma(s),\gamma(s'))^{3}}\right).
\end{align}
on inserting~\eqref{eq:nabla_nabla_G} and suppressing the dependence of $C$ and $D$ on $s,s'$. There is now a simplification using the following result, whose proof is given in Appendix~\ref{appx:trace_squared_C}.
\begin{lemma}\label{lem:trace_squared_C}
   The identity $\Tr C(s,s')^2 = (\Tr C(s,s'))^2$ holds for any $s,s'\in\RR$.
\end{lemma}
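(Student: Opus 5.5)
The plan is to reduce the statement to an algebraic identity for Lorentz matrices. At each $s$ the vectors $e_a(s)$ form an orthonormal basis, so there is a matrix $\Lambda=\Lambda(s,s')$ with $e_b(s')=\sum_c\Lambda^c{}_b\, e_c(s)$. Substituting into~\eqref{eq:Cab_def} gives $C_{ab}(s,s')=\sum_c\eta_{ac}\Lambda^c{}_b$, i.e.\ $C=\eta\Lambda$ as matrices, where $\eta=\diag(1,-1,-1,-1)$. Because both tetrads are orthonormal, $\Lambda^T\eta\Lambda=\eta$. Moreover $\det\Lambda(s,s')=\pm1$ depends continuously on $s'$ and equals $1$ at $s'=s$, so $\det\Lambda=1$ throughout. (Time-orientation is not needed; only the determinant matters.)

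Next I rewrite the claim using elementary symmetric functions. Since $(\Tr C)^2-\Tr(C^2)=2e_2(C)$, where $e_2(C)$ is the sum of the principal $2\times2$ minors of $C$, it suffices to show $e_2(\eta\Lambda)=0$. Write $M_{ab}(\Lambda)$ for the principal $2\times2$ minor of $\Lambda$ on rows and columns $\{a,b\}$. Because $\eta$ is diagonal, the corresponding minor of $\eta\Lambda$ is $\eta_{aa}\eta_{bb}M_{ab}(\Lambda)$, so
\begin{equation}
e_2(C)=\sum_{a<b}\eta_{aa}\eta_{bb}M_{ab}(\Lambda).
\end{equation}

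The key step uses Jacobi's complementary minor theorem. For an invertible $4\times4$ matrix it gives
\begin{equation}
M_{cd}(\Lambda^{-1})=M_{ab}(\Lambda)/\det\Lambda,
\end{equation}
where $\{c,d\}$ is the complement of $\{a,b\}$ in $\{0,1,2,3\}$. On the other hand $\Lambda^{-1}=\eta\Lambda^T\eta$, so $M_{cd}(\Lambda^{-1})=\eta_{cc}\eta_{dd}M_{cd}(\Lambda)$, using that transposition preserves principal minors. With $\det\Lambda=1$ this yields $M_{ab}(\Lambda)=\eta_{cc}\eta_{dd}M_{cd}(\Lambda)$.

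Now pair each term of $e_2(C)$ with the term for the complementary index pair. Their sum is
\begin{equation}
\eta_{aa}\eta_{bb}M_{ab}+\eta_{cc}\eta_{dd}M_{cd}=\eta_{aa}\eta_{bb}M_{ab}\bigl(1+\eta_{aa}\eta_{bb}\eta_{cc}\eta_{dd}\bigr)=\eta_{aa}\eta_{bb}M_{ab}(1+\det\eta)=0,
\end{equation}
since $\det\eta=-1$. The six pairs split into three complementary pairs, so $e_2(C)=0$, which is the lemma.

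As a sanity check, at $s=s'$ we have $C=\eta$, with $\Tr C=-2$ and $\Tr C^2=4$. A rotation by $\theta$ also works: $\Tr C=-2\cos\theta$ and $\Tr C^2=4\cos^2\theta$. The main obstacle is applying Jacobi's identity with the correct complementary indices and signs. In particular one must use $\det\Lambda=+1$: with $\det\Lambda=-1$ the bracket would become $1-\det\eta=2$ and the argument would fail. This is why the continuity argument for the determinant is needed.
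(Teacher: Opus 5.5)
Your overall strategy works, but the key step is wrong as written. From $\Lambda^{-1}=\eta\Lambda^T\eta$ you infer $M_{cd}(\Lambda^{-1})=\eta_{cc}\eta_{dd}M_{cd}(\Lambda)$. Conjugation by the diagonal matrix $\eta$ rescales both the rows and the columns of the $\{c,d\}$ block, so the minor picks up $(\eta_{cc}\eta_{dd})^2=1$. In fact $M_{cd}(\Lambda^{-1})=M_{cd}(\Lambda)$. For $C=\eta\Lambda$ only the rows are rescaled, which is why a single factor was correct there.

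Your resulting relation $M_{ab}(\Lambda)=\eta_{cc}\eta_{dd}M_{cd}(\Lambda)$ is false in general. For $\Lambda=I$ and $\{a,b\}=\{2,3\}$ it reads $1=-1$. Applying it to both members of a complementary pair would give $M_{ab}=\det\eta\,M_{ab}=-M_{ab}$. Your pairing display does not follow from it either: substituting $\eta_{cc}\eta_{dd}M_{cd}=M_{ab}$ gives $(1+\eta_{aa}\eta_{bb})M_{ab}$, which equals $2M_{12}$ when $\{a,b\}=\{1,2\}$. The display holds only under the corrected relation. The two relations coincide when $0\in\{a,b\}$, since then $\eta_{cc}\eta_{dd}=1$.

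The repair is immediate. Jacobi's identity with $\det\Lambda=1$ gives $M_{ab}(\Lambda)=M_{cd}(\Lambda)$ for every complementary pair, whence
\[
\eta_{aa}\eta_{bb}M_{ab}+\eta_{cc}\eta_{dd}M_{cd}=\bigl(\eta_{aa}\eta_{bb}+\eta_{cc}\eta_{dd}\bigr)M_{ab}=0,
\]
since the two sign factors multiply to $\det\eta=-1$. Your remark about failure when $\det\Lambda=-1$ survives the correction.

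Once repaired, your proof is valid and takes a different route from the paper. The paper writes $C=L\eta$ with $L$ proper, so $C$ is an improper Lorentz matrix, and shows $\det(C-\mu I)=-\mu^4\det(C-\mu^{-1}I)$. From this it deduces eigenvalues $1,-1,\lambda,\lambda^{-1}$, treating separately the case where all are $\pm1$, and then sums the eigenvalues and their squares.

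Your statement $e_2(C)=0$ is the coefficient-level content of that anti-palindromic relation, which forces the middle coefficient of the characteristic polynomial to vanish. Jacobi's identity delivers it without any discussion of eigenvalues or multiplicities. The paper's argument yields more, namely the spectrum and $\Tr C=\lambda+\lambda^{-1}$. Yours is a case-free algebraic computation. Both hinge on the relating transformation being proper.
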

It follows immediately that the electromagnetic point-split energy density is
\begin{align}\label{T_formula}
    T_{\textnormal{em}}(s,s') 
    &= -\frac{1}{\pi^2}\lim_{\epsilon\rightarrow 0+} 
    \frac{\Tr C(s,s')\Tr D(s,s')-\Tr(C(s,s')D(s,s'))}{\sigma_\epsilon(\gamma(s),\gamma(s'))^{3}} .
\end{align}
 
In principle,~\eqref{eq:T_formula_scalar} and~\eqref{T_formula} and can be used to give the QEIs for massless scalars or electromagnetism along any timelike trajectory in Minkowski spacetime. However, the calculation is generally only tractable for stationary worldlines, to which we now turn.

\subsection{Simplifications for stationary worldlines}

If $T(s,s')$ depends only on $s-s'$, i.e., $T(s+s'',s'+s'')=T(s,s')$ for all $s''$, we say that $T$ is translationally invariant. In this case, $T(s,s')=\Tc(s-s')$ for some distribution $\Tc\in\DD'(\RR)$ and the QEI may be simplified and slightly generalised (see~\cite{GenQEI2000CFewster,FewsterVerch:2003}) to 
\begin{equation}\label{eq:QEI_stationary}
    \int_{-\infty}^{\infty}\dd s\,|g(s)|^{2}\langle:T_{\mu\nu}\dot{\gamma}^{\mu}\dot{\gamma}^{\nu}:\rangle_{\omega}(\gamma(s))\geq-\int_{-\infty}^{\infty}\dd\alpha\,|\widehat{g}(\alpha)|^{2}Q(\alpha),
\end{equation}
which holds for all (not necessarily real-valued) $g\in C_0^\infty(\RR)$, where 
\begin{equation}
Q(\alpha)=\frac{1}{2\pi^{2}}\int_{-\infty}^{\alpha}\dd u\,\widehat{\mathcal T}(u).
\end{equation}
One may replace $Q$ by its even part $Q_\even$ in~\eqref{eq:QEI_stationary}, because $|\widehat{g}(\alpha)|^2$ is even for all $g\in C_0^\infty(\RR)$.

Even when the trajectory is stationary, the direct computation of $\widehat{\Tc}$ and $Q$ can be nontrivial. 
However, a further simplification was discovered in~\cite{FewsterThompson2023} for the massless scalar field. Namely, if -- as occurs for stationary worldlines --
$\Tc$ takes the general form
\begin{equation}\label{eq:Tc_general_form_scalar}
    \Tc(s) = \frac{3}{2\pi^2(s-\ii 0^+)^4}-\frac{A}{4\pi^2(s-\ii 0^+)^2}  + \Tc_\reg(s), 
\end{equation}
where $A\in \RR$ and $\Tc_\reg$ is a smooth, real and even function obeying $\Tc_\reg(s)=\mathcal{O}(s^{-2})$ as $|s|\to\infty$, then the even part of $Q$ is given by
\begin{equation}
    Q_\even (\alpha) = \frac{1}{32\pi^3}(\alpha^4+ 2A\alpha^2+B),
\end{equation}
with $B=16\pi^2 \Tc_\reg(0)$. The massless scalar QEI is therefore given by
\begin{equation}\label{eq:stationary_scalar_QEI}
     \int_{-\infty}^{\infty}\dd s\,\rvert g(s) \rvert ^{2} \langle :T_{\mu\nu}\dot{\gamma}^{\mu}\dot{\gamma}^{\nu}: \rangle _{\omega} (\gamma(s)) \geq -\frac{1}{16\pi^{2}}\int_{-\infty}^{\infty}\dd s\,(|g''(s)|^{2}+2A|g'(s)|^{2}+B|g(s)|^{2}).
\end{equation} 
This strategy was implemented 
for the Frenet--Serret tetrad along stationary worldlines in Section~3 of~\cite{FewsterThompson2023}. One of our main results is that one can improve on the QEI of~\cite{FewsterThompson2023} by using a tetrad that rotates at constant angular velocity relative to the Frenet--Serret tetrad.
 
The leading singularity for the point-split energy density of the electromagnetic field is twice that of the scalar field, owing to the existence of two independent polarisation states. By comparison with the scalar case, one readily sees that if
\begin{equation}\label{eq:Tc_general_form_em}
    \Tc_{\textnormal{e.m.}}(s)=
    2\left(  
    \frac{3}{2\pi^2(s-\ii 0^+)^4} 
    -  \frac{A_{\textnormal{e.m.}}}{4\pi^2(s-\ii 0^+)^2}\right)
    +\Tc_{\textnormal{e.m.},\reg}(s),
\end{equation}  
for some smooth, real and even function $\Tc_{\textnormal{e.m.},\reg}$ obeying $\Tc_\reg(s)=\mathcal{O}(s^{-2})$ as $|s|\to\infty$, then
\begin{equation}
    Q_{\textnormal{e.m.},\even} (\alpha) = \frac{1}{16\pi^3}(\alpha^4+ 2A_{\textnormal{e.m.}}\alpha^2+B_{\textnormal{e.m.}}),
\end{equation}
where 
\begin{equation}
    B_{\textnormal{e.m.}} = 8\pi^2 \Tc_{\textnormal{e.m.},\reg}(0).
\end{equation}
Comparing with~\eqref{eq:stationary_scalar_QEI}, the QEI for electromagnetism along stationary worldlines is 
\begin{equation}\label{eq:stationary_em_QEI}
    \int_{-\infty}^{\infty}\dd s\,\rvert g(s) \rvert ^{2} \langle :T_{\mu\nu}\dot{\gamma}^{\mu}\dot{\gamma}^{\nu}: \rangle _{\omega} (\gamma(s)) \geq -\frac{1}{8\pi^{2}}\int_{-\infty}^{\infty}\dd s\,(|g''(s)|^{2}+2A_{\textnormal{e.m.}}|g'(s)|^{2}+B_{\textnormal{e.m.}}|g(s)|^{2}).
\end{equation}

In Section~\ref{sec:QEIs}, we will show that the point-split energy densities $T$ and $T_\textnormal{e.m.}$ along stationary worldlines are translationally invariant for any frame rotating at constant angular velocity relative to the Frenet--Serret and also establish the decompositions~\eqref{eq:Tc_general_form_scalar} and~\eqref{eq:Tc_general_form_em} with remainders $\Tc_\reg$ and $\Tc_{\textnormal{e.m.},\reg}$ that are smooth, even real-valued functions of order
$\mathcal{O}(s^{-2})$ as $|s|\to\infty$ on the real line. Once that is done, we will optimise over the angular velocity to find the best possible QEI in this class.  
These various tasks require a more detailed discussion of stationary worldlines, to which we now turn.

\section{Stationary worldlines}\label{sec:stationary}

\subsection{Definition and classification of stationary worldlines}

Stationary worldlines may be defined in various ways -- see~\cite{FewsterThompson2023} and its bibliography. We adopt the following approach.
\begin{definition}
    A stationary worldline is
a smooth future-directed timelike curve $\gamma:\RR\to M$, parametrised by proper time, along which there exists a right-handed tetrad $e_a(s)$ with $e_0(s)=\dot{\gamma}(s)$, solving the Frenet--Serret equations 
\begin{equation} \label{eq:FrenetSerret}
    \dot{e}_{a}=K_{a}^{\ b}e_{b},  
\end{equation} 
where $K_a^{\phantom{a}b}$ is a constant tridiagonal real matrix  
\begin{equation}\label{eq:K}
K_\bullet^{\phantom{\bullet}\bullet}=
\begin{pmatrix}
    0 & \kappa & 0 & 0 \\ \kappa & 0 & \tau & 0 \\ 0 & -\tau & 0 & \upsilon\\
    0 & 0 & -\upsilon & 0
\end{pmatrix}
\end{equation}
with parameters $\kappa,\tau\in\RR^+$, $\upsilon\in\RR$, subject to the conditions $\kappa=0\implies \tau=0\implies \upsilon=0$, i.e., 
\begin{equation}\label{eq:ktu}
(\kappa,\tau,\upsilon)\in \left([0,\infty)\times\{0\}\times\{0\}\right)\cup \left((0,\infty)\times(0,\infty)\times\RR\right).
\end{equation}
The parameters $\kappa$, $\tau$ and $\upsilon$ are called the curvature, torsion and hypertorsion respectively. 
\end{definition}
Curvature, torsion and hypertorsion have dimensions of inverse time in our units. The simplest examples of stationary worldlines are inertial worldlines, with $\kappa=\tau=\upsilon=0$ and constant tetrad, while uniformly linearly accelerated worldlines correspond to $\kappa>\tau=\upsilon=0$ with $e_2$ and $e_3$ constant. A further four classes of stationary worldline will be described below. The matrix $K_\bullet^{\phantom{\bullet}\bullet}$ will usually be written simply as $K$; it is neither symmetric nor antisymmetric, but the matrix $K_{ab}=K_a^{\phantom{a}{c}}\eta_{cb}$ is antisymmetric.

A particular consequence of~\eqref{eq:FrenetSerret} is that $\dot{e}_0=\kappa e_1$, so because $e_1$ is a unit spacelike vector we have $\kappa^2=-\eta(\dot{e}_0,\dot{e}_0)=-\eta(\ddot{\gamma},\ddot{\gamma})$, which shows that $\kappa$ is the proper acceleration of $\gamma$. In fact the other curvatures are also determined by $\gamma$ together with~\eqref{eq:FrenetSerret} and~\eqref{eq:K}.
\begin{proposition}\label{prop:FS}
    Let $\gamma$ be a stationary worldline. Then $\gamma$ determines its curvature matrix $K$ uniquely. If $\kappa,\tau>0$ then the tetrad $e_a$ is uniquely determined; if $\kappa>\tau=\upsilon=0$ then $e_a$ is unique up to an arbitrary rigid $\SO(2)$ rotation of $e_2$ and $e_3$; if $\kappa=\tau=\upsilon=0$ then $e_a$ is unique up to arbitrary rigid $\SO(3)$ rotations of $e_1$, $e_2$ and $e_3$.
\end{proposition}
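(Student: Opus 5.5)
We recover the curvatures and the tetrad successively from derivatives of $\gamma$, following the standard Gram--Schmidt construction of a Frenet--Serret frame. Throughout we use the Frenet--Serret equations~\eqref{eq:FrenetSerret} with $K$ as in~\eqref{eq:K}. In components they read
\begin{equation*}
\dot e_0=\kappa e_1,\qquad \dot e_1=\kappa e_0+\tau e_2,\qquad \dot e_2=-\tau e_1+\upsilon e_3,\qquad \dot e_3=-\upsilon e_2.
\end{equation*}
We also use the normalisation $\eta(e_a,e_b)=\eta_{ab}$ and the constraints~\eqref{eq:ktu}.

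\emph{Step 1: determining $\kappa$.} First, $e_0=\dot\gamma$ is fixed by $\gamma$. As noted before the proposition, $\kappa^2=-\eta(\ddot\gamma,\ddot\gamma)$. Since $\kappa\ge 0$, $\kappa$ is fixed by $\gamma$. If $\kappa=0$, then~\eqref{eq:ktu} forces $\tau=\upsilon=0$, so $K=0$ and all $e_a$ are constant. Any constant right-handed orthonormal triple $e_1,e_2,e_3$ orthogonal to $e_0$ then solves~\eqref{eq:FrenetSerret}. Two such triples differ by a constant element of $\SO(3)$, since both frames are right-handed and share $e_0$. This proves the inertial case.

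\emph{Step 2: determining $\tau$.} If $\kappa>0$, then $e_1=\ddot\gamma/\kappa$ is determined. Differentiating gives $\tau e_2=\dot e_1-\kappa e_0$, so the right-hand side is determined by $\gamma$. Its norm is $\tau^2=-\eta(\dot e_1-\kappa e_0,\dot e_1-\kappa e_0)$, and since $\tau\ge0$ this fixes $\tau$. If $\tau=0$, then~\eqref{eq:ktu} gives $\upsilon=0$, so $K$ is fixed and $\dot e_2=\dot e_3=0$. In that case $e_2,e_3$ may be any constant orthonormal pair spanning the orthogonal complement of $\mathrm{span}(e_0,e_1)$ with right-handed orientation. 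Two such choices differ by a rigid $\SO(2)$ rotation, and any such rotation preserves~\eqref{eq:FrenetSerret}.

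\emph{Step 3: the generic case $\kappa,\tau>0$.} Here $e_2=(\dot e_1-\kappa e_0)/\tau$ is uniquely determined. Orthonormality and right-handedness then fix $e_3$ uniquely as the remaining vector completing the oriented frame, for instance via the Hodge dual of $e_0\wedge e_1\wedge e_2$. Finally $\upsilon e_3=\dot e_2+\tau e_1$, so $\upsilon=-\eta(\dot e_2+\tau e_1,e_3)$ is determined, sign included. Hence $K$ and $e_a$ are both unique.

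\emph{Main obstacle.} The only delicate point is that $K$ must be recovered from $\gamma$ alone and not from a particular tetrad. The possible ambiguity arises at the degenerate stages where $\kappa=0$ or $\tau=0$. It is resolved by the constraints~\eqref{eq:ktu}, which force all later curvatures to vanish. Consequently $K$ is determined in every case, even when the tetrad is not.
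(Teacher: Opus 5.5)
Your proposal is correct and follows essentially the same route as the paper's proof: you recover $\kappa$, then $e_1$ and $\tau$, then $e_2$, $e_3$ and $\upsilon$ successively from derivatives of $\gamma$, and you use the constraints \eqref{eq:ktu} to dispose of the degenerate cases $\kappa=0$ and $\tau=0$, where only the rigid $\SO(3)$ or $\SO(2)$ freedom remains. Your explicit formula $\upsilon=-\eta(\dot e_2+\tau e_1,e_3)$ is a small detail that the paper leaves implicit.
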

\begin{proof}
    It has already been noted that $\kappa$ is the proper acceleration.
    If $\kappa=0$ then also $\tau=\upsilon=0$, so $e_a$ is a constant right-handed tetrad with $e_0=\dot{\gamma}$, thus being fixed up to a rigid $\SO(3)$ rotation of the spacelike basis vectors.
    Otherwise, $\kappa>0$. It follows from~\eqref{eq:FrenetSerret} and~\eqref{eq:K} that
    $e_1=\kappa^{-1}\dot{e}_0$, whereupon, as $\dot{e}_1=\kappa e_0+\tau e_2$, $\tau\in[0,\infty)$ is determined by $\tau^2=-\eta(\dot{e}_1-\kappa e_0,\dot{e}_1-\kappa e_0)$. If $\tau=0$ then also $\upsilon=0$ and then $e_2$ and $e_3$ are constant unit vectors orthogonal to each other and $e_0$, $e_1$. Hence they are fixed up to a rigid $\SO(2)$ rotation. Otherwise we now have $\kappa>0,\tau>0$, and it follows from~\eqref{eq:FrenetSerret} and~\eqref{eq:K} that  $e_2=\tau^{-1}(\dot{e}_1-\kappa e_0)$. The vector $e_3$ is then uniquely fixed so that $e_a$ is a right-handed tetrad, and $\upsilon\in\RR$ is uniquely determined by~\eqref{eq:FrenetSerret} and~\eqref{eq:K} so that $\upsilon e_3 = \dot{e}_2+\tau e_1$.
\end{proof}
If $\gamma$ is a stationary worldline, any solution to its corresponding Frenet--Serret equations will be called a \emph{Frenet--Serret tetrad} for $\gamma$ and denoted $e^\FS$. Proposition~\ref{prop:FS} shows that a stationary worldline determines its Frenet--Serret tetrad uniquely up to a (possibly trivial) subgroup of $\SO(3)$ acting rigidly on the spacelike tetrad vectors. 

Conversely, suppose a matrix $K$ is given as in~\eqref{eq:K} with $(\kappa,\tau,\upsilon)$ obeying~\eqref{eq:ktu}. Given any right-handed tetrad $e_a$ with future-pointing timelike $e_0$, the unique solution $e^\FS_a(s)$ to the Frenet--Serret equation~\eqref{eq:FrenetSerret} with $e_a^\FS(0)=e_a$ is
\begin{equation}\label{eq:FS_sol}
    e_a^\FS(s) = \exp(sK)_{a}^{\phantom{a}b}e_{b}.
\end{equation}
Here, the matrix exponential may be defined using its power series, as will be explained in Section~\ref{sec:functional_calc}. Then through any point $p$ in Minkowski spacetime, there is a unique
stationary worldline $\gamma$ with $\gamma(0)=p$ and for which $e^\FS$ is a Frenet--Serret tetrad, obtained by integrating the equation $\dot{\gamma}(s)=e_0^\FS(s)$. Using the identity
\begin{equation}
    \frac{\dd}{\dd s} (sF(sK)) = F(sK) + sK F'(sK) = \left.\frac{\dd}{\dd z}(z F(z))\right|_{z=sK},
\end{equation}
which holds for any entire analytic $F$, the unique solution is 
\begin{equation}\label{eq:gammas}
    \gamma^\mu(s) = p^\mu + s\expc(sK)_0^{\phantom{0}b} e^\mu_b,
\end{equation}
where the cardinal exponential function $\expc$ was defined in~\eqref{eq:sin(h)c}.\footnote{If $K$ is invertible, we may also write
$\gamma^\mu(s)=p^\mu+ [(\exp(sK)-I)K^{-1}]_0^{\phantom{0}b}e_b^\mu$ -- however $K$ is only invertible for $\upsilon>0$, whereas~\eqref{eq:gammas} is valid for all parameter values~\eqref{eq:ktu}.} As $\gamma$ was determined by 
$K$, $p$ and $e_a$, it follows that all stationary worldlines with given $K$ are related by proper orthochronous Poincar\'e transformations and reparametrisations $s\mapsto s+\textnormal{const}$.

The stationary worldlines fall into 6 types, determined by the values of $\kappa$, $\tau$ and $\upsilon$ and the eigenvalue structure of $K$, which is established in Appendix~\ref{appx:proof_of_lems}.
\begin{lemma}\label{lem:K1}
    Let $\kappa,\tau,\upsilon$ obey~\eqref{eq:ktu}. Unless 
    $\kappa=\tau>\upsilon=0$, the matrix $K$ is diagonalisable 
    with eigenvalues $\pm\chi$ and $\pm\ii\omega$, where $\chi\ge 0$ and
    $\omega\ge 0$ are given explicitly by     
    \begin{align}
         \chi^2 & = \tfrac{1}{2}(\kappa^2-\tau^2-\upsilon^2 + R) \label{eq:chi}\\
         \omega^2 & = \tfrac{1}{2}(-\kappa^2+\tau^2+\upsilon^2 + R), 
         \label{eq:omega}
     \end{align}
     where 
      \begin{equation}\label{eq:R}
         R(\kappa,\tau,\upsilon) = \sqrt{(\kappa^2-\tau^2-\upsilon^2)^2 + 4\kappa^2\upsilon^2}\ge 0.
     \end{equation}
     If $\kappa=\tau>\upsilon=0$ then $K$ is non-diagonalisable and its only eigenvalue is $0$ with geometric multiplicity $2$.
\end{lemma}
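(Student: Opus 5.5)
Compute the characteristic polynomial of $K$, viewed as the matrix $K_a^{\phantom{a}b}$ in~\eqref{eq:K}, by cofactor expansion along the first row:
\begin{equation}
\det(\lambda I - K) = \lambda^4 - (\kappa^2-\tau^2-\upsilon^2)\lambda^2 - \kappa^2\upsilon^2 .
\end{equation}
Call the coefficient of $\lambda^2$ in the identity $\lambda^4 - (\kappa^2-\tau^2-\upsilon^2)\lambda^2$ something checkable: at $\upsilon=0$ the matrix has blocks giving $\lambda^2(\lambda^2-\kappa^2+\tau^2)$, and at $\kappa=0$ it gives $\lambda^4+(\tau^2+\upsilon^2)\lambda^2$. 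The constant term is $\det K=-\kappa^2\upsilon^2$. This is a quadratic in $\mu=\lambda^2$ with discriminant $R^2$, where $R$ is as in~\eqref{eq:R}. Its roots are therefore $\mu_\pm=\tfrac12(\kappa^2-\tau^2-\upsilon^2\pm R)$, so $\mu_+=\chi^2\ge0$ and $\mu_-=-\omega^2\le0$ with $\chi,\omega$ as in~\eqref{eq:chi} and~\eqref{eq:omega}. Nonnegativity follows from $R\ge|\kappa^2-\tau^2-\upsilon^2|$. Hence the eigenvalues are $\pm\chi$ and $\pm\ii\omega$.

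For diagonalisability, the generic case is easy. If $\chi>0$ and $\omega>0$, the four eigenvalues are distinct, so $K$ is diagonalisable. This occurs exactly when $R>|\kappa^2-\tau^2-\upsilon^2|$, i.e.\ when $\kappa\upsilon\neq0$.

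The remaining cases have $\kappa\upsilon=0$ and are handled by direct inspection using~\eqref{eq:ktu}:
\begin{itemize}
\item If $\kappa=0$, then $K=0$, which is trivially diagonalisable.
\item If $\kappa>0$ and $\upsilon=0$, then $K$ is block diagonal with a $3\times3$ block $M$ on $(e_0,e_1,e_2)$ and a zero $1\times1$ block. The eigenvalues are $0$ (twice) and $\pm\sqrt{\kappa^2-\tau^2}$.
\end{itemize}

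The second case splits further according to the sign of $\kappa^2-\tau^2$:
\begin{itemize}
\item If $\kappa\neq\tau$, the eigenvalue $0$ has algebraic multiplicity $2$. Its eigenvectors are spanned by $e_3$-direction together with the kernel of $M$. Since $\det M=0$ and $M$ has rank $2$ (the first two rows are independent), $\ker M$ is one-dimensional. So the geometric multiplicity is $2$, and $K$ is diagonalisable, with $\chi$ or $\omega$ equal to $\sqrt{|\kappa^2-\tau^2|}$ and the other equal to zero.
\item If $\kappa=\tau>0$, the eigenvalue $0$ has algebraic multiplicity $4$. The rank of $K$ is $2$, so the geometric multiplicity is $2<4$ and $K$ is not diagonalisable. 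One can confirm $M\neq0$ with $M^3=0$ to exhibit a nontrivial Jordan block.
\end{itemize}

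The main obstacle is bookkeeping, not depth. The characteristic polynomial must be computed carefully with the sign structure of~\eqref{eq:K}. The degenerate case analysis must also be checked for correct ranks; a quick Gaussian elimination on $M$ suffices.
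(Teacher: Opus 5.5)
Your proof is correct and follows essentially the same route as the paper: the same characteristic polynomial solved as a quadratic in $\lambda^2$, four distinct eigenvalues when $\kappa\upsilon\neq0$, and a count of $\dim\ker K$ when $\upsilon=0$. The only cosmetic difference is that you get $\dim\ker K=2$ from $\mathrm{rank}\,M=2$, whereas the paper exhibits the null vectors $(\tau,0,-\kappa,0)^T$ and $(0,0,0,1)^T$, and in the parabolic case reads off Jordan blocks of sizes $3$ and $1$ from $K^2\neq0$, $K^3=0$.
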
 
    When simplifying expressions involving $\chi$ and $\omega$, it is often useful to use the identities
    \begin{equation}\label{eq:4thpowers}
        \chi^4 = (\kappa^2-\tau^2-\upsilon^2)\chi^2+\kappa^2\upsilon^2 
        \qquad \omega^4 = -(\kappa^2-\tau^2-\upsilon^2)\omega^2+\kappa^2\upsilon^2,        
    \end{equation}
    which hold because $\chi$ and $\ii\omega$ are roots of the characteristic polynomial of $K$ (see~\eqref{eq:characteristic}).

Lemma~\ref{lem:K} suggests a natural partition of the parameter space for stationary worldlines into six types, as set out in
Table~\ref{tab:inv}. Physically, these curves
represent inertial, uniformly linearly accelerated, uniformly rotating (elliptic), parabolic, catenary or loxodromic motions.  The three cases with $\chi>0$, namely the linearly accelerated, catenary, and loxodromic motions, can be grouped together under the term `hyperbolic motions'. 
We refer to~\cite{FewsterThompson2023} for further details on each of these motions and~\cite{Bunney:2024} for a classification of stationary worldlines in Minkowski spacetime of general dimension.
 
\begin{table}
    \centering
    \begin{tabular}{|c|c|c|c|c|c|}\hline
         Inertial & Linear Acc. &  Catenary &  Parabolic & Elliptic  & Loxodromic  \\ \hline 
         $\kappa=\tau=\upsilon=0$ & $\kappa>0$ & $\kappa>\tau>0$ & $\kappa=\tau>0$ & $\tau>\kappa>0$ & $\kappa,\tau>0$\\
          &  $\tau=\upsilon=0$ & $\upsilon=0$ & $\upsilon=0$ & $\upsilon=0$ & $\upsilon\neq 0$
         \\ \hline
         $\chi=\omega=0$ & $\chi=\kappa$  &
         $\chi=\sqrt{\kappa^2-\tau^2}$ & Nondiag. & $\chi=0$ & $\chi,\omega>0$\\
         & $\omega=0$ & $\omega=0$ &  &  $\omega=\sqrt{\tau^2-\kappa^2}$ & \\
         \hline
    \end{tabular}
    \caption{Parameter ranges corresponding to the six types of stationary worldline and the corresponding values of $\chi$ and $\omega$ in~\eqref{eq:chi},~\eqref{eq:omega}. In all cases other than the parabolic, $K$ is diagonalisable. The linearly accelerated, catenary and loxodromic motions are collectively described as hyperbolic motions.} 
    \label{tab:inv}
\end{table}

\subsection{Relatively rotating frames}

We will study QEIs along stationary worldlines, using frames that rotate relative to the Frenet--Serret tetrad. Let $\gamma$ be a stationary worldline and suppose that $(e_a(s))_{a=0}^3$ is any smooth right-handed frame along $\gamma$ with $e_0=\dot{\gamma}$. Then there is a smooth map $s\mapsto R(s)\in \{1\}\times \SO(3)\subset \SO(1,3)$ so that
\begin{equation}
    e_a(s) = R(s)_a^{\phantom{a}b}e^\FS_b(s), 
\end{equation}
and the block diagonal structure of $R(s)$ entails 
\begin{equation}
    R(s) \eta = \eta R(s), \qquad R(s)^{-1} = R(s)^T.
\end{equation}
For reasons explained below, we will eventually focus on the case
\begin{equation}\label{eq:RandOmega}
    R(s) = R(0)\exp(-s\Omega) \qquad \Omega = \begin{pmatrix}
        0 & 0 & 0 & 0\\ 0 & 0 & \varpi_3 & -\varpi_2 \\  0 & -\varpi_3 & 0 & \varpi_1 \\ 0 & \varpi_2 & -\varpi_1 & 0
    \end{pmatrix},
\end{equation}
in which the frame $(e_a)_{a=0}^3$ has constant angular velocity $-\boldsymbol{\varpi}$ relative to the Frenet--Serret frame. Note that $\Omega$ is antisymmetric and commutes with $\eta$.

Next, recall that the Fermi--Walker derivative along $\gamma$ is defined by 
\begin{equation}
    \frac{D^\FW}{\dd s}V = \nabla_u V + \eta(a, V) u - \eta(u, V) a,
\end{equation}
where $u=\dot{\gamma}=e_0^\FS$ and $a=\ddot{\gamma}=\kappa e_1^\FS$ are the $4$-velocity and $4$-acceleration of $\gamma$. A calculation gives
\begin{equation}
     \frac{D^\FW}{\dd s} e_a = [(\dot{R}+R\Omega^\FW)R^T]_a^{\phantom{a}b}e_b
\end{equation}
where $\Omega^\FW$ corresponds to $\boldsymbol{\varpi}=\boldsymbol{\varpi}^\FW:=(\upsilon,0,\tau)$. The tetrad $(e_a)_{a=0}^3$ is a \emph{Fermi--Walker tetrad} if it is invariant under Fermi--Walker transport, which happens precisely for $R(s)=R(0)\exp(-s \Omega^\FW)$.

\subsection{Functional calculus}\label{sec:functional_calc}

The matrices $C(s,s')$ and $D(s,s')$ used to construct the point-split energy density may be expressed in terms of analytic functions applied to $sK$, where $s$ has dimensions of time making $sK$ dimensionless. We now set out the construction and properties of such expressions, taking account of the fact that $K$ is not hermitian and can fail to be diagonalisable.
To begin, we assemble various useful facts about the matrix $K$, which are proved in Appendix~\ref{appx:proof_of_lems}. Similar results appear in~\cite{Synge1967}.
\begin{lemma}\label{lem:K}
(a) Suppose $K$ is diagonalisable (i.e., it is not the case that $\kappa=\tau>\upsilon=0$), with 
    \begin{equation}\label{eq:K_diagonal_form}
         K = P \Lambda P^{-1}, \qquad \Lambda = \begin{pmatrix}
             \chi & & & \\ & -\chi &   &  \\ && \ii \omega & \\ &&& -\ii\omega
         \end{pmatrix}  
     \end{equation}
     for $\chi,\omega\ge 0$ as in Lemma~\ref{lem:K1} and some invertible matrix $P\in M_4(\CC)$.  
    Then $K^2$ has eigenvalues $\chi^2$ and $-\omega^2$. If $K\neq 0$ there are (generally non-orthogonal) rank-$2$ projections $\Pi$ and $I-\Pi$ onto the corresponding eigenspaces given by\footnote{See~\eqref{eq:Piexplicit} and~\eqref{eq:KPi} for explicit matrix representations.}  
        \begin{equation}\label{eq:Pi}
             \Pi = P\diag(1,1,0,0)P^{-1}=\frac{\omega^2 I + K^2}{\chi^2+\omega^2}, \quad
             I-\Pi = P\diag(0,0,1,1)P^{-1}=\frac{\chi^2 I - K^2}{\chi^2+\omega^2}.
         \end{equation}
    If $\chi>0$ then $\Pi_+=P\diag(1,0,0,0)P^{-1}$ and $\Pi_-=P\diag(0,1,0,0)P^{-1}$ are rank-$1$ projections onto the eigenspaces of $K$ with eigenvalue $\chi$ and $-\chi$ respectively given by
     \begin{equation}\label{eq:Pi_pm}
        \Pi_\pm = \frac{1}{2\chi}(\chi I \pm K)\Pi.
    \end{equation} 
  
     (b) If $K$ is nondiagonalisable, i.e., $\kappa=\tau>0$, $\upsilon=0$, then $K^r=0$ for $r\ge 3$. Furthermore $\Tr K^r=0$ for all $r\ge 1$.  
     \end{lemma}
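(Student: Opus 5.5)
The plan is to work directly with the characteristic polynomial of $K$. A direct expansion of $\det(zI-K)$ for the tridiagonal matrix~\eqref{eq:K} gives
\begin{equation*}
p(z)=z^4-(\kappa^2-\tau^2-\upsilon^2)z^2-\kappa^2\upsilon^2 .
\end{equation*}
This polynomial is even, which is consistent with $K_{ab}$ being antisymmetric with respect to $\eta$: $K$ is $\eta$-similar to $-K^T$, so its spectrum is symmetric under $z\mapsto -z$. As a quadratic in $z^2$, its roots are $z^2=\chi^2$ and $z^2=-\omega^2$, with $\chi,\omega$ as in Lemma~\ref{lem:K1}. The identities~\eqref{eq:4thpowers} are exactly $p(\chi)=p(\ii\omega)=0$.

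For part (a), assume $K=P\Lambda P^{-1}$. Then $K^2=P\Lambda^2P^{-1}$ with $\Lambda^2=\diag(\chi^2,\chi^2,-\omega^2,-\omega^2)$, which gives the eigenvalues of $K^2$ at once. If $K\neq0$, then $\chi^2+\omega^2=R>0$. To see this, note that $R=0$ would force $\kappa\upsilon=0$ and $\kappa^2=\tau^2+\upsilon^2$. Together with~\eqref{eq:ktu}, this leaves either $K=0$ or $\kappa=\tau,\upsilon=0$, and the latter is excluded by diagonalisability. The formulas~\eqref{eq:Pi} then follow by evaluating the right-hand sides on $\Lambda$. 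For instance, $(\omega^2I+\Lambda^2)/(\chi^2+\omega^2)=\diag(1,1,0,0)$, and conjugating by $P$ gives $\Pi$. The analogous computation gives $I-\Pi$. Both are idempotent, with rank 2.

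Next take $\chi>0$. On $\diag(1,1,0,0)$ the factor $(\chi I\pm\Lambda)/(2\chi)$ acts as $\diag(1,0)$ or $\diag(0,1)$ respectively. Hence $(\chi I\pm K)\Pi/(2\chi)=P\diag(1,0,0,0)P^{-1}$ or $P\diag(0,1,0,0)P^{-1}$, which is~\eqref{eq:Pi_pm}. The only subtle point is that when $\omega=0$ (the linear-acceleration and catenary cases) the block $\diag(0,0,1,1)$ has eigenvalue $0$ with multiplicity two. The hypothesis of diagonalisability is then genuinely needed, but it is assumed in the statement, so no extra work is required.

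For part (b), set $\kappa=\tau$ and $\upsilon=0$. Then $p(z)=z^4$, so $K$ is nilpotent and every eigenvalue is $0$. Consequently $\Tr K^r=\sum\lambda_i^r=0$ for all $r\ge1$. To show $K^3=0$, I will compute $K^2$ explicitly. With the entries of~\eqref{eq:K} we get $(K^2)_{00}=\kappa^2$, $(K^2)_{02}=\kappa\tau$, $(K^2)_{11}=\kappa^2-\tau^2=0$, $(K^2)_{20}=-\tau\kappa$ and $(K^2)_{22}=-\tau^2$. Thus $K^2=\kappa^2(E_{00}+E_{02}-E_{20}-E_{22})$, where the $E_{ij}$ are matrix units and all entries in row/column $3$ vanish. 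Multiplying by $K$ once more, the rows of $K^2$ are multiples of $(1,0,1,0)$, and $(1,0,1,0)K=(0,\kappa-\tau,0,0)=0$. Therefore $K^3=0$.

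I expect this explicit nilpotency check to be the main, though mild, obstacle. The point is that a nonzero nilpotent $4\times4$ matrix need not satisfy $K^3=0$ a priori, so an explicit check or a minimal-polynomial argument is needed here. The statement's claim about geometric multiplicity in Lemma~\ref{lem:K1} fits with this: $K$ has rank 2, since rows $0$ and $3$ are independent and rows $1$ and $2$ combine with them.
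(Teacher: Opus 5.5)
Your proof is correct and follows essentially the paper's route: for (a) you evaluate $\Lambda^2$ and $(\chi I\pm\Lambda)\diag(1,1,0,0)/(2\chi)$ and conjugate by $P$, and for (b) you use $\det(K-\lambda I)=\lambda^4$ together with a direct check that $K^3=0$; your verification that $\chi^2+\omega^2=R>0$ when $K\neq 0$ fills in a point the paper leaves implicit. The one slip is in your closing aside, which the lemma does not need: with $\upsilon=0$, row $3$ of $K$ is zero, so rank $2$ comes from rows $0$ and $1$ (row $2$ being $-$row $0$), not from rows $0$ and $3$.
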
    
      
     If $F$ is an entire analytic function, $F(s K)$ may be defined as the absolutely convergent series
     \begin{equation}\label{eq:FofK}
         F(sK) = \sum_{n=0}^\infty \frac{F^{(n)}(0)}{n!} (sK)^n,
     \end{equation}
     whereupon $s\mapsto F(sK)$ is entire analytic, 
     with derivative $F'(sK)K$. One has the following general properties, proved in Appendix~\ref{appx:proof_of_lems}.
     \begin{lemma}\label{lem:FofK} 
        (a) If, using the notation of Lemma~\ref{lem:K}, $K$ is diagonalisable and at least one of the parameters $\chi$, $\omega$ is nonzero, then
         \begin{equation}\label{eq:FofK_diag}
             F(sK) = (F_{\textnormal{e}}(\chi s)I + s F_{\textnormal{oc}}(\chi s)K))\Pi + 
             (F_{\textnormal{e}}(\ii\omega s)I + s F_{\textnormal{oc}}(\ii\omega s)K))(I-\Pi) ,
         \end{equation}
         where $F_\textnormal{e}$ is the even part of $F$ and $F_{\textnormal{oc}}$ is the cardinal function of its odd part,
         \begin{equation}
         F_{\textnormal{e}}(z)=\frac{1}{2}(F(z)+F(-z)), \qquad  F_{\textnormal{oc}}(z)=\frac{1}{2z}(F(z)-F(-z))
         \end{equation}
         (with the understanding that $F_\textnormal{oc}(0)=F'(0)$).

         The trace of $F(sK)$ is given by 
         \begin{equation}\label{eq:TrFK}
             \Tr F(sK) = 2 F_\textnormal{e}(\chi s)  +  2 F_\textnormal{e}(\ii\omega s).
         \end{equation}
        If $G$ is also entire analytic then
        \begin{equation}
            (FG)(sK) = F(sK)G(sK)= G(sK)F(sK).
        \end{equation}
         
         (b) If $K$ is diagonalisable and $\chi=\omega=0$ then $K=0$ and consequently $F(sK)=F(0)I$ for any analytic $F$, giving 
         $\Tr F(sK)=4F(0)$.

         (c) If $K$ is not diagonalisable then $K^3=0$ and consequently
         \begin{equation}\label{eq:FofK_nondiagonalisable}
             F(sK) = F(0)I + F'(0)sK + \frac{1}{2}F''(0)s^2K^2.
         \end{equation}
         One has $\Tr F(sK)=4F(0)$.

         (d) In all cases, one has 
         \begin{equation}\label{eq:functional_calc_id}
            \eta F(sK)^T  = F(-sK)\eta,
         \end{equation}
        where $\eta=\diag(1,-1,-1,-1)$.
     \end{lemma}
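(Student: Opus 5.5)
Every part of Lemma~\ref{lem:FofK} follows from one fact: $F(sK)$ is a power series in the single matrix $sK$, so it is fixed by the action of $K$ on the eigenspaces described in Lemma~\ref{lem:K}. The plan is to split $F=F_\textnormal{e}+F_\textnormal{o}$ and write the odd part as $F_\textnormal{o}(z)=zF_\textnormal{oc}(z)$, where $F_\textnormal{oc}$ is even and entire. Then, with $E_\textnormal{e}(w)$ the entire function satisfying $F_\textnormal{e}(z)=E_\textnormal{e}(z^2)$, and similarly $E_\textnormal{oc}$ for $F_\textnormal{oc}$,
\begin{equation*}
F(sK)=E_\textnormal{e}(s^2K^2)+sK\,E_\textnormal{oc}(s^2K^2).
\end{equation*}
This rearrangement is legitimate because the series converge absolutely. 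It reduces everything to functions of $K^2$.

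For (a), I use the diagonal form $K=P\Lambda P^{-1}$. Any entire $G$ satisfies $G(sK)=PG(s\Lambda)P^{-1}$, and $G(s\Lambda)=\diag(G(\chi s),G(-\chi s),G(\ii\omega s),G(-\ii\omega s))$. By \eqref{eq:Pi}, $K^2$ acts as $\chi^2$ on $\operatorname{ran}\Pi$ and as $-\omega^2$ on $\operatorname{ran}(I-\Pi)$. Hence $E(s^2K^2)=E(\chi^2s^2)\Pi+E(-\omega^2s^2)(I-\Pi)$, and $E(\chi^2s^2)=F_\textnormal{e}(\chi s)$, $E(-\omega^2s^2)=F_\textnormal{e}(\ii\omega s)$. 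Since $K$ commutes with $\Pi$, substituting into the display above gives \eqref{eq:FofK_diag}.

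For the trace I work in the diagonal basis. The entries of $F(s\Lambda)$ pair up: $F(\chi s)+F(-\chi s)=2F_\textnormal{e}(\chi s)$, and likewise at $\pm\ii\omega s$. This gives \eqref{eq:TrFK}. The multiplicativity $(FG)(sK)=F(sK)G(sK)$ is the Cauchy product of absolutely convergent series in the single matrix $sK$, and commutativity is automatic. This part holds in all cases, not only the diagonalisable one.

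For (b), $\Lambda=0$ forces $K=0$, so only the constant term survives. For (c), Lemma~\ref{lem:K}(b) gives $K^3=0$, which truncates the series to the stated quadratic. The trace is then $4F(0)$ because $\Tr K=\Tr K^2=0$.

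For (d), I prove it on monomials. The key identity is $\eta K^T=-K\eta$, which is the antisymmetry of $K_{ab}=K_a^{\phantom{a}c}\eta_{cb}$, together with $\eta^2=I$. By induction, $\eta (K^n)^T=\eta (K^T)^n=(-K)^n\eta$. Summing the series then gives \eqref{eq:functional_calc_id}.

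The only step needing care is (a): justifying the reorganisation into even and odd parts as a function of $K^2$ and matching the values correctly. In particular, $F_\textnormal{oc}(\ii\omega s)$ must be read through evenness, and the case with exactly one of $\chi,\omega$ zero must still be handled. Working directly in the eigenbasis avoids all of these issues, since there each term is just the scalar identity $F(\lambda s)=F_\textnormal{e}(\lambda s)+\lambda sF_\textnormal{oc}(\lambda s)$ applied to each eigenvalue.
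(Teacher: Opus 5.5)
Your proposal is correct and follows essentially the same route as the paper: diagonalise $K$, split $F=F_{\textnormal{e}}+zF_{\textnormal{oc}}$ with both pieces even, evaluate the even functions of $sK$ via the spectral projections $\Pi$ and $I-\Pi$ of $K^2$, truncate the power series in (b) and (c), and use $\eta K^T\eta=-K$ termwise in (d). The differences are only in two sub-steps: you read the trace directly off the eigenvalues rather than using that $K\Pi$ and $K(I-\Pi)$ are traceless, and you obtain $(FG)(sK)=F(sK)G(sK)$ from the Cauchy product of the power series rather than from the paper's even/odd product identities, which is simpler and shows the product rule holds for every $K$, diagonalisable or not.
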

     
     In fact, all but part~(c) of Lemma~\ref{lem:FofK} also apply to any continuous $F$, but part~(c) falls outside the continuous functional calculus -- for simplicity we have stated everything for the analytic case. Note also that formula~\eqref{eq:FofK_nondiagonalisable}
     applies in both parts~(b) and~(c), and we will sometimes separate computations of $F(sK)$ into the two cases depending on whether (i) $K$ is diagonalisable with $\chi^2+\omega^2>0$, or (ii) $K$ is nondiagaonalisable or zero.

    We illustrate Lemma~\ref{lem:FofK} with three relevant examples, namely $\exp(sK)$, $\expc(sK)$ and $\sinhc^2(sK)$, where $\expc$ and $\sinhc$ were defined in~\eqref{eq:sin(h)c}. The first two appear in the expression giving $\gamma(s)$ in terms of its initial conditions~\eqref{eq:gamma_sol}, while the third is used in the calculation of $\sigma(\gamma(s),\gamma(s'))$ in Section~\ref{sec:summary} 

    First, for $F(z)=\exp(sz)$, Lemma~\ref{lem:FofK} gives 
    \begin{equation}\label{eq:expsK}
        \exp(sK) =  (\cosh(\chi s) I + s \sinhc(\chi s) K  ) \Pi
        +  (\cos(\omega s) I + s\sinc(\omega s) K)(I-\Pi),
    \end{equation}
    for the diagonalisable case with $\chi^2+\omega^2>0$, and alternatively
    \begin{equation}\label{eq:expsK_nondiag}
        \exp(sK) = I + s K + \frac{s^2}{2}K^2
    \end{equation}
    if either $K=0$ (i.e., $\exp(sK)=I$) or $K$ is nondiagonalisable. These formulae can be checked directly using the differential equation for the left-hand side. 

    Second, starting from the definition of $\expc$, we find
    \begin{equation}
        \expc_{\textnormal{e}}(z) = \sinhc(z),\qquad \expc_{\textnormal{oc}}(z) = 
        \frac{\cosh(z)-1}{z^2}=\frac{1}{2}\sinhc^2(z/2),
    \end{equation}
    whereupon Lemma~\ref{lem:FofK} gives
    \begin{equation}\label{eq:fsK}
        \expc(sK) = \left(\sinhc(\chi s) I + \frac{s}{2}\sinhc^2(\chi s/2)K\right)\Pi + 
        \left(\sinc(\omega s) I + \frac{s}{2}\sinc^2(\omega s/2)K\right)(I-\Pi)
    \end{equation}
    if $K$ is diagonalisable with $\chi^2+\omega^2>0$ or alternatively
    \begin{equation}
        \expc(sK) = I + \frac{s}{2}K + \frac{s^2}{6}K^2
    \end{equation}
    in the case where $K$ is nondiagonalisable or zero.

    Third, for the even function $F(z)=\sinhc^2(sK)$, we find 
    \begin{equation}
        \sinhc^2(sK) = \sinhc^2(\chi s) \Pi + \sinc^2 (\omega s) (I-\Pi)
    \end{equation} 
    if $K$ is diagonalisable and $\chi^2+\omega^2>0$, or alternatively
    \begin{equation}
        \sinhc^2(sK) = I + \frac{s^2}{3}K^2
    \end{equation}
    if $K$ is nondiagonalisable or zero.

    \subsection{Properties 
    of $\sigma$, $\Tr C$, $\Tr D$, and $\Tr CD$ for stationary worldlines}\label{sec:summary}

    Using the functional calculus of Lemma~\ref{lem:FofK} we can compute various quantities needed to evaluate the point-split energy densities for the scalar field and electromagnetism. To keep the discussion focussed, we will give the details for the geodesic separation and $\Tr C$, indicating the general method, and relegate the other parts to Appendix~\ref{appx:CD}. We will write
    $\exp(sK)$ as $\ee^{sK}$, and similarly for other matrix exponentials, to reduce confusion between $\exp$ and $\expc$.

Starting with the geodesic separation, we may integrate the velocity between 
proper times $s'$ and $s$ using the same method as in~\eqref{eq:gammas}, to obtain
\begin{align}\label{eq:gamma_sol}
\gamma(s)- \gamma(s')&=
 (s-s')\expc((s-s')K)_0^{\phantom{0}b}e_b^\FS(s')\nonumber\\
 &= (s-s') [\expc((s-s')K)\ee^{s'K}]_{0}^{\phantom{0}b}e^\FS_b(0).
\end{align} 
Thus, the signed squared geodesic separation between $\gamma(s)$ and $\gamma(s')$ is
\begin{align}
    \sigma(\gamma(s),\gamma(s')) &= -\eta(\gamma(s)-\gamma(s'),\gamma(s)-\gamma(s')) \nonumber\\
    &= 
    - (s-s')^2 [\expc((s-s')K)\ee^{s'K}]_{0}^{\phantom{0}a}\eta_{ab}
    [\expc((s-s')K)\ee^{s'K}]_{0}^{\phantom{0}b} \nonumber\\
    &=
    -(s-s')^2[\expc((s-s')K)\ee^{s'K} \eta (\expc((s-s')K)\ee^{s'K})^T]_{00} \nonumber\\
    &=-(s-s')^2[\expc((s-s')K) \expc(-(s-s')K)\eta]_{00} \nonumber \\
    &=-(s-s')^2[\sinhc^2(\tfrac{1}{2}(s-s')K)]_{00},
\end{align}
using~\eqref{eq:gamma_sol} together with~\eqref{eq:functional_calc_id}, and the facts that any two functions of $sK$ commute,
that $\eta$ is diagonal with $\eta_{00}=1$ and that $\expc(z)\expc(-z) = \sinhc^2(z/2)$. 
Thus
\begin{equation}\label{eq:sig_Upsilon}
    \sigma(\gamma(s),\gamma(s')) = -(s-s')^2\Upsilon(\kappa^2(s-s')^2), 
\end{equation} 
where 
\begin{equation}\label{eq:Upsilon_def}
\Upsilon(z) = [\sinhc^2(\tfrac{1}{2}\sqrt{z} K/\kappa)]_{00}
\end{equation}
is an entire analytic function, noting that there are no branch cut issues to address because $\sinhc$ is even.

Explicit expressions for $\Upsilon$ may be obtained by applying the functional calculus to~\eqref{eq:Upsilon_def}. In the inertial ($K=0$) and parabolic (nondiagonalisable) cases, Lemma~\ref{lem:FofK}(b,c) give
\begin{equation}\label{eq:Upsilon_exact_b}
    \Upsilon(z) = \begin{cases} 1 & \text{inertial} \\ 1 + \frac{1}{12}z & \text{parabolic}
    \end{cases}
\end{equation}
by considering the Taylor expansion $\sinhc^2 (\tfrac{1}{2}w\sqrt{z}/\kappa) = 1+\tfrac{z}{12\kappa^2}w^2 +\mathcal{O}(w^4)$ and using $[K^2]_{00}=\kappa^2$. In the other four cases, $K$ is diagonalisable and $\chi^2+\omega^2>0$, and then Lemma~\ref{lem:FofK}(a) gives 
\begin{equation}\label{eq:Upsilon_exact_a}
    \Upsilon(z) = \frac{\kappa^2+\omega^2}{\chi^2+\omega^2} \sinhc^2 \frac{\chi\sqrt{z}}{2\kappa} + \frac{\chi^2-\kappa^2}{\chi^2+\omega^2}\sinc^2 \frac{\omega\sqrt{z}}{2\kappa} ,
\end{equation}
again noting that there are no branch cut issues because $\sinhc$ and $\sinc$ are even.

In all cases, we see that $\Upsilon$ is nonvanishing and real-valued on the positive real axis (as one could also see by noting that no distinct points on a timelike trajectory in Minkowski spacetime can be null related) and so is bounded away from zero in a neighbourhood of $\RR^+$ in $\CC$.
The fact that~\eqref{eq:sig_Upsilon} holds for an entire analytic $\Upsilon$ was proved in Appendix A of~\cite{FewsterThompson2023}, though that reference only gave the initial terms of the Taylor series (for $\kappa>0)$
\begin{equation}\label{trace_geo_series}
    \Upsilon(z)=1+\frac{1}{12}z+\frac{\kappa^2-\tau^2}{360\kappa^2}z^2+O(z^3).
\end{equation} 
Here, by contrast, we have the exact formulae~\eqref{eq:Upsilon_exact_a} and~\eqref{eq:Upsilon_exact_b}.
 
The matrix $C(s,s')$ depends on the particular choice of tetrad chosen along $\gamma$. Its trace has the following properties.
\begin{lemma}\label{lem:C}
    For $R(s)=R(0)\ee^{-s\Omega}$ as in~\eqref{eq:RandOmega}, we have
    \begin{equation}\label{eq:TrC_simp}
    \Tr C(s,s') = \Tr (\ee^{-(s-s')\Omega}\ee^{(s-s')K}\eta) ,
\end{equation}
    so $\Tr C(s,s')$ is translationally invariant. If $\kappa>0$ then 
\begin{equation}\label{eq:Gdef}
\Tr C(s,s') = \Gc_\Omega(\kappa^2(s-s')^2)
\end{equation}
for all $s,s'\in\RR$, where $\Gc_\Omega$ is entire analytic, real-valued on $\RR^+$, and has Taylor expansion
\begin{equation}
    \Gc_\Omega(z) = -2 + \frac{z}{\kappa^2}\|\boldsymbol{\varpi}-\boldsymbol{\varpi}^\FW\|^2+ 
    \frac{z^2}{12\kappa^4} G_\Omega
    + O(z^3),
\end{equation}
with
\begin{align}
    G_\Omega&= \kappa^{2}\left(3\varpi_2^2+ \left(\varpi_{3}-\tau \right) \left(3 \varpi_{3}-\tau \right)\right)
    -\|\boldsymbol{\varpi}-\boldsymbol{\varpi}^\FW\|^4
     -\|\boldsymbol{\varpi}\times\boldsymbol{\varpi}^\FW\|^2.
\end{align}
If $\kappa=0$, we have 
\begin{equation}\label{eq:TrC_kappa0}
\Tr C(s,s')=\Tr (\ee^{-(s-s')\Omega} \eta) = -2\cos (\|\boldsymbol{\varpi}\|(s-s')).
\end{equation}
\end{lemma}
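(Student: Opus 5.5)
The plan is to reduce $C$ to a product of matrix exponentials and then use cyclicity of the trace together with the functional-calculus identity~\eqref{eq:functional_calc_id}.

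\emph{Step 1: reduce $C$ to matrix exponentials.} By~\eqref{eq:FS_sol}, $e_a(s)=[R(s)\ee^{sK}]_a^{\phantom{a}c}e^\FS_c(0)$. Since $e^\FS_c(0)$ is an orthonormal tetrad, \eqref{eq:Cab_def} gives
\begin{equation*}
C(s,s')=R(s)\,\ee^{sK}\eta\,(\ee^{s'K})^T R(s')^T .
\end{equation*}
By Lemma~\ref{lem:FofK}(d), $\eta(\ee^{s'K})^T=\ee^{-s'K}\eta$. Because the powers of $K$ commute, this yields
\begin{equation*}
C(s,s')=R(s)\,\ee^{(s-s')K}\eta\,R(s')^T .
\end{equation*}
Now insert $R(s)=R(0)\ee^{-s\Omega}$. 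Antisymmetry of $\Omega$ gives $R(s')^T=\ee^{s'\Omega}R(0)^T$, and $R(0)^TR(0)=I$. Cyclicity of the trace and commutativity of $\ee^{s'\Omega}$ with $\ee^{-s\Omega}$ then give~\eqref{eq:TrC_simp}, so $\Tr C$ depends only on $\Delta=s-s'$.

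\emph{Step 2: evenness, analyticity and reality.} Set $f(\Delta)=\Tr(\ee^{-\Delta\Omega}\ee^{\Delta K}\eta)$. This is entire in $\Delta$ and real for real $\Delta$, since all the matrices involved are real. Taking the transpose inside the trace gives
\begin{equation*}
\Tr\bigl(\eta(\ee^{\Delta K})^T\ee^{\Delta\Omega}\bigr)=\Tr\bigl(\ee^{-\Delta K}\eta\,\ee^{\Delta\Omega}\bigr),
\end{equation*}
using~\eqref{eq:functional_calc_id} and $\Omega^T=-\Omega$. Since $\Omega$ commutes with $\eta$, cyclicity turns this into $\Tr(\ee^{\Delta\Omega}\ee^{-\Delta K}\eta)=f(-\Delta)$, so $f$ is even. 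An even entire function is an entire function of $\Delta^2$, so for $\kappa>0$ we may write $f(\Delta)=\Gc_\Omega(\kappa^2\Delta^2)$ with $\Gc_\Omega$ entire. The Taylor coefficients of $\Gc_\Omega$ are real, so $\Gc_\Omega$ is real-valued on $\RR^+$.

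\emph{Step 3: Taylor coefficients.} I would expand
\begin{equation*}
f(\Delta)=\sum_{j,k}\frac{(-1)^j\Delta^{j+k}}{j!\,k!}\Tr(\Omega^jK^k\eta).
\end{equation*}
The constant term is $\Tr\eta=-2$. The $\Delta^2$ coefficient is
\begin{equation*}
\tfrac12\Tr(\Omega^2\eta)-\Tr(\Omega K\eta)+\tfrac12\Tr(K^2\eta).
\end{equation*}
These traces are elementary:
\begin{itemize}
\item $\Tr(\Omega^2\eta)=2\|\boldsymbol{\varpi}\|^2$, since $\Omega^2$ restricted to the spatial block is $\boldsymbol{\varpi}\boldsymbol{\varpi}^T-\|\boldsymbol{\varpi}\|^2I$;
\item $\Tr(K^2\eta)=2(\tau^2+\upsilon^2)$;
\item $\Tr(\Omega K\eta)=2(\varpi_3\tau+\varpi_1\upsilon)=2\boldsymbol{\varpi}\cdot\boldsymbol{\varpi}^\FW$.
\end{itemize}
Together these give $\|\boldsymbol{\varpi}-\boldsymbol{\varpi}^\FW\|^2\Delta^2$, as claimed. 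The $\Delta^4$ coefficient needs the traces $\Tr(\Omega^jK^{4-j}\eta)$ for $j=0,\dots,4$. Alternatively, one can write $\ee^{-\Delta\Omega}\ee^{\Delta K}$ as $\ee^{\Delta X+\frac{\Delta^2}{2}[K,\Omega]\cdots}$ via BCH with $X=K-\Omega$, and expand in $X$ and its commutators.

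I expect this quartic coefficient to be the main obstacle. It is routine but messy, and the real work is organizing the result into the stated form $G_\Omega$. I would first compute the result in terms of $\boldsymbol{\delta}=\boldsymbol{\varpi}-\boldsymbol{\varpi}^\FW$. The natural split is $\Omega=\Omega^\FW+\Omega_\delta$, with $K-\Omega^\FW$ having only the boost entries $\kappa$, which makes many traces vanish. I would then reassemble $\|\boldsymbol\delta\|^4$, $\|\boldsymbol{\varpi}\times\boldsymbol{\varpi}^\FW\|^2$ and the $\kappa^2$ terms, checking the answer against $\boldsymbol{\varpi}=0$, where $\ee^{\Delta K}$ is given by~\eqref{eq:expsK}.

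\emph{Step 4: the case $\kappa=0$.} Here $K=0$ by~\eqref{eq:ktu}, so $f(\Delta)=\Tr(\ee^{-\Delta\Omega}\eta)$. The spatial block of $\ee^{-\Delta\Omega}$ is a rotation by angle $\|\boldsymbol{\varpi}\|\Delta$, with trace $1+2\cos(\|\boldsymbol{\varpi}\|\Delta)$. Therefore $f(\Delta)=1-(1+2\cos(\|\boldsymbol{\varpi}\|\Delta))=-2\cos(\|\boldsymbol{\varpi}\|\Delta)$, which is~\eqref{eq:TrC_kappa0}.
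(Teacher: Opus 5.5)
Your proposal is correct and follows the paper's proof essentially step for step. It uses the same reduction $C(s,s')=R(s)\ee^{(s-s')K}\eta R(s')^T$ via~\eqref{eq:functional_calc_id}, the same cyclicity and transpose arguments for translation invariance and evenness, and the same rotation-eigenvalue argument when $\kappa=0$. The only difference is that you compute the $\mathcal{O}(z)$ coefficient by hand and propose expanding $\sum_{j,k}\Tr(\Omega^jK^k\eta)$ for the $\mathcal{O}(z^2)$ term, where the paper simply cites symbolic algebra. That expansion is manageable, using $\Omega^3=-\|\boldsymbol{\varpi}\|^2\Omega$ and the antisymmetry of the spatial block of $K^3$, and it does give $G_\Omega/12$ as the $(s-s')^4$ coefficient.
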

The factor of $\kappa$ in~\eqref{eq:Gdef} is inserted for dimensional reasons and also for consistency with similar expressions in~\cite{FewsterThompson2023}; indeed, $\Gc_{\Omega}|_{\Omega=0}$ corresponds to what was written as $\sum_{a=0}^3 G_a$ in~\cite{FewsterThompson2023}. In that case, $G_{\Omega}|_{\Omega=0}=\kappa^2\tau^2-(\tau^2+\upsilon^2)^2$ and the expansion agrees with Eq.~(4.9) in~\cite{FewsterThompson2023} (except that the $12$ in the denominator of the $O(z^2)$ term in that reference is absent due to a typographical error).
A second interesting case is where $\boldsymbol{\varpi}=\boldsymbol{\varpi}^\FW$, and the Taylor series of $\Gc_\Omega$ simplifies to 
\begin{equation}
\Gc^\FW(z) = -2 
    + O(z^3).
\end{equation}
However, $\boldsymbol{\varpi}=\boldsymbol{\varpi}^\FW$ is not a critical point of the $\mathcal{O}(z^2)$ term owing to the term $(\varpi_{3}-\tau ) (3 \varpi_{3}-\tau)$. 
\begin{proof}[Proof of Lemma~\ref{lem:C}.]
In the first instance, suppose that $(e_a)_{a=0}^3$ is a general smooth right-handed tetrad with $e_0=\dot{\gamma}$. As described above, it is related to the Frenet--Serret tetrad at $s=0$ by 
\begin{equation}
e_a(s) = R(s)_{a}^{\phantom{a}b}e_b^\FS(s) = [R(s)\ee^{sK}]_a^{\phantom{a}b}e_b^\FS(0)
\end{equation}
for some smoothly varying rotation $R(s)\in\SO(1,3)$. 

Starting from~\eqref{eq:Cab_def} and~\eqref{eq:FS_sol}, one has
\begin{align}
    C_{ab}(s,s') &= \eta(e_a(s),e_b(s')) = 
     [R(s)\ee^{sK}]_a^{\phantom{a}c}[R(s')\ee^{s'K}]_b^{\phantom{b}d}\eta(e_c^\FS(0),e_d^\FS(0))  \nonumber\\    
&  = [R(s)\ee^{sK}]_a^{\phantom{a}c}\eta_{cd}[R(s')\ee^{s'K}]_b^{\phantom{b}d},
\end{align}
so, in matrix form, 
\begin{align}\label{eq:Cssp}
C(s,s') &= R(s)\ee^{sK} \eta (\ee^{s'K})^T R(s')^T  = R(s)\ee^{(s-s')K}\eta R(s')^{-1} \nonumber \\ &= R(s) C^\FS(s,s')R(s')^{-1},
\end{align}
using~\eqref{eq:functional_calc_id}, where $C^\FS(s,s')=\ee^{(s-s')K}\eta$ is the matrix obtained using the Frenet--Serret tetrad. Clearly, $C^\FS$ is translationally invariant, but the same is not true for general matrices $C$.
However, if we specialise to the situation where $R(s) = R(0)\ee^{-s\Omega}$ as in~\eqref{eq:RandOmega}, we may rotate the order of factors under the trace to obtain~\eqref{eq:TrC_simp}, which is evidently translationally invariant and independent of $R(0)$. This explains why we study tetrads rotating uniformly with respect to the Frenet--Serret tetrad.

Next, note that 
\begin{align}
    \ee^{s\Omega}\ee^{-sK}\eta &= \ee^{s\Omega}\eta (\ee^{sK})^T =  (\ee^{-s\Omega})^T \eta (\ee^{sK})^T  =(\ee^{sK}\eta\ee^{-s\Omega})^T
\end{align}
and hence $\Tr C(s,s')$ is even in $s-s'$ because
\begin{equation}
    \Tr (\ee^{s\Omega}\ee^{-sK}\eta) = \Tr (\ee^{sK}\eta\ee^{-s\Omega}) = \Tr (\ee^{-s\Omega}\ee^{sK}\eta).
\end{equation}
Both matrix functions appearing in~\eqref{eq:TrC_simp} are analytic in $s-s'$. Hence, for $\kappa>0$, there is an entire analytic function $\Gc_\Omega$ satisfying~\eqref{eq:Gdef}, and which must be real-valued on $\RR^+$ because $C(s,s')$ is a real matrix. The series expansion of $\Gc_\Omega$ may be computed in symbolic algebra using~\eqref{eq:expsK} or~\eqref{eq:expsK_nondiag} to evaluate $\ee^{sK}$ as appropriate, and then simplifying to find the stated expressions. The above arguments are independent of whether $K$ is diagonalisable.
 
Finally, if $\kappa=0$, which is equivalent to $K=0$, 
we have $\Tr C(s,s')=\Tr (\ee^{-(s-s')\Omega} \eta)$. The final equality in~\eqref{eq:TrC_kappa0} is standard and
can be seen easily by noting that the $3\times 3$ rotation matrix in the lower right of $\ee^{-s\Omega}$ has eigenvalues $1$ and $\ee^{\pm\ii\|\boldsymbol{\varpi}\|s}$. Multiplication by $\eta$ reverses the sign of these eigenvalues and on summing and adding $1$ for the top left component, the stated formula follows.
\end{proof}

\begin{lemma}\label{lem:D}
For $R(s)=R(0)\ee^{-s\Omega}$ as in~\eqref{eq:RandOmega}, we have
\begin{equation}\label{eq:TrDssp}
    \Tr D(s,s') = -(s-s')^2 [\expc((s-s')K)\eta \ee^{-(s-s')\Omega} \expc((s-s')K)]_{00},
\end{equation}
so $\Tr D(s,s')$ is translationally invariant. If $\kappa>0$, then
\begin{equation}\label{eq:Hdef}
    \Tr D(s,s') = -(s-s')^2 \Hc_\Omega(\kappa^2(s-s')^2)
\end{equation}
for all $s,s'\in\RR$, where $\Hc_\Omega$ is entire analytic, real-valued on $\RR^+$, and has Taylor expansion
\begin{equation}\label{eq:H_Taylor}
    \Hc_\Omega(z) = 1+\frac{z}{12} + 
   \frac{z^2}{360\kappa^2}\left(\kappa^2+4\tau^2+45\varpi_2^2 +15(\varpi_3-\tau)(3\varpi_3-\tau)\right)
    +O(z^3).
\end{equation}
If $\kappa=0$, then
\begin{equation}\label{eq:TrD_kappa0}
    \Tr D(s,s') =-(s-s')^2.
\end{equation}
\end{lemma}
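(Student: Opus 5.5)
The plan is to follow the proof of Lemma~\ref{lem:C} closely, working with matrix expressions relative to the Frenet--Serret tetrad at $s=0$. Write $e_a(s)=[R(s)\ee^{sK}]_a^{\phantom{a}b}e^\FS_b(0)$. By~\eqref{eq:gamma_sol}, $\gamma(s)-\gamma(s')=(s-s')[\expc((s-s')K)\ee^{s'K}]_0^{\phantom{0}c}e^\FS_c(0)$. Substituting into~\eqref{Dab_definition} gives
\begin{equation*}
D(s,s') = -(s-s')^2\, R(s)\ee^{sK}\eta\,(\expc(\Delta K)\ee^{s'K})^T_{\bullet 0}\,(\expc(\Delta K)\ee^{s'K})_{0\bullet}\,\eta\,(\ee^{s'K})^TR(s')^T ,
\end{equation*}
where $\Delta=s-s'$. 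Here the minus sign comes from $(\gamma(s')-\gamma(s))=-(\gamma(s)-\gamma(s'))$. Thus $D$ is $-\Delta^2$ times a rank-one matrix $uv^T$, so $\Tr D=-\Delta^2 v^Tu$.

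Next I would use~\eqref{eq:functional_calc_id} in the form $\eta(\ee^{s'K})^T=\ee^{-s'K}\eta$, together with commutativity of functions of $K$, to cancel the $\ee^{\pm s'K}$ factors. Then $u=R(s)\ee^{\Delta K}\eta\,\expc(\Delta K)^T e_0$ and $v^T=e_0^T\expc(\Delta K)\eta R(s')^T$. Using~\eqref{eq:functional_calc_id} again, $\ee^{\Delta K}\eta\expc(\Delta K)^T=\ee^{\Delta K}\expc(-\Delta K)\eta=\expc(\Delta K)\eta$, since $\ee^{z}\expc(-z)=\expc(z)$. Hence
\begin{equation*}
\Tr D=-\Delta^2[\expc(\Delta K)\eta R(s')^TR(s)\expc(\Delta K)\eta]_{00}.
\end{equation*}
Now insert $R(s)=R(0)\ee^{-s\Omega}$, so that $R(s')^TR(s)=\ee^{s'\Omega}\ee^{-s\Omega}=\ee^{-\Delta\Omega}$. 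The trailing $\eta$ acts trivially on the $00$ entry because $\eta_{00}=1$ and $\eta$ is diagonal. This gives~\eqref{eq:TrDssp}, and translation invariance and independence of $R(0)$ follow.

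For evenness in $\Delta$, I would transpose the scalar and use $\eta\expc(\Delta K)^T=\expc(-\Delta K)\eta$, $\Omega^T=-\Omega$ and $[\Omega,\eta]=0$. This should turn the expression at $\Delta$ into the expression at $-\Delta$, up to factors $\ee^{\pm\Delta K}$ that are absorbed via $\expc(-z)\ee^{z}=\expc(z)$ on the $00$ entries. If that manipulation is awkward, I would instead check evenness directly: the sign flip $(\gamma(s)-\gamma(s'))\leftrightarrow(\gamma(s')-\gamma(s))$ makes $D(s,s')^T=D(s',s)$, so $\Tr D(s,s')=\Tr D(s',s)$, and translation invariance then gives evenness. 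Since the bracket is then an even entire function of $\Delta$, for $\kappa>0$ it equals $\Hc_\Omega(\kappa^2\Delta^2)$ for an entire $\Hc_\Omega$. This function is real on $\RR^+$ because all the matrices are real.

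The Taylor expansion~\eqref{eq:H_Taylor} is a symbolic computation. I would expand $\expc(\Delta K)=I+\Delta K/2+\Delta^2K^2/6+\Delta^3K^3/24+\Delta^4K^4/120$ and $\ee^{-\Delta\Omega}$ to fourth order, multiply, and read off the $00$ entry using the explicit tridiagonal $K$ and the block form of $\Omega$. The odd orders cancel by evenness. This bookkeeping is the main obstacle: many cross terms such as $[K\eta\Omega^2K]_{00}$ and $[K^2\eta\Omega K]_{00}$ must be combined into the compact form $\kappa^2+4\tau^2+45\varpi_2^2+15(\varpi_3-\tau)(3\varpi_3-\tau)$. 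For $\kappa=0$ we have $K=0$ and $\expc(0)=I$, so the bracket is $[\eta\ee^{-\Delta\Omega}]_{00}=1$, because $\Omega$ has vanishing first row and column. This gives~\eqref{eq:TrD_kappa0}.
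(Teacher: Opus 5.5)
Your proposal is correct and follows the paper's proof essentially step by step. The shared steps are the rank-one factorisation of $D(s,s')$ through $v_0v_0^T$, cancelling the $\ee^{\pm s'K}$ factors via~\eqref{eq:functional_calc_id} and $\exp(z)\expc(-z)=\expc(z)$, cyclicity of the trace, evenness, and a symbolic fourth-order expansion; that expansion does reproduce the stated $\mathcal{O}(z^2)$ coefficient. On evenness, the direct transposition produces no stray $\ee^{\pm\Delta K}$ factors at all, and your fallback argument, $D(s,s')^T=D(s',s)$ combined with translation invariance, is valid and slightly quicker than the paper's computation.
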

Note that the dependence on $\Omega$ only appears at $O(z^2)$, as can be seen directly because $[K\eta\Omega]_{00}$,  
$[\eta\Omega K]_{00}$ and $[\Omega^2]_{00}$ each vanish.
If $\boldsymbol{\varpi}=0$ we obtain agreement with (4.10) of~\cite{FewsterThompson2023}. Meanwhile, if $\boldsymbol{\varpi}=\boldsymbol{\varpi}^\FW$, we find
\begin{equation}
 \Hc^\FW(z)   =1+\frac{z}{12}+\frac{\left(\kappa^{2}+4 \tau^{2}\right) z^{2}}{360 \kappa^{2}}+\mathcal{O}(z^3),
\end{equation}
but $\boldsymbol{\varpi}=\boldsymbol{\varpi}^\FW$ is not a critical point of the $\mathcal{O}(z^2)$ term owing to the term $(\varpi_{3}-\tau ) (3 \varpi_{3}-\tau)$.

Lemma~\ref{lem:D} is proved in Appendix~\ref{appx:CD}, along with the following result
on $\Tr CD$.
\begin{lemma}\label{lem:CD}
For $R(s)=R(0)\ee^{-s\Omega}$ as in~\eqref{eq:RandOmega}, we have
\begin{equation}\label{eq:TrCDssp}
\Tr C(s,s')D(s,s') = -(s-s')^2 [\expc((s-s')K)\eta
\ee^{-(s-s')\Omega}
\ee^{(s-s')K}\eta\ee^{-(s-s')\Omega} \expc((s-s')K)]_{00},
\end{equation}
so $\Tr D(s,s')$ is translationally invariant. If $\kappa>0$, then 
\begin{equation}\label{eq:Jdef}
  \Tr C(s,s')D(s,s') = -(s-s')^2\Jc_\Omega(\kappa^2(s-s')^2)
\end{equation}
for all $s,s'\in\RR$, where $\Jc_\Omega$ is entire analytic, real-valued on $\RR^+$, and has Taylor expansion
\begin{equation}
      \Jc_\Omega(z) = 1+ \frac{z}{12} + \frac{\kappa^2-\tau^2}{360\kappa^2}z^2 +
     \mathcal{O}(z^3). \label{eq:J_exp}
\end{equation}
If $\kappa=0$ then 
\begin{equation}\label{eq:TrCD_kappa0}
     \Tr C(s,s')D(s,s') = -(s-s')^2.
\end{equation}
\end{lemma}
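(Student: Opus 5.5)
Set $\Delta=s-s'$. The plan has four steps: derive the matrix formula~\eqref{eq:TrCDssp} exactly, as was done for $C$ in Lemma~\ref{lem:C}; read off translational invariance; settle the cases $\kappa=0$ and $\kappa>0$ separately; and extract the Taylor coefficients.

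First, by~\eqref{eq:Cssp} we have $C(s,s')=R(s)\ee^{\Delta K}\eta R(s')^{T}$. For $D$, write the separation via~\eqref{eq:gamma_sol} as $\gamma(s)-\gamma(s')=\Delta\,[\expc(\Delta K)]_0^{\phantom{0}c}e^\FS_c(s')$. Then
\[
D_{ab}(s,s')=-\Delta^2\, v_a w_b ,
\]
where $v_a$ and $w_b$ are the contractions of the separation with $e_a(s)$ and $e_b(s')$. Expressing $e^\FS(s')$ through $e^\FS(s)$ with $\ee^{-\Delta K}$ and using~\eqref{eq:functional_calc_id}, we get $w=R(s')\ee^{0}\eta\,\expc(\Delta K)^T e_0$-type vectors. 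Concretely, $w^T$ is $[\expc(\Delta K)\eta R(s')^T]_{0\bullet}$, and $v^T$ is $[\expc(\Delta K)\ee^{-\Delta K}\eta R(s)^T]_{0\bullet}=[\expc(-\Delta K)\eta R(s)^T]_{0\bullet}$ up to the sign conventions already fixed in Lemma~\ref{lem:D}. Hence $\Tr(CD)=\sum_{a,b}C_{ab}D_{ba}$ becomes a single $00$ matrix element. The factors combine as $R(s')^TR(s')=I$, and $R(s)=R(0)\ee^{-s\Omega}$, $R(s')^{-1}=\ee^{s'\Omega}R(0)^T$ leave only $\ee^{-\Delta\Omega}$. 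Using $\expc(-z)=\ee^{-z}\expc(z)$ and commuting functions of $\Delta K$, together with $\eta\Omega=\Omega\eta$, I expect to land on
\[
-\Delta^2[\expc(\Delta K)\eta\ee^{-\Delta\Omega}\ee^{\Delta K}\eta\ee^{-\Delta\Omega}\expc(\Delta K)]_{00}.
\]
The main care needed is tracking transposes so that the two $\ee^{-\Delta\Omega}$ factors appear in the right places. This expression depends only on $\Delta$, which gives translational invariance and independence of $R(0)$.

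Second, suppose $\kappa=0$. Then $K=0$ and $\expc(0)=I$, so the bracket is $[\eta\ee^{-\Delta\Omega}\eta\ee^{-\Delta\Omega}]_{00}$. Since $\Omega$ annihilates the $0$ direction, this equals $1$, which gives~\eqref{eq:TrCD_kappa0}.

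Third, suppose $\kappa>0$. Call the bracket $J(\Delta)$; it is entire in $\Delta$ and real. Evenness would follow as in Lemma~\ref{lem:C}: replace $\Delta\to-\Delta$, apply $\ee^{-\Delta K}\expc(\Delta K)=\expc(-\Delta K)$ in reverse and~\eqref{eq:functional_calc_id}, and transpose the whole product, since the $00$ entry of a matrix equals that of its transpose. A more direct route also works: $\Tr(CD)(s,s')$ equals $\Tr(D^TC^T)$, which is the same quantity with $s$ and $s'$ swapped, because $C(s,s')^T=C(s',s)$ and $D(s,s')^T=D(s',s)$ directly from~\eqref{eq:Cab_def} and~\eqref{Dab_definition}. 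Translational invariance then makes it even in $\Delta$. An even entire function of $\Delta$ is an entire function of $\kappa^2\Delta^2$, which gives $\Jc_\Omega$.

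Finally, for the Taylor coefficients I would expand each factor to $O(\Delta^4)$ in powers of $K$ and $\Omega$:
\[
\expc(\Delta K)=I+\tfrac{\Delta}{2}K+\tfrac{\Delta^2}{6}K^2+\dots,\qquad \ee^{-\Delta\Omega}=I-\Delta\Omega+\tfrac{\Delta^2}{2}\Omega^2-\dots
\]
I would then evaluate the $00$ entries using the tridiagonal structure of $K$: $[K]_{0b}=\kappa\delta_{b1}$, $\Omega_{0\bullet}=0$, $\Omega_{\bullet0}=0$. The main obstacle is showing that the $\varpi$-dependence cancels through order $\Delta^4$, which is the surprising feature compared with Lemma~\ref{lem:D}.

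Structurally, the two $\ee^{-\Delta\Omega}$ factors sandwich $\ee^{\Delta K}\eta$. At low order, terms linear in $\Omega$ need a $K$ on each side to reach the spatial block, and these combine into commutators such as $[K\eta,\Omega]$. I would verify the cancellation by symbolic algebra, as the paper does for the other lemmas. As a check, at $\Omega=0$ the expression must reduce to $[\expc(\Delta K)\ee^{\Delta K}\expc(\Delta K)\ldots]$, giving the Frenet--Serret value $1+z/12+(\kappa^2-\tau^2)z^2/(360\kappa^2)$, in agreement with~\cite{FewsterThompson2023}. That value coincides with $\Upsilon$ in~\eqref{trace_geo_series}, a useful consistency test.
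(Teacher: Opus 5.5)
Your proposal is correct and follows the paper's proof. The matrix formula comes from~\eqref{eq:Cssp} and~\eqref{eq:Dssp} by cyclicity of the trace, the $\kappa=0$ case is immediate, and the Taylor coefficients come from symbolic expansion. One small slip: the combination that actually occurs, twice, is $R(s')^TR(s)=\ee^{-\Delta\Omega}$, not $R(s')^TR(s')$. Your second evenness argument, via $C(s,s')^T=C(s',s)$ and $D(s,s')^T=D(s',s)$ combined with translational invariance, is valid and somewhat simpler than the paper's transposition calculation using $\eta v_0=v_0$ and~\eqref{eq:functional_calc_id}.
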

Note that dependence on $\Omega$ only enters at $O(z^3)$ in~\eqref{eq:J_exp}.
 
As a consequence of~\eqref{eq:Jdef},~\eqref{eq:Gdef} and~\eqref{eq:Hdef} is that, for $\kappa>0$,  
\begin{equation}\label{eq:TrCTrDminusTrCD}
    \Tr C(s,s')\Tr D(s,s')- \Tr C(s,s')D(s,s')  = (s-s')^2 \Xi_\Omega(\kappa^2 (s-s')^2), 
\end{equation} 
where the entire analytic function $\Xi_\Omega = \Jc_\Omega- \Gc_\Omega \Hc_\Omega$ is real-valued on $\RR^+$. After simplification, its Taylor expansion about $z=0$ is found to be
\begin{align}
\Xi_\Omega(z) &=   3 + (\kappa^2-4\|\boldsymbol{\varpi}-\boldsymbol{\varpi}^\FW\|^2)\frac{z}{4\kappa^2}
  \nonumber\\ &\qquad 
    + ((3\kappa^2+7\tau^2 -30 \|\boldsymbol{\varpi}-\boldsymbol{\varpi}^\FW\|^2)\kappa^2
    +30\| \boldsymbol{\varpi}-\boldsymbol{\varpi}^\FW\|^4+30
    \|\boldsymbol{\varpi}\times\boldsymbol{\varpi}^\FW\|^2)
    \frac{z^2}{360\kappa^4}
   \nonumber\\ &\qquad   +\mathcal{O}(z^3).
\end{align} 
In the case $\boldsymbol{\varpi}=\boldsymbol{\varpi}^\FW$ we have 
\begin{equation}
    \Xi^\FW(z) = 3 + \frac{z}{4} + \frac{3\kappa^2+7\tau^2}{360\kappa^2}z^2+\mathcal{O}(z^3),
\end{equation}
and in fact the $\mathcal{O}(z^2)$ term has a critical point at $\boldsymbol{\varpi}=\boldsymbol{\varpi}^\FW$.
If $\kappa=0$, we combine~\eqref{eq:TrC_kappa0},~\eqref{eq:TrD_kappa0} and~\eqref{eq:TrCD_kappa0} and use a trigonometric identity to find
\begin{equation}\label{eq:TrCTrDminusTrCD_kappa0}
\Tr C(s,s')\Tr D(s,s')- \Tr C(s,s')D(s,s')  = (s-s')^2\left(3-4\sin^2(\tfrac{1}{2}\|\boldsymbol{\varpi}\|(s-s'))\right).
\end{equation}

\section{QEIs along stationary worldlines}\label{sec:QEIs}

\subsection{Scalar and electromagnetic QEIs for arbitrary frames}

We now compute the point-split energy densities 
$T(s,s')$ and $T_{\textnormal{e.m.}}(s,s')$ using
the calculations in the previous section and the 
identity (see~(4.13) and Appendix~A in~\cite{FewsterThompson2023}) 
\begin{equation}\label{eq:identity}
    \lim_{\epsilon\to 0+} \frac{(s-s')^{2j}}{\sigma_\epsilon(\gamma(s),\gamma(s'))^k}
    = \frac{(-1)^k}{\Upsilon(\kappa^2(s-s')^2)^k} \frac{1}{(s-s'-\ii 0^+)^{2(k-j)}}
\end{equation}
for $k\in\NN$, and $j\in\NN_0$, where $(z-\ii 0^+)^{-r}$
is shorthand for the distributional limit of $(z-\ii \epsilon)^{-r}$ as $\epsilon\to 0+$, and reduces to $z^{-r}$ if $r$ is a nonpositive integer.

In the scalar case, we apply~\eqref{eq:identity} to~\eqref{eq:T_formula_scalar}, combined with~\eqref{eq:Gdef} and~\eqref{eq:Hdef}, to find that $T$ is translationally invariant, 
$T(s,s')=\Tc(s-s')$, with
\begin{equation}\label{eq:Tc_scalar}
    \Tc(s) =  -\frac{\Kc_\Omega(\kappa^2 s^2)}{4\pi^2(s -\ii 0^+)^4},
\end{equation}
where 
\begin{equation}
    \Kc_\Omega(z) = \frac{\Gc_\Omega(z)}{\Upsilon(z)^2} - 4 \frac{\Hc_\Omega(z)}{\Upsilon(z)^3}
\end{equation}
is a meromorphic function, analytic in a neighbourhood of $\RR^+$ and real-valued on $\RR^+$. In the case $\Omega=0$ this agrees with equation~(4.13) of~\cite{FewsterThompson2023}. The Taylor series of $\Kc_\Omega$ at the origin is
\begin{equation}\label{eq:K_scalar}
    \Kc_\Omega(z) = -6 + \frac{A_\Omega}{\kappa^2}z - \frac{B_\Omega}{4\kappa^4}z^2+ \mathcal{O}(z^3),
\end{equation} 
where the constants $A_\Omega$ and $B_\Omega$ may be obtained from the Taylor series for $\Gc_\Omega$, $\Hc_\Omega,$ and $\Upsilon$ as
\begin{align}\label{eq:A_scalar_Omega}
    A_\Omega&= \kappa^2 +\|\boldsymbol{\varpi}-\boldsymbol{\varpi}^\FW\|^2\\ \label{eq:B_scalar_Omega}
    B_\Omega&=  \frac{33\kappa^4+22(\kappa\tau)^2}{90} +\kappa^2\left(\varpi_2^2
    + (\varpi_3-2\tau/3)^2\right) \nonumber\\ &\qquad +\frac{1}{3} \left(
    \|\boldsymbol{\varpi}-\boldsymbol{\varpi}^\FW\|^4+ 2 \kappa^2\|\boldsymbol{\varpi}-\boldsymbol{\varpi}^\FW\|^2 +\|\boldsymbol{\varpi}\times\boldsymbol{\varpi}^\FW\|^2
    \right)
\end{align}
after some manipulation. Again, the results with $\boldsymbol{\varpi}=0$ coincide with those obtained in~\cite{FewsterThompson2023} (see Eqs.~(1.9) and~(1.10))). If $\kappa=0$, one finds 
from~\eqref{eq:TrC_kappa0} and~\eqref{eq:TrD_kappa0} (also noting that $\Upsilon\equiv 1$) that
\begin{equation}\label{eq:Tc_kappa0}
    \Tc(s) =  \frac{4+2\cos(\|\boldsymbol{\varpi}\|s)}{4\pi^2(s-\ii 0^+)^4}.
\end{equation}

In the electromagnetic case, we apply~\eqref{eq:identity} to~\eqref{T_formula}, combined with~\eqref{eq:TrCTrDminusTrCD}, to find that $T_{\textnormal{e.m.}}$ is translationally invariant, $T_{\textnormal{e.m.}}(s,s')=\Tc_\textnormal{e.m.}(s-s')$, with
\begin{equation}\label{eq:Tc_em}
    \Tc_{\textnormal{e.m.}}(s) = 
    -\frac{\Kc_{\textnormal{e.m.},\Omega}(\kappa^2s^2)}{4\pi^2(s-\ii0^+)^4},
\end{equation}
where
\begin{equation}
\Kc_{\textnormal{e.m.},\Omega}(z) =-4\frac{\Xi_\Omega(z)}{\Upsilon(z)^3}
\end{equation}
is meromorphic on $\CC$, analytic in a neighbourhood of $\mathbb{R}^+$ and real-valued on $\RR^+$. For $\kappa>0$, $\Kc_{\textnormal{e.m.},\Omega}$ has Taylor series 
\begin{equation}
    \Kc_{\textnormal{e.m.}}(z)   = -12 + \frac{2A_{\textnormal{e.m.},\Omega}}{\kappa^2}z - \frac{B_{\textnormal{e.m.},\Omega}}{2\kappa^4} z^2  +\mathcal{O}(z^3)
\end{equation}
about $z=0$, with constants that are obtained from the Taylor series of $\Xi_\Omega$ and $\Upsilon$ as
\begin{align}\label{eq:A_em_Omega}
    A_{\textnormal{e.m.},\Omega} &=\kappa^2+2\|\boldsymbol{\varpi}-\boldsymbol{\varpi}^\FW\|^2 \\ \label{eq:B_em_Omega}
    B_{\textnormal{e.m.},\Omega} &= \frac{ 
33\kappa^4+32(\kappa\tau)^2}{90}  +\frac{2}{3}\left(
 \|\boldsymbol{\varpi}-\boldsymbol{\varpi}^\FW\|^4
+2 \kappa^2 \|\boldsymbol{\varpi}-\boldsymbol{\varpi}^\FW\|^2
+  \|\boldsymbol{\varpi}\times\boldsymbol{\varpi}^\FW\|^2
    \right).
\end{align}
The rotational terms in $B_{\textnormal{e.m.},\Omega}$ are minimised where $\boldsymbol{\varpi}-\boldsymbol{\varpi}^\FW$, whereas $B_{\Omega}$ contains a term that is not, ultimately deriving from the $\mathcal{O}(z^2)$ terms in $\Gc_\Omega$ and $\Hc_\Omega$. In the electromagnetic case, they appear multiplied together within $\Xi_\Omega$, and the problematic term cancels out in the $\mathcal{O}(z^2)$ term of $\Gc_\Omega\Hc_\Omega$. We will discuss the optimisation problem further in Section~\ref{sec:optimisation}. If $\kappa=0$, one finds from~\eqref{eq:TrCTrDminusTrCD_kappa0} that
\begin{equation}\label{eq:Tc_em_kappa0}
    \Tc_{\textnormal{e.m.}}(s) = \frac{3-4\sin^2(\tfrac{1}{2}\|\boldsymbol{\varpi}\|(s-s'))}{\pi^2(s-\ii 0^+)^4}.
\end{equation}
 
To simplify equations~\eqref{eq:Tc_scalar},~\eqref{eq:Tc_kappa0},~\eqref{eq:Tc_em}, and~\eqref{eq:Tc_em_kappa0}, 
we use the fact that
\begin{equation}\label{eq:Fovers4}
     \frac{F(\kappa^2 s^2)}{(s-\ii 0^+)^{4}}
=  \frac{F(0)}{(s-\ii 0^+)^{4}} + 
 \frac{\kappa^2 F'(0)}{(s-\ii 0^+)^{2}} + 
\kappa^{4} F_{\textnormal{cc}}(\kappa^2 s^2)
\end{equation}
for any $F\in C^\infty(\RR)$, where the double cardinal function $F_{\textnormal{cc}}$ is the cardinal function of $F_{\textnormal{c}}$,  
\begin{equation}
    F_{\textnormal{cc}}(z) = \begin{cases}
        z^{-2}(F(z)-F(0)-zF'(0)) & z \neq 0\\ \frac{1}{2}F''(0) & z=0.
    \end{cases}
\end{equation}
The identity~\eqref{eq:Fovers4} is proved by writing $F(z)=F(0)+F'(0)z+F_\textnormal{cc}(z)z^2$ and expanding the left-hand side, using $s^{2j}/(s-\ii 0^+)^4= 1/(s-\ii 0^+)^{4-2j}$, which reduces to 
$s^{2j-4}$ if $j\ge 2$. Note that if $F(z)=\mathcal{O}(z)$ as $z\to+\infty$ on the real line, then $F_{\textnormal{cc}}(z)=\mathcal{O}(z^{-1})$ in this limit, and therefore the last term in \eqref{eq:Fovers4} has order $\mathcal{O}(s^{-2})$ as $|s|\to\infty$ in $\RR$. This is true for both $\Kc_{\Omega}$ and $\Kc_{\textnormal{e.m.},\Omega}$ due to the following result, established in Appendix~\ref{appx:asymptotic}.
\begin{lemma}\label{lem:K_asympt} 
As $|s|\to\infty$ with $s\in\RR$, the 
functions $\Kc_\Omega$ and $\Kc_{\textnormal{e.m.},\Omega}$ satisfy 
    \begin{equation}\label{eq:K_scalar_asympt}
    \Kc_{\Omega}(\kappa^2s^2)=  
    \begin{cases}  
    \mathcal{O}(1) & \text{inertial, elliptic worldlines}\\ 
    \mathcal{O}(s^4\ee^{-\chi |s|}) & \text{hyperbolic worldlines}\\ 
         \mathcal{O}(s^{-2}) & \text{parabolic worldlines}
    \end{cases} 
\end{equation} 
and
\begin{equation}\label{eq:K_em_asympt}
    \Kc_{\textnormal{e.m.},\Omega}(\kappa^2s^2)= 
    \begin{cases}  
    \mathcal{O}(1) & \text{inertial, elliptic worldlines}\\ 
    \mathcal{O}(s^5\ee^{-\chi |s|}) & \text{hyperbolic worldlines}\\ 
         \mathcal{O}(s^{-1}) & \text{parabolic worldlines.}
    \end{cases}
\end{equation}
\end{lemma}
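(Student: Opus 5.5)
The plan is to write each of $\Kc_\Omega(\kappa^2 s^2)$ and $\Kc_{\textnormal{e.m.},\Omega}(\kappa^2s^2)$ as a quotient whose numerator and denominator are expressed exactly through the functional calculus of Lemma~\ref{lem:FofK}, and then to estimate each piece separately. From~\eqref{eq:sig_Upsilon}, $\Upsilon(\kappa^2s^2)=-\sigma/s^2$. From Lemmas~\ref{lem:C}, \ref{lem:D} and~\ref{lem:CD}, the functions $\Gc_\Omega$, $\Hc_\Omega$ and $\Jc_\Omega$ are entries or traces of products of $\ee^{\pm sK}$, $\expc(sK)$, $\ee^{-s\Omega}$ and $\eta$. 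Since $\ee^{-s\Omega}$ is a rotation, its entries are bounded uniformly in $s$, so all growth comes from the $K$-dependent factors. These we expand using~\eqref{eq:expsK} and~\eqref{eq:fsK}, or their polynomial versions in the nondiagonalisable and zero cases.

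We then treat the three classes of worldline in turn.

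\emph{Inertial and elliptic ($\chi=0$).} Here $\exp(sK)$ grows at most linearly and $s\,\expc(sK)$ at most linearly in $s$. From~\eqref{eq:Upsilon_exact_a} with $\chi=0$ we get $\Upsilon(\kappa^2s^2)=\tfrac{\kappa^2+\omega^2}{\omega^2}-\tfrac{\kappa^2}{\omega^2}\sinc^2(\omega s/2)$. This lies between $1$ and $(\kappa^2+\omega^2)/\omega^2$, so it is bounded above and away from zero. Hence $\Gc_\Omega=\mathcal{O}(s^2)$, while $\Hc_\Omega$ and $\Jc_\Omega$ (with the $s^2$ prefactor removed) are $\mathcal{O}(1)$ after bounding $\expc$ by $\mathcal{O}(1)$. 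The remaining worry is that the linear terms $sK$ might produce polynomial growth. To rule this out, use $\tfrac{s}{2}\sinc^2(\omega s/2)=\mathcal{O}(s^{-1})$ and $s\,\sinc(\omega s)=\mathcal{O}(1)$, which shows that every entry is $\mathcal{O}(1)$; this gives $\mathcal{O}(1)$ for both $\Kc$'s. The inertial case follows directly from~\eqref{eq:Tc_kappa0} and~\eqref{eq:Tc_em_kappa0}.

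\emph{Hyperbolic ($\chi>0$).} The $\Pi$-component of~\eqref{eq:Upsilon_exact_a} has coefficient $(\kappa^2+\omega^2)/(\chi^2+\omega^2)>0$, so $\Upsilon(\kappa^2s^2)\ge c\,\ee^{\chi|s|}/s^2$ for large $|s|$. The numerators are products of at most two exponential-type factors of size $\ee^{\chi|s|}$. For $\Gc$ this gives $\mathcal{O}(\ee^{\chi|s|})$; for $\Hc$, $\expc(sK)\,\eta\,\ee^{-s\Omega}\expc(sK)$ is $\mathcal{O}(\ee^{2\chi|s|}s^{-2})$; for $\Jc$ there are three exponential factors, giving $\mathcal{O}(\ee^{3\chi|s|})$. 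Dividing $\Gc$ by $\Upsilon^2$ and $\Hc$ by $\Upsilon^3$ yields $\mathcal{O}(s^4\ee^{-\chi|s|})$. For $\Xi=\Jc-\Gc\Hc$ divided by $\Upsilon^3$, the leading $\ee^{3\chi|s|}$ terms must not survive with a polynomial prefactor larger than $s^{-1}$ relative to $\Upsilon^3\sim\ee^{3\chi|s|}s^{-6}$. Careful bookkeeping of the polynomial prefactors, using $\Pi_\pm$ from~\eqref{eq:Pi_pm} to isolate the $\ee^{\pm\chi s}$ parts, should give the stated $s^5\ee^{-\chi|s|}$. I expect this step to be the main obstacle. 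I would first try to show that the top-order $\ee^{3\chi|s|}$ contributions either cancel or carry at most $s^{-1}$ relative weight. For this I would write everything in the $\Pi_+,\Pi_-,I-\Pi$ decomposition and observe that the rank-one projection $\Pi_+$ forces products like $[\Pi_+ M\Pi_+]_{00}$ to factorise.

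\emph{Parabolic.} All matrix functions are polynomials: $\ee^{sK}=I+sK+\tfrac{s^2}{2}K^2$, $\expc(sK)=I+\tfrac{s}{2}K+\tfrac{s^2}{6}K^2$, and $\Upsilon=1+\kappa^2s^2/12$. Degree counting gives $\Gc_\Omega=\mathcal{O}(s^4)$, so $\Gc/\Upsilon^2=\mathcal{O}(1)$, and $\Hc=\mathcal{O}(s^4)$, so $\Hc/\Upsilon^3=\mathcal{O}(s^{-2})$. To improve the $\Gc$ term to $\mathcal{O}(s^{-2})$, the $s^4$ coefficient of $\Gc_\Omega$ must vanish. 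This should follow from $K^3=0$ and $\Tr K^r=0$ (Lemma~\ref{lem:K}(b)), since the top coefficient is $\tfrac14\Tr(\ee^{-s\Omega}K^2\eta)$-type and $K^2\eta$ is a null-type matrix. The same degree counting and leading-coefficient analysis applies to $\Xi/\Upsilon^3$ and gives $\mathcal{O}(s^{-1})$.
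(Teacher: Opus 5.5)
Your architecture matches the paper's proof: bound $1/\Upsilon$, bound the traces through the functional calculus of $K$, and use rank-one factorisation to produce cancellation. Your treatment of $\Upsilon$ and of the scalar hyperbolic case is fine. The gap is the hyperbolic electromagnetic estimate, which is the step carrying the real content: you only announce it, and the criterion you set for it is wrong.

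The leading parts of $\Jc_\Omega$ and $\Gc_\Omega\Hc_\Omega$ are both of order $\ee^{3\chi|s|}s^{-2}$, while $\Upsilon(\kappa^2s^2)^3$ is of order $\ee^{3\chi|s|}s^{-6}$. So any $\ee^{3\chi|s|}$ contribution to $\Xi_\Omega$ that survives, whatever its polynomial prefactor, rules out exponential decay of $\Kc_{\textnormal{e.m.},\Omega}$. ``Carrying at most $s^{-1}$ relative weight'' is not an option. You need exact cancellation at order $\ee^{3\chi|s|}$, and then $\Xi_\Omega=\mathcal{O}(\ee^{2\chi|s|}/|s|)$ for the remainder. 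Three ingredients are missing:
\begin{itemize}
\item[(i)] The sharp forms $\ee^{sK}=\ee^{\chi|s|}\Pi_\pm+\mathcal{O}(1)$ and $\expc(sK)=\frac{\ee^{\chi|s|}}{\chi|s|}\Pi_\pm+\mathcal{O}(1)$ as $s\to\pm\infty$. The remainders must be $\mathcal{O}(1)$ rather than $\mathcal{O}(s)$ to get the power $s^5$; for $\omega=0$ this uses $K(I-\Pi)=0$.
\item[(ii)] The identity $[\Pi_\pm M\Pi_\pm N\Pi_\pm]_{00}=[\Pi_\pm]_{00}\Tr(M\Pi_\pm)\Tr(N\Pi_\pm)$, valid because $\Pi_\pm$ has rank one. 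Applied with $M=N=\eta\ee^{-s\Omega}$, it shows that the leading terms of $\Tr C\,\Tr D$ and $\Tr CD$ both equal $-\chi^{-2}\ee^{3\chi|s|}[\Pi_\pm]_{00}(\Tr\eta\ee^{-s\Omega}\Pi_\pm)^2$.
\item[(iii)] A count of the cross terms giving $\Tr C\,\Tr D-\Tr CD=\mathcal{O}(|s|\ee^{2\chi|s|})$, hence $\Kc_{\textnormal{e.m.},\Omega}=\mathcal{O}(s^5\ee^{-\chi|s|})$.
\end{itemize}

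The parabolic part also needs repair. Your count $\Gc_\Omega=\mathcal{O}(s^4)$ is wrong. $\Tr C=\Tr(\ee^{-s\Omega}\ee^{sK}\eta)$ contains the single polynomial factor $\ee^{sK}=I+sK+\tfrac{s^2}{2}K^2$, so $\Gc_\Omega(\kappa^2s^2)=\mathcal{O}(s^2)$ and $\Gc_\Omega/\Upsilon^2=\mathcal{O}(s^{-2})$ with no cancellation needed. The vanishing you propose would be false anyway: $K^2\eta=\kappa^2vv^T$ with $v=(1,0,-1,0)^T$, so $\Tr(\ee^{-s\Omega}K^2\eta)=\kappa^2v^T\ee^{-s\Omega}v$. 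This equals $2\kappa^2$ at $\Omega=0$, where $\Gc_0(z)=-2+z$.

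Cancellation is genuinely required in the electromagnetic parabolic case instead. There $\Jc_\Omega$ and $\Gc_\Omega\Hc_\Omega$ are both $\mathcal{O}(s^6)$, the same order as $\Upsilon^3$, so the bound $\mathcal{O}(s^{-1})$ requires their $s^6$ coefficients to agree. This follows from the same factorisation, now because $K^2=\kappa^2(1,0,-1,0)^T(1,0,1,0)$ has rank one. The $s^8$ terms of $\Tr CD$ and $\Tr C\,\Tr D$ both equal $-\tfrac{1}{72}s^8[K^2\eta\ee^{-s\Omega}K^2\eta\ee^{-s\Omega}K^2]_{00}$, leaving $\mathcal{O}(s^7)$.

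Finally, in the elliptic case your $\sinc$ estimates only control the $(I-\Pi)$ block. The $\Pi$ block contributes $sK\Pi$ to $\ee^{sK}$ and $\tfrac{s}{2}K\Pi$ to $\expc(sK)$. These are bounded only because $K\Pi=0$ when $\chi=0$, which you should state.
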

In the scalar case with $\kappa>0$, we have thus established the decomposition~\eqref{eq:Tc_general_form_scalar} with $A$ replaced by $A_\Omega$ and where
\begin{equation}
\Tc_\reg(s)= -\frac{\kappa^4}{4\pi^2} (\Kc_\Omega)_\textnormal{cc}(\kappa^2 s^2)
\end{equation}
is smooth, real and even on $\RR$, with $\mathcal{O}(s^{-2})$ fall-off as $|s|\to\infty$ and $\Tc_\reg(0)=B_\Omega/(16\pi^2)$. For $\kappa=0$,~\eqref{eq:Tc_kappa0} gives~\eqref{eq:Tc_general_form_scalar} with
$A_\Omega = \|\boldsymbol{\varpi}\|^2$ and $\Tc_\reg(s) =(2\pi^2)^{-1} \cos_\textnormal{cc}(\|\boldsymbol{\varpi}\|s)$, also being smooth real and even on $\RR$, with $\mathcal{O}(s^{-2})$ decay, and $B=16\pi^2\Tc_\reg(0)=\|\boldsymbol{\varpi}\|^4/3$. Note that
these values for $A_\Omega$ and $B_\Omega$ coincide with the $\kappa=0$ (and hence $\tau=\upsilon=0$, $\boldsymbol{\varpi}^\FW=\boldsymbol{0}$) instances of~\eqref{eq:A_scalar_Omega},~\eqref{eq:B_scalar_Omega}.

Similarly, in the electromagnetic case, $\Tc_\textnormal{e.m.}$ decomposes as in~\eqref{eq:Tc_general_form_em}, with
\begin{equation}
\Tc_{\textnormal{e.m.},\reg}(s)= -\frac{\kappa^4}{4\pi^2} (\Kc_{\textnormal{e.m.},\Omega})_\textnormal{cc}(\kappa^2 s^2)
\end{equation}
being smooth, real and even on $\RR$, with $\mathcal{O}(s^{-2})$ fall-off as $|s|\to\infty$ and $\Tc_{\textnormal{e.m.},\reg}(0)=B_{\textnormal{e.m.},\Omega}/(8\pi^2)$.
For $\kappa=0$,~\eqref{eq:Tc_em_kappa0} also gives a decomposition as in~\eqref{eq:Tc_general_form_em} satisfying the requirements, and with $A_{\textnormal{e.m.},\Omega}=2\|\boldsymbol{\varpi}\|^2$, 
$B_{\textnormal{e.m.},\Omega}=\frac{2}{3}\|\boldsymbol{\varpi}\|^4$, agreeing with the 
$\kappa=0$ instances of~\eqref{eq:A_em_Omega},~\eqref{eq:B_em_Omega}.

Consequently, in all cases, we have derived the massless scalar QEI~\eqref{eq:stationary_scalar_QEI} with constants $A$, $B$ replaced by $A_\Omega$, $B_\Omega$ from~\eqref{eq:A_scalar_Omega},~\eqref{eq:B_scalar_Omega}, and the electromagnetic QEI~\eqref{eq:stationary_em_QEI} with constants $A_\textnormal{e.m.}$, $B_\textnormal{e.m.}$ replaced by $A_{\textnormal{e.m.},\Omega}$, $B_{\textnormal{e.m.},\Omega}$ from~\eqref{eq:A_em_Omega},~\eqref{eq:B_em_Omega}.

\subsection{Optimising the QEIs}\label{sec:optimisation}

The QEIs obtained in the previous subsection depend on the angular velocity of the
frame relative to the Frenet--Serret tetrad. As the energy density must obey all of them, we may optimise over $\boldsymbol{\varpi}$ in order to find the tightest QEI bound in this class. 

The electromagnetic case is straightfoward, because 
it is clear from~\eqref{eq:A_em_Omega} and~\eqref{eq:B_em_Omega}
that $A_{\textnormal{e.m.},\Omega}$ and 
$B_{\textnormal{e.m.},\Omega}$ both attain their global minima at $\boldsymbol{\varpi}=\boldsymbol{\varpi}^\FW$, i.e., using a Fermi--Walker frame, and yielding the values in~\eqref{eq:ABFW}.

In the scalar case,~\eqref{eq:A_scalar_Omega} shows that $A_\Omega$ attains a global minimum $A^\FW=\kappa^2$ at $\boldsymbol{\varpi}=\boldsymbol{\varpi}^\FW$, which compares favourably with the Frenet--Serret value $A^\FS=\kappa^2+\tau^2+\upsilon^2$ found in~\cite{FewsterThompson2023}.
Surprisingly, however, $B_\Omega$ is not globally minimised in the Fermi--Walker frame unless $\tau=0$. This is a consequence of the failure of the $\mathcal{O}(z^2)$ terms in $\Gc_\Omega$ and $\Hc_\Omega$ to have critical points where $\boldsymbol{\varpi}=\boldsymbol{\varpi}^\FW$. Indeed $\partial B_\Omega/\partial\varpi_3=2\kappa^2\tau/3$ at $\boldsymbol{\varpi}=\boldsymbol{\varpi}^\FW$, where
\begin{equation}
    B^\FW = \frac{\kappa^2(33\kappa^2+ 32\tau^2)}{90}
\end{equation}
(distinctly smaller than the Frenet--Serret value $B^\FS$ given in~\eqref{eq:AB_FS}).
Nonetheless,~\eqref{eq:B_scalar_Omega} exhibits $B_\Omega$ as a constant plus a sum of squares depending on $\boldsymbol{\varpi}$ (though not simultaneously vanishing unless $\tau=0$) thus providing a lower bound on its global minimum value $B_\textnormal{min}$. On the other hand, $B_\textnormal{min}$ is bounded above by the value of $B_\Omega$ at $\boldsymbol{\varpi}=\boldsymbol{\varpi}^\FW$. If $\tau>0$, we deduce that $B_\textnormal{min}$ obeys
\begin{equation}\label{eq:B_bounds}
    \frac{\kappa^2(33\kappa^2+22\tau^2)}{90} < B_\textnormal{min} <
    \frac{\kappa^2(33\kappa^2 + 32\tau^2)}{90},
\end{equation}
while the global minimum is $B_\textnormal{min}=11\kappa^4/30$ if $\tau=0$, i.e., for the cases of inertial motion or uniform linear acceleration. In the light of~\eqref{eq:B_bounds}, a useful figure of merit is 
\begin{equation}
    \beta = \frac{90 B_\textnormal{min}-33\kappa^4}{\kappa^2\tau^2},
\end{equation}
i.e., the effective coefficient of $(\kappa\tau)^2$ in $B_\textnormal{min}$,
which is bounded by $22<\beta<32$ for $\tau>0$ by~\eqref{eq:B_bounds}.

More detailed analysis, explained in Appendix~\ref{appx:optimisation},
shows that $\beta$ can be expressed solely in terms of the
dimensionless parameters $t=\tau/\kappa$ and $u=\upsilon/\kappa$, and obeys the sharpened bound $26<\beta<32$, with the lower limit approached as $t=\tau/\kappa$ and $u=\upsilon/\kappa$ tend to zero, and the upper bound approached, for example, as $|\upsilon|\to\infty$ with $\kappa,\tau$ fixed. In the former regime,
one finds
\begin{equation}\label{eq:beta_small}
    \beta = 26+ \frac{6}{125}(t^2+25u^2) + \mathcal{O}((t^2+u^2)^{2}) 
\end{equation}
and hence
\begin{align}\label{eq:Bmin_small}
    B_\textnormal{min}&=\frac{1}{90}\left(
33 \kappa^{4}+26 \kappa^{2} \tau^{2}+\frac{6 \tau^{4}}{125}+\frac{6 \tau^{2} \upsilon^{2}}{5}
\right)
+\mathcal{O}\left(\frac{(\tau^2+\upsilon^2)^3}{\kappa^2}\right)
\end{align}
as $\tau/\kappa,\upsilon/\kappa\to 0$. The minimum is achieved for $\boldsymbol{\varpi}$ given by 
\begin{equation}\label{eq:optvarpi_small}
   \varpi_1=\upsilon\left(1+\mathcal{O}\left(\frac{\tau^2+\upsilon^2}{\kappa^2} \right)\right),\qquad 
   \varpi_2=0,\qquad 
   \varpi_3=\frac{4\tau}{5}\left(1+\mathcal{O}\left(\frac{\tau^2+\upsilon^2}{\kappa^2} \right)\right),
\end{equation}
in which the value of $\varpi_2$ is exact.
For example, parabolic stationary worldlines correspond to $t=1$, $u=0$. Numerically, we find
$\varpi_1=0$ and $\varpi_3=0.8030555623\tau$, and
$\beta= 26.04653366$, while whereas the approximation just given would give $\varpi_1=0$, $\varpi_3=0.8\tau$ and $\beta=26.048$.

In the regime where $\tau^2+\upsilon^2\to\infty$ with $\tau/|\upsilon|$ bounded away from zero, it is shown in Appendix~\ref{appx:optimisation} that the minimum is achieved for
\begin{equation}\label{eq:varpi_ultraloxo} 
    \varpi_1 \sim \upsilon\left(1 -\frac{(\kappa\tau/\sqrt{2})^{2/3}}{(\tau^2+\upsilon^2)^{2/3}} \right) ,\qquad  \varpi_2=0,\qquad
    \varpi_3 \sim \tau\left(1 -\frac{(\kappa\tau/\sqrt{2})^{2/3}}{(\tau^2+\upsilon^2)^{2/3}}\right),
\end{equation}
giving
\begin{equation}\label{eq:beta_ultraloxo}
    \beta\sim 32 -45 \left(\frac{\kappa\tau/\sqrt{2}}{\tau^2+\upsilon^2}\right)^{2/3}
\end{equation}
as $|u|\to\infty$ with $t>0$ constant. Thus
\begin{equation}\label{eq:Bmin_ultraloxo}
    B_\textnormal{min}\sim \frac{\kappa^2}{90}
    \left(33\kappa^2 + 32\tau^2  -45 \tau^2\left(\frac{\kappa\tau/\sqrt{2}}{\tau^2+\upsilon^2}\right)^{2/3}
    \right).
\end{equation}

\begin{figure}
    \centering
    \includegraphics[width=0.45\linewidth]{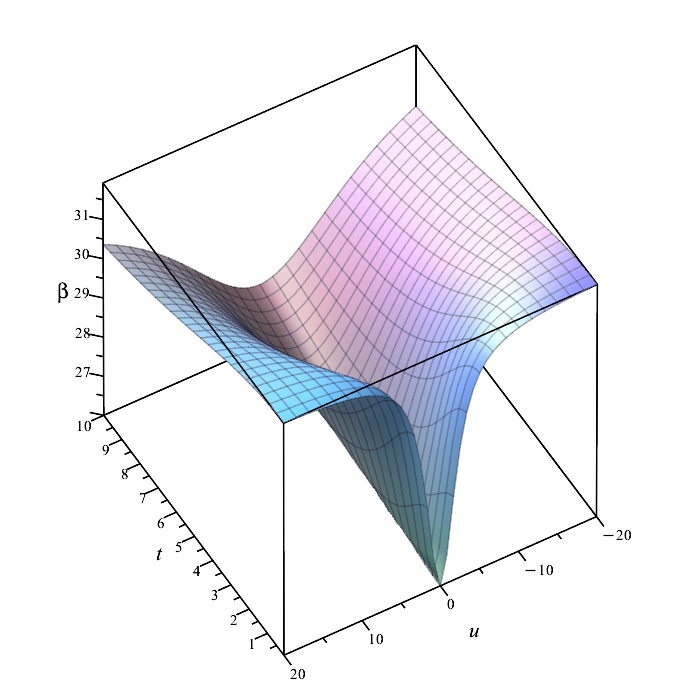}\hfil
    \includegraphics[width=0.45\linewidth]{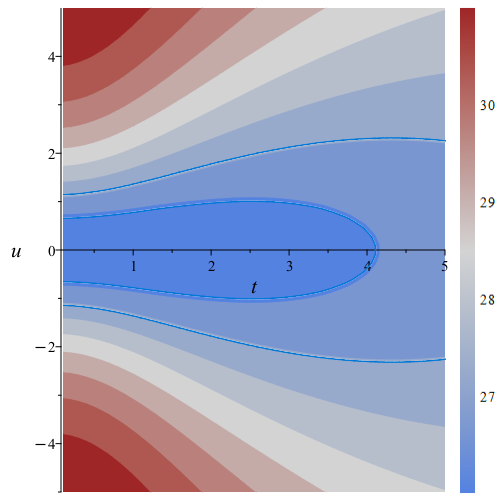}
    \caption{Plots of $\beta = (90 B_\textnormal{min}-33\kappa^4)/(\kappa\tau)^2$ against $t=\tau/\kappa$ and $u=\upsilon/\kappa$. Left-hand image: a three-dimensional plot illustrating the fact that $\beta\to 26$ as $t,u\to 0$ and $\beta\to 32$ as $t$ and $u$ become large. Right-hand image: a heatmap plot of $\beta$ showing the existence of a region near $t=u=0$ in which $\beta\approx 26$.}
    \label{fig:min}
\end{figure}

Plots of $\beta$, obtained by numerical minimisation, against $t=\tau/\kappa$ and $u=\upsilon/\kappa$ are given in Fig.~\ref{fig:min}, and clearly indicates the behaviours just described for small $t, u$ and $t,u\to \infty$ with fixed $t/|u|$ bounded away from zero. From the heatmap (right-hand image) one also can see that the approximation $\beta\approx 26$ is valid over a region near $t=u=0$ that encompasses all of the catenary worldlines ($0<t<1$, $u=0$), the parabolic worldlines ($t=1$, $u=0$) and portions of the elliptic ($t>1$, $u=0$) and loxodromic ($t,u>0$) parameter space. Some more details concerning these plots are given in Appendix~\ref{appx:optimisation}.
   
The value of $A$ is not minimised for the frame that optimises $B$; instead we have $A\approx \kappa^2+\tau^2/25$ for small $\tau/\kappa,\upsilon/\kappa$, or
\begin{equation}
    A\approx \kappa^2 + \frac{(\kappa\tau/\sqrt{2})^{4/3}}{(\tau^2+\upsilon^2)^{1/3}}
\end{equation}
in the limit $\tau^2+\upsilon^2\to\infty$ with $\tau/|\upsilon|$ bounded away from zero. However, it is close to its minimum value in each case.

\section{Example: loxodromic worldlines}\label{sec:loxodromic}

In this section we compute the energy density of a specific state along the loxodromic worldline given by
\begin{equation}\label{eq:loxo}
    \gamma(s) = \left(\frac{C}{\chi}\sinh(\chi s),
    \frac{C}{\chi}\cosh(\chi s),\frac{V}{\omega}\cos(\omega s),\frac{V}{\omega}\sin(\omega s)\right),
\end{equation}
(see Section~5.6 of~\cite{FewsterThompson2023} for the corresponding Frenet--Serret tetrad) where $\chi$ and $\omega$ are related to the curvatures as in~\eqref{eq:chi} and~\eqref{eq:omega}, while $C>1$ and $V\in\RR$ are related by $C^2-V^2=1$ and $C$ is related to the curvatures by 
\begin{equation}
C^2 = \frac{\kappa^2+\tau^2+\upsilon^2+R}{2R},
\end{equation}
with $R$ as in~\eqref{eq:R}.
One has
\begin{equation}
\kappa = \sqrt{C^{2} \chi^{2}+V^{2} \omega^{2}},\qquad
\tau = 
\frac{\left(\chi^{2}+\omega^{2}\right) C |V|}{\sqrt{C^{2} \chi^{2}+V^{2} \omega^{2}}},\qquad
\upsilon = \frac{\chi  \omega}{\sqrt{C^{2} \chi^{2}+V^{2} \omega^{2}}}.
\end{equation}
Recall that all stationary worldlines with the same curvatures can be transformed into one another by proper orthochronous Poincar\'e transformations, so there is no loss of generality in considering loxodromic 
worldlines of the form~\eqref{eq:loxo}. An important property of this curve is that its causal hull is the open right wedge
\begin{equation}\label{eq:cvxhull}
\bigcup_{s_1,s_2\in\RR} J^+(\gamma(s_1))\cap J^-(\gamma(s_2))=    \Wc: = \{(t,x,y,z)\in\RR^4: x>|t|\}.
\end{equation}
For it is clear that the left-hand side is a subset of the right, because $\gamma(s)\in\Wc$ and $\Wc$ is causally convex. Conversely, if $p\in\Wc$ then 
\begin{align}
    \eta(p-\gamma(s),p-\gamma(s)) &= 
    t^2-x^2+\left(\frac{C}{\chi}\right)^2
    +\frac{C}{\chi}\left((x-t)\ee^{\chi s}+(x+t)\ee^{-\chi s}\right) \nonumber\\ &\qquad
    -(y-V\omega^{-1}\cos\omega s)^2 -(z-V\omega^{-1}\sin\omega s)^2,
\end{align}
where $p=(t,x,y,z)$, and this expression tends to $+\infty$ as $|s|\to\infty$ because $x>|t|$. Thus $p$ is eventually in $J^+(\gamma(s))$ as $s\to-\infty$, and $J^-(\gamma(s))$ as $s\to+\infty$, establishing~\eqref{eq:cvxhull}

Any Hadamard state of the scalar or electromagnetic fields on $\Wc$ must obey the Minkowski QEIs along any curve contained within $\Wc$. This is because the QEIs are local and covariant as discussed in~\cite{QEI2006FewsterPfenning}; see also the discussion in Section~6 of~\cite{FewsterThompson2023}. In particular, this applies to the Fulling--Rindler vacuum, i.e., the ground state for the boost Killing vector associated with $\Wc$, for which the renormalised stress-energy tensor is known in both the scalar and electromagnetic cases. The result for the massless minimally coupled scalar field is\footnote{See, e.g., Appendix D of~\cite{FewsterThompson2023} which uses the Wightman functions derived in~\cite{MorettiVanzo:1996}; the calculation is standard, but note that many references follow~\cite{CandelasDeutsch:1977} and consider the conformally coupled stress-energy tensor rather than the minimally coupled one used here.} 
\begin{equation}
    \langle T_{\mu\nu}\rangle = -\frac{1}{1440\pi^2\xi^4}\left(14v_\mu v_\nu+ 30 w_\mu w_\nu+ 19\eta_{\mu\nu}\right),
\end{equation} 
while the electromagnetic result was computed by Candelas and Deutsch~\cite{CandelasDeutsch:1977} as
\begin{equation}
    \langle T_{\mu\nu}\rangle_\textnormal{e.m.} =  -\frac{11}{720\pi^2\xi^4}\left(4v_\mu v_\nu-\eta_{\mu\nu}\right).
\end{equation} 
In each case, $\xi=\sqrt{x^2-t^2}$, while 
\begin{equation}
    v^\bullet =\frac{1}{\sqrt{x^2-t^2}} (x,t,0,0),\qquad
    w^\bullet =\frac{1}{\sqrt{x^2-t^2}} (t,x,0,0),
\end{equation}
are unit vectors parallel to the $4$-velocity and $4$-acceleration of the curve through $(t,x,y,z)$ on which $\xi$, $y$, and $z$ are all constant. 
For the worldline~\eqref{eq:loxo} with $u=\dot{\gamma}$, one finds that $v_\lambda u^\lambda = C$, $w_\lambda u^\lambda=0$, and $\xi=C/\chi$, so the energy density measured along the worldline~\eqref{eq:loxo} takes the constant value
\begin{equation}
\langle \rho\rangle = \langle T_{\mu\nu}u^\mu u^\nu\rangle = -\frac{(14 C^2+19)\chi^4}{1440\pi^2 C^4}
\end{equation}
in the scalar case and
\begin{equation}\label{eq:rhoem}
\langle \rho\rangle_\textnormal{e.m.} = \langle T_{\mu\nu}u^\mu u^\nu\rangle_\textnormal{e.m.} =  -\frac{(44 C^2-11)\chi^4}{720\pi^2 C^4}
\end{equation}
for electromagnetism. These exact values can be compared with the lower bounds $-B_\textnormal{min}/(16\pi^2)$ and $-B^\FW_\textnormal{e.m.}/(8\pi^2)$ given by the QEIs, as a consistency check.

As $C^2\chi^2\le \kappa^2$ and hence $\chi^4/C^4\le \kappa^4/C^8$ one has
\begin{equation}
\langle \rho\rangle \ge    
-\frac{33\kappa^4}{1440\pi^2} \frac{14C^2+19}{33C^8}\ge
    -\frac{33\kappa^4}{1440\pi^2}
\end{equation}
and
\begin{equation}
    \langle \rho\rangle_\textnormal{e.m.} \ge -\frac{33\kappa^4}{720\pi^2} \frac{44C^2-11}{33C^8}\ge
    -\frac{33\kappa^4}{720\pi^2},
\end{equation}
because $C^2\ge 1$. Comparing with~\eqref{eq:constantbound} and~\eqref{eq:constantbound_em}, we see that both QEIs are satisfied for all values of the curvature parameters, and with room to spare except in the limit $\tau\to 0$, in which they become sharp. This is the expected behaviour, because it is the limit in which the curve becomes a uniformly accelerated trajectory.

The fact that any stationary worldline is an orbits of a timelike Killing vector field was used by Letaw and Pfautsch~\cite{LetawPfautsch:1982} to construct an associated coordinate system covering a patch of Minkowski spacetime. The loxodromic curve~\eqref{eq:loxo} belongs to `class F' in~\cite{LetawPfautsch:1982}, and is associated with rotating pseudocylindrical coordinates. These coordinates are used in~\cite{LetawPfautsch:1981} to 
construct a stationary state for the minimally coupled Klein--Gordon equation with mass $M\ge 0$ by canonical quantisation, resulting in the Fulling--Rindler vacuum associated with the wedge $\Wc$. From this perspective it is clear from the start that the energy density of the Fulling--Rindler vacuum will be constant along the loxodromic worldlines. However, it is an open question whether the Fulling--Rindler vacuum is the Hadamard state with the lowest possible constant energy density along the loxodromic curves. If it is, then the results presented above would indicate that the QEI is not sharp, even in the limit of long sampling times.

\section{Conclusion}\label{sec:conclusion}

We have computed closed form expressions for the quantum energy inequalities for the massless scalar and electromagnetic fields along stationary worldlines in Minkowski spacetime, using tetrads rotating with arbitrary fixed angular velocity relative to the Frenet--Serret frame. For vanishing angular velocity, this reproduces the results of~\cite{FewsterThompson2023} for the scalar field. Our method differs from that of~\cite{FewsterThompson2023} in that we focus on the Frenet--Serret description of the stationary worldlines and obtain many of our results using functional calculus applied to the matrix of curvature invariants $K$. By contrast,~\cite{FewsterThompson2023} only considered the Frenet--Serret frame, and characterised the worldlines using the evolution of the velocity vector under a $1$-parameter Lorentz subgroup, using ad hoc arguments to deduce that various quantities in the point-split energy density, such as the traces $\Tr C(s,s')$ and $\Tr D(s,s')$ are translationally invariant. Our current method is more general, systematic and direct, and may also be useful in other calculations for QFT in relation to stationary worldlines which have traditionally proceeded on a case-by-case basis (see, e.g.,~\cite{Letaw:1981,LetawPfautsch:1981,GoodJuarezAubryetal:2020} and also~\cite{FewsterThompson2023}) but which is largely avoided here. 
The partial exception is the asymptotic behaviour discussed in Appendix~\ref{appx:asymptotic}, where results divide into three cases, covering the inertial and elliptic worldlines, the hyperbolic worldlines, and the parabolic worldline.

We were then able to optimise over the angular velocity to find the strongest possible QEI among these possibilities. For electromagnetism, the optimal bound arises precisely for tetrads invariant under Fermi--Walker transport along the worldline. The QEI obtained in this case is exactly twice the QEI for massless scalars, also using a Fermi--Walker tetrad. These are natural results, given the two polarisation degrees of freedom in electromagnetism, and it is also natural to employ a Fermi--Walker tetrad, which represents the rods and clocks of a nonrotating laboratory following the worldline. Indeed, this choice was proposed long ago in~\cite{QEI2006FewsterPfenning} as a means to ensure that the QEIs are locally covariant. In this light it is somewhat surprising to discover, as we have shown, that the scalar QEI is not optimised for Fermi--Walker tetrads, although it is a good approximation at high values of torsion $\tau$ and hypertorsion $\upsilon$ relative to the curvature $\kappa$. When $\tau/\kappa$ and $\upsilon/\kappa$ are small, however, the constant term in the QEI is minimised when the tetrad rotates with angular velocity approximately $(\upsilon,0,4\tau/5)$ relative to the Frenet--Serret tetrad, while the the Fermi--Walker tetrad has angular velocity  $(\upsilon,0,\tau)$. This regime includes the catenary and parabolic curves, and regions within the parameter space of elliptic and loxodromic curves (see Fig.~\ref{fig:min}).

One might wonder why the QEI depends on the frame at all. After all, the energy density along the worldline depends on its four-velocity but not the spatial parts of the tetrad. The reason is that the QEIs of~\cite{GenQEI2000CFewster,FewsterPfenning:2003} proceed by decomposing the energy density as a sum of squares using the tetrad, and then summing a QEI for each term separately. In other words, one assumes the worst case for each component, whereas in the actual energy density negative fluctuations in one component can be compensated by positive fluctuations in another. The component QEI bounds are increasingly negative when the tetrad rotates relative to the Fermi--Walker case, as seen from the expressions~\eqref{eq:A_scalar_Omega},~\eqref{eq:B_scalar_Omega},~\eqref{eq:A_em_Omega}, and~\eqref{eq:B_em_Omega} for $A_\Omega$, $B_\Omega$, $A_{\textnormal{e.m.},\Omega}$, and $B_{\textnormal{e.m.},\Omega}$. One may think of this as a type of Unruh effect: 
a detector coupled to a quantum field with a dipole moment can be excited not only by the detector's acceleration, but also by the angular velocity of the dipole around the worldline.  

The difference between the scalar and electromagnetic fields can be seen in two ways. First, the number of polarisations enters as a constant multiplying some common rotational terms. Second, the scalar case exhibits additional terms proportional to $\varpi_2^2$ and $(\varpi_3-2\tau/3)^2$ in  $B_{\Omega}$, and the contribution of $(\kappa\tau)^2$ enters with a different coefficient. As the Fermi--Walker tetrad has $\varpi_2=0$, it is the term in $(\varpi_3-2\tau/3)^2$ that is responsible for moving the global minimum of $B_{\Omega}$ away from the Fermi--Walker tetrad. This term has its origins in the structure of the stress-energy tensors and point-split energy densities. The problematic term in the scalar field case can be traced to the $\mathcal{O}(z^2)$ terms in $\Gc_\Omega$ and $\Hc_\Omega$; but in the electromagnetic case the problematic parts cancel out, due to Lemma~\ref{lem:trace_squared_C} and the fact that the remaining contributions involve the product $\Gc_\Omega\Hc_\Omega$ rather than the combination $\Gc_\Omega\Upsilon-4\Hc_\Omega$ appearing in the scalar case. However, we do not have a simple high-level explanation of the difference.

In any case, by optimising the QEI bounds, we remove the rotational contributions that are present if the Frenet--Serret tetrad is considered as in~\cite{FewsterThompson2023}, resulting in much more stringent QEIs. While the Frenet--Serret value $B^\FS$ grows with the fourth power of both torsion and hypertorsion, the bounds obtained here are independent of hypertorsion and grow only quadratically in the torsion. As described in the introduction, this more severely restricts the values of constant energy densities that can be sustained along stationary worldlines, and also the timescales on which the energy density can fluctuate below these bounds. It remains open, however, as to whether there really are states that come close to these bounds, except for the inertial and uniformly linearly accelerated worldlines. The attempt by Letaw and Pfautsch~\cite{LetawPfautsch:1981} to canonically associate ground states of the scalar field with coordinate systems based on stationary worldlines resulted in either the Minkowski vacuum, for those worldlines without horizons (inertial, elliptic and parabolic), or a Fulling--Rindler vacuum state on an appropriate wedge otherwise (the three hyperbolic cases). However this does not exclude the possibility that Hadamard states might exist on portions of Minkowski spacetime that saturate the bound in a limit, or simply that there are non-ground states with large constant negative energy densities. As a consistency check, however, we have verified that the QEIs are satisfied along loxodromic curves in the Fulling--Rindler vacuum state determined by the associated wedge region. 
 
Finally, we mention some possible open directions. By similar methods it should be possible to compute analogous QEIs for massless scalar or suitable $p$-form fields~\cite{Pfenning:2009} for stationary worldlines in any even-dimensional Minkowski spacetime and possibly de Sitter spacetimes -- see~\cite{Bunney:2024} for a classification of stationary worldlines in higher dimensions. It would also be possible to study the analogous problem for massless Dirac and Rarita--Schwinger fields using the QEIs obtained in~\cite{FewsterVerch:2001,FewsterMistry:2003,DawsonFewster:2006} and~\cite{YuWu:2004}.
Lastly, an extension to massive fields would be of interest but would certainly require a more significant change to the method.

\paragraph{Acknowledgements} 
CJF thanks Aron Wall for raising the question
of whether the spin of a field might change the QEI along stationary worldlines, and Benito Ju\'arez-Aubry for a stylistic comment. The work of CJF is supported in part by EPSRC Grant EP/Y000099/1 to the University of York.  
For the purpose of open access, the authors have applied a creative commons attribution (CC BY) licence to any author accepted manuscript version arising.
 
\appendix
\section{Derivation of inequality~\eqref{eq:eigenproblem_estimate}}
\label{appx:eigenvalues}
 
Consider a QEI of the general form
\begin{equation}\label{eq:pattern_QEI}
     \int \dd s\, |g(s)|^2\, \langle\rho(\gamma(s))\rangle_\omega \ge -\frac{N}{16\pi^2}\left( \|g''\|^2 + 2A\|g'\|^2 + B\|g\|^2\right)
\end{equation}
for constants $A$, $B$ and $N=1$ for scalars or $N=2$ for electromagnetism, holding for all Hadamard states of the relevant theory and all $g\in C_0^\infty(\RR)$. We will prove a generalised version of~\eqref{eq:eigenproblem_estimate}, drawing on~\cite{FewsterTeo:2000,QEI2006FewsterPfenning}.

Let $I\subset\RR$ be any bounded open interval, with length $T$, and restrict attention to $g\in C_0^\infty(I)$. Then~\eqref{eq:pattern_QEI} implies that
\begin{equation}
    \|g\|^2 \sup_{s\in I} \langle\rho(\gamma(s))\rangle_\omega \ge \int_{I} \dd s\, |g(s)|^2\, \langle\rho(\gamma(s))\rangle_\omega \ge -\frac{N}{16\pi^2}\left( \|g''\|^2 + 2A\|g'\|^2 + B\|g\|^2\right)
\end{equation}
holds for all Hadamard states $\omega$ and all $g\in C_0^\infty(I)$. Rearranging, this implies that
\begin{equation}
     \sup_{s\in I} \langle\rho(\gamma(s))\rangle_\omega \ge -\frac{NB}{16\pi^2} -\frac{N}{16\pi^2}\inf_g \left(\|g''\|^2 + 2A\|g'\|^2\right) 
\end{equation}
with the infimum taken over $g\in C_0^\infty(I)$ with $\|g\|=1$. By standard Rayleigh--Ritz arguments, the infimum equals the minimum eigenvalue $\mu_I(A)$ of the differential operator 
\[
\Lc_I = \frac{\dd^4}{\dd s^4} - 2A\frac{\dd^2}{\dd s^2}
\]
on $I$, subject to boundary conditions $g=g'=0$ at $\partial I$. More precisely, $\mu_I(A)$ is the infimum of the spectrum of the self-adjoint operator formed as the Friedrichs extension of $\Lc_I$ on $C_0^\infty(I)\subset L^2(I,\dd s)$, whose domain corresponds to the boundary conditions just mentioned. By a change of variables argument we see easily that $\mu_I(A) = \mu(A T^2)/T^4$, where $\mu(a)$ is the minimum eigenvalue (in the same sense as before) of $\dd^4/\dd x^4 -2a\dd^2/\dd x^2$ on $(-\tfrac{1}{2},\tfrac{1}{2})$ with boundary conditions $g(\pm \tfrac{1}{2})=g'(\pm \tfrac{1}{2})=0$.  
The $N=1$ case is then precisely the inequality~\eqref{eq:eigenproblem_estimate}.

By solving the ordinary differential equation $g''''-2Ag''-\lambda^2 g=0$ 
on $(-\tfrac{1}{2},\frac{1}{2})$, subject to $g=g'=0$ at $\pm\tfrac{1}{2}$, one finds that eigenfunctions are either even or odd, with the eigenvalues of the even eigenfunctions given by those $\lambda^2>0$ solving
\begin{equation}\label{eq:evaleqn}
    \tanh  \left(\frac{\sqrt{a +\sqrt{a^{2}+\lambda^{2}}}}{2}\right)+\frac{\sqrt{\sqrt{a^{2}+\lambda^{2}}-a}}{\sqrt{a +\sqrt{a^{2}+\lambda^{2}}}}
     \tan  \left(\frac{\sqrt{\sqrt{a^{2}+\lambda^{2}}-a}}{2}\right)=0.
\end{equation}
The eigenvalues of the odd eigenfunctions interleave those of the even eigenfunctions, and the lowest eigenvalue has an even eigenfunction. Thus $\mu(a)$ is the smallest strictly positive value of $\lambda^2$ solving~\eqref{eq:evaleqn}. Because the hyperbolic tangent is positive, and both terms are strictly increasing except at the singularities of the tangent, there is a unique solution to~\eqref{eq:evaleqn} in each interval 
on which the tangent runs over $(-\infty,0)$. The smallest solution is therefore is bounded so that
\begin{equation}
    \frac{\pi}{2}< \frac{\sqrt{\sqrt{a^{2}+\lambda^{2}}-a}}{2}
    <\pi
\end{equation}
or 
\begin{equation}
    (\pi^2+a)^2-a^2<\lambda^2< (4\pi^2+a)^2-a^2
\end{equation}
so $\lambda^2=\mathcal{O}(a)$ for large $a$. In this regime, the factor before the tangent is $\mathcal{O}(a^{-1/2})$ which means that the tangent should be large and negative, i.e., 
$\mu(a)\sim 2\pi^2 a$ for $a\gg \pi^2$. (The operator $\dd^4/\dd x^4$ is effectively acting as a perturbation of the operator $-2a\dd^2/\dd x^2$ with Dirichlet boundary conditions.) The solution for $a=0$ is $\mu(0)=500.5639018$ (thus $\mu(0)/(16\pi^2)=3.169857937$) and 
by the implicit function theorem we deduce that $\mu(a)$ is smooth near $a=0$ and in particular obeys $\mu(a)\to \mu(0)$ as $a\to 0$. Moreover $\mu(a)$ is monotone increasing in $a$, because $-\dd^2/\dd x^2$ is nonnegative. 

In~\cite{QEI2006FewsterPfenning}, section III.B, a similar method was used to analyse the scalar field on uniformly accelerated trajectories, using the specific dependence of $A$ and $B$ on the acceleration (and using the Fermi--Walker tetrad, which is equivalent to the Frenet--Serret in this case). There, however, the emphasis was on the magnitude and duration of periods in which the energy density drops below zero, rather than below the long-time limit $-B/(16\pi^2)$. 

\section{Proof of some lemmas}\label{appx:proof_of_lems}

We prove Lemmas~\ref{lem:trace_squared_C},~\ref{lem:K1},~\ref{lem:K}, and~\ref{lem:FofK}.

\subsection{$\Tr C(s,s')^2=(\Tr C(s,s'))^2$}
\label{appx:trace_squared_C}

The general line of argument in this subsection, and particularly the observation that $C$ is an improper Lorentz matrix, was suggested by an AI tool. However, the actual proof was generated and verified by the authors.
\begin{proof}[Proof of Lemma~\ref{lem:trace_squared_C}]
Recall that $C$ was defined by $C_{ab}(s,s') = \eta_{\mu\nu}e_a^{\mu}(s) e_b^\nu(s')$. As any two continuously related tetrads are related by a proper orthochronous Lorentz transformation, one has $e_a^\mu(s)=L(s,s')_a^{\phantom{a}c}e_c(s')$ for some smooth $L:\RR^2\to \Lor_{+}^\uparrow(1,3)$. Thus $C_{ab}(s,s') = L(s,s')_{a}^{\phantom{a}c}C_{cb}(s',s')= L(s,s')_{a}^{\phantom{a}c}\eta_{cb}$, which implies that each $C_{ab}(s,s')$ is the matrix of an improper orthochronous Lorentz transformation.

    As $C(s,s')$ is an improper Lorentz matrix, its algebraic eigenvalues are $1$, $-1$, $\lambda(s,s')$ and $\lambda(s,s')^{-1}$ for some $\lambda(s,s')\in\CC$ (with $\lambda(s,s)=-1$ for all $s$). The argument for this is essentially standard but we give it for completeness: any improper Lorentz matrix $L$ obeys $L^T\eta L=\eta$ and $\det L=-1$. Then $L^T\eta (L-\mu I)=\mu (L^T-\mu^{-1}I)\eta$ and so $\det(L-\mu I)=\mu^4 \det(L^T-\mu^{-1}I)/\det L^T=\mu^4 (\det L)\det(L-\mu^{-1}I)=-\mu^4 \det (L-\mu^{-1}I)$. Substituting $\mu=\pm 1$ yields a contradiction unless $\det (L-\mu I)=0$, so we deduce that $L$ has $\pm 1$ as eigenvalues. On the other hand, we also deduce that if $\mu$ is an eigenvalue then so is $\mu^{-1}$. Thus if $L$ has an eigenvalue other than $\pm 1$, it has four distinct eigenvalues $+1$, $-1$, $\mu$, $\mu^{-1}$. If all the roots of the characteristic polynomial are $\pm 1$, then either $1$ has algebraic multiplicity $1$ and $-1$ has algebraic multiplicity $3$, or vice versa, because the product of all algebraic eigenvalues is $\det L=-1$.
    Accordingly, the algebraic eigenvalues are again of the form
    $+1$, $-1$, $\mu$, $\mu^{-1}$, with $\mu\in\{\pm 1\}$.

    Summing the algebraic eigenvalues with multiplicity gives $\Tr C(s,s')=\lambda(s,s')+\lambda(s,s')^{-1}$.
    Meanwhile, $C(s,s')^2$ has algebraic eigenvalues $1$, $1$, $\lambda(s,s')^2$ and $\lambda(s,s')^{-2}$, so 
    $\Tr C(s,s')^2=\lambda(s,s')^2+2+\lambda(s,s')^{-2}=(\Tr C(s,s'))^2$.    
\end{proof} 

\subsection{Spectrum and (non)diagonalisability of $K$}  
    
    \begin{proof}[Proof of Lemma~\ref{lem:K1}]
    The statement is obvious for $\kappa=\tau=\upsilon=0$ because $K=0$, so we now assume $\kappa>0$. Direct calculation gives the characteristic polynomial of $K$ as
    \begin{equation}\label{eq:characteristic}
        \det(K-\lambda I) = \lambda^4 -(\kappa^2-\tau^2-\upsilon^2)\lambda^2 - \kappa^2\upsilon^2,
    \end{equation} 
    and consequently $K$ has eigenvalues
    \begin{equation}
       \lambda= \pm \frac{1}{\sqrt{2}}\left(
        \kappa^2-\tau^2-\upsilon^2 \pm R(\kappa,\tau,\upsilon)
        \right)^{1/2},
    \end{equation}
    where the two sign choices are independent and $R$ is as in~\eqref{eq:R}. 
    As $R\ge \kappa^2-\tau^2-\upsilon^2$, the eigenvalues can be written as 
    $\pm\chi$ and $\pm \ii\omega$ with $\chi,\omega\ge 0$ satisfying~\eqref{eq:chi} and~\eqref{eq:omega}.

    If $\upsilon\neq 0$ then none of the eigenvalues can vanish because $\det K=-\kappa^2\upsilon^2$; thus there are four distinct eigenvalues, so $K$ is diagonalisable.
    If $\upsilon=0$ then one of $\chi^2$, $\omega^2$ vanishes, while the other equals $|\kappa^2-\tau^2|$. 
    Provided $\kappa\neq \tau$, there are two distinct nonzero eigenvalues in addition to a zero eigenvalue. As $(\tau,0,-\kappa,0)^T$ and $(0,0,0,1)^T$ are linearly independent eigenvectors of zero eigenvalue, we may again conclude that $K$ is diagonalisable provided $\kappa\neq\tau$. 

    If $\kappa=\tau>\upsilon=0$ then the characteristic equation is $\det(K-\lambda I)=\lambda^4$ so the only eigenvalue is zero with algebraic multiplicity $4$. Direct calculation gives $K^2\neq 0$ but $K^3=0$, which shows that $K$ has a Jordan block of dimension $3$. The other block must have dimension $1$. The geometric multiplicity is therefore $2$.  
    \end{proof}
    
    \begin{proof}[Proof of Lemma~\ref{lem:K}] 
    Given the diagonal form $K=P\diag(\chi,-\chi,\ii\omega,-\ii\omega)P^{-1}$ as in~\eqref{eq:K_diagonal_form}, we have
    \begin{equation}
        K^2=P\diag(\chi^2,\chi^2,-\omega^2,-\omega^2)P^{-1}.
    \end{equation}  
    Thus $(\omega^2 I+K^2)/(\chi^2+\omega^2) = P\diag(1,1,0,0)P^{-1}$ is a projection onto the $\chi^2$-eigenspace of $K^2$ and $(\chi^2 I-K^2)/(\chi^2+\omega^2)=P\diag(0,0,1,1)P^{-1}$ projects onto the $-\omega^2$-eigenspace. This establishes~\eqref{eq:Pi}.
    Similarly, if $\chi>0$, one sees that $\Pi_+=P\diag(1,0,0,0)P^{-1}$ and $\Pi_-=P\diag(0,1,0,0)P^{-1}$, are rank-$1$ projections onto the $\pm\chi$-eigenspaces of $K$ and are equal to the formulae given in~\eqref{eq:Pi_pm}.

    The nilpotency of $K$ for $\kappa=\tau>0$, $\upsilon=0$ was mentioned in the proof of Lemma~\ref{lem:K1}. As all eigenvalues of $K$ vanish we have $\Tr K^r=0$ for all $r\in\NN$. 
    \end{proof}

\subsection{Functional calculus of $K$}

        \begin{proof}[Proof of Lemma~\ref{lem:FofK}]
         Parts~(b) and~(c) are immediate from~\eqref{eq:FofK} (and the fact that $\Tr K=\Tr K^2=0$ in these cases), so we need only address parts~(a),~(d) and~(e).
         
         For diagonalisable $K=P\Lambda P^{-1}$, $\Lambda=\diag(\chi,-\chi,\ii\omega,-\ii\omega)$ one has $F(sK)=PF(s\Lambda)P^{-1}$. If $F$ is even then
         \begin{equation}\label{eq:FofKeven}
             F(sK) = F(\chi s) \Pi + F(\ii\omega s) (I-\Pi),
         \end{equation}
         where $\Pi$ and $I-\Pi$ are projectors onto the $\chi^2$ and $-\omega^2$ eigenspaces of $K^2$ as in Lemma~\ref{lem:K}.  
         Noting that both $F_\textnormal{e}$ and $F_{\textnormal{oc}}$ are even analytic functions and $F(z) = F_\textnormal{e}(z)+zF_\textnormal{oc}(z)$, we have
         $F(sK)=F_\textnormal{e}(sK)+sKF_\textnormal{oc}(sK)$ using the power series definition. Using~\eqref{eq:FofKeven} to evaluate $F_\textnormal{e}(sK)$ and $F_\textnormal{oc}(sK)$ we obtain the required result~\eqref{eq:FofK_diag}. 

         Next, $K\Pi$ and $K(I-\Pi)$ are linear combinations of odd powers of $K$ and therefore have vanising trace. Thus one finds
         $\Tr F(s K) = F_\textnormal{e}(\chi s)\Tr \Pi + F_\textnormal{e}(\ii\omega s)\Tr(I-\Pi)$, giving the formula~\eqref{eq:TrFK}.

         Finally, if $G$ is another entire analytic function, we note that
         \begin{equation}
             (FG)_\textnormal{e}(z) = F_\textnormal{e}(z)G_\textnormal{e}(z)+z^2 F_\textnormal{oc}(z)G_\textnormal{oc}(z),
             \qquad
             (FG)_\textnormal{oc}(z) = F_\textnormal{e}(z)G_\textnormal{oc}(z)+F_\textnormal{oc}(z)G_\textnormal{e}(z)
         \end{equation}
        and the claim that $(FG)(sK)=F(sK)G(sK)$ follows from~\eqref{eq:FofK_diag} and the comparable formula for $G(sK)$, plus the properties of the projectors and the fact that $K^2\Pi = \chi^2$, $K^2(I-\Pi) = -\omega^2(I-\Pi)$.
        
         (d) This follows from the definition of $F(sK)$ as a power series, and the fact that
        \begin{equation}
        \eta K^T \eta^{-1} = (K\eta)^T\eta^{-1} = (-K\eta)\eta^{-1} = -K,
        \end{equation} 
        using the antisymmetry of $K_{\bullet\bullet}=K\eta$ mentioned earlier. 
     \end{proof} 

  For ease of reference, we give the explicit expressions
    \begin{equation}\label{eq:Piexplicit}
        \Pi = \frac{1}{\chi^2+\omega^2}\begin{pmatrix} 
            \kappa^2+\omega^2 & 0 & \kappa\tau & 0 \\ 
            0 & \kappa^2-\tau^2+\omega^2 & 0 & \tau\upsilon \\
            -\kappa\tau & 0 & -\tau^2-\upsilon^2+\omega^2 & 0 \\
            0 & \tau\upsilon & 0 & \omega^2-\upsilon^2
        \end{pmatrix},
    \end{equation}
    \begin{equation} \label{eq:KPi}
        K\Pi = \begin{pmatrix}
            0 & (\kappa^2+\omega^2-\tau^2)\kappa & 0 & \kappa\tau\upsilon \\
            (\kappa^2+\omega^2-\tau^2)\kappa & 0 & (\kappa^2+\omega^2-\tau^2-\upsilon^2)\tau & 0\\
            0 & -(\kappa^2+\omega^2-\tau^2-\upsilon^2)\tau & 0 & (\omega^2-\tau^2-\upsilon^2)\upsilon\\
            \kappa\tau\upsilon & 0 & (\tau^2+\upsilon^2-\omega^2)\upsilon & 0
        \end{pmatrix}.
    \end{equation}

    \subsection{Proofs of Lemmas~\ref{lem:D} and~\ref{lem:CD}}
    \label{appx:CD}
 
\begin{proof}[Proof of Lemma~\ref{lem:D}.]
    Using~\eqref{Dab_definition},~\eqref{eq:gamma_sol}, and 
    $e_a(s)=R(s)_a^{\phantom{a}b}e_b^\FS(s)=[R(s)\ee^{sK}]_a^{\phantom{a}b}e_b^\FS(0)$, 
    we compute
\begin{align}
    D_{ab}(s,s') &=-(s-s')^2 [\expc((s-s')K)\ee^{s'K}]_0^{\phantom{0}c}\eta_{cd}
    [R(s)\ee^{sK}]_a^{\phantom{a}d} 
    \nonumber \\ &\qquad\qquad\times
    [\expc((s-s')K)\ee^{s'K}]_0^{\phantom{0}g}\eta_{gh}[R(s')\ee^{s'K}]_b^{\phantom{b}h}. 
\end{align}   
Writing $v_0=(1,0,0,0)^T$, $D(s,s')$ may thus be expressed as a matrix product
\begin{align}\label{eq:Dssp}
  D(s,s')  &=-(s-s')^2 R(s)\ee^{sK}\eta (\expc((s-s')K)\ee^{s'K})^T v_0 v_0^T 
  \expc((s-s')K)\ee^{s'K}\eta (R(s')\ee^{s'K})^T  \nonumber\\
  &= -(s-s')^2R(s) \ee^{(s-s')K} \expc(-(s-s')K) \eta v_0 v_0^T  \expc((s-s')K)\eta R(s')^T\nonumber \\ 
  &=  
   -(s-s')^2  R(s) \expc((s-s')K)  v_0 v_0^T  \expc((s-s')K)\eta R(s')^T,
\end{align}
where we have used~\eqref{eq:functional_calc_id} and $\eta v_0=v_0$ together with 
$\exp(z)\expc(-z) = \expc(z)$. Taking the trace and using cyclicity,~\eqref{eq:TrDssp} is proved and we see that $\Tr D(s,s')$ is translationally invariant, even though $D(s,s')$ itself is not generally invariant. 

As
\begin{align}
    [\expc(-sK)\eta \ee^{s\Omega} \expc(-sK)]_{00} &=
    [\expc(-sK)\eta \ee^{s\Omega} \expc(-sK)\eta]_{00} \nonumber \\
    &= [\eta \expc(sK)^T (\ee^{-s\Omega})^T\eta \expc(sK)^T]_{00}
    \nonumber \\
    &=[\expc(sK)^T (\ee^{-s\Omega})^T\eta \expc(sK)^T]_{00}
    \nonumber \\&= [\expc(sK)\eta \ee^{-s\Omega}\expc(sK)]_{00},
\end{align}
we also see that $\Tr D(s,s')$ is even in $s-s'$, whereupon the existence of an entire analytic $\Hc_\Omega$ satisfying~\eqref{eq:Hdef} is established; moreover
$\Hc_\Omega$ is real on $\RR^+$ because $D(s,s')$ is real.  Symbolic calculation gives the Taylor expansion~\eqref{eq:H_Taylor}. 
 
In the case where $\kappa=0$,~\eqref{eq:TrD_kappa0} is established by a straightforward calculation  
\end{proof} 

\begin{proof}[Proof of Lemma~\ref{lem:CD}.]
Eq.~\eqref{eq:TrCDssp} is obtained from~\eqref{eq:Cssp} and~\eqref{eq:Dssp} by rotating the trace, thus establishing translational invariance. A similar argument to that used for $\Tr D$ shows that $\Tr C(s,s')D(s,s')$ is even in $s-s'$. Specifically, this follows from the calculation
\begin{align}
 [\expc(-sK)\eta
\ee^{s\Omega}
\ee^{-sK}\eta\ee^{s\Omega} \expc(-sK)]_{00} &=
 [\expc(-sK)\eta \ee^{s\Omega}\ee^{-sK}\eta\ee^{s\Omega} \expc(-sK)\eta]_{00} \nonumber\\
 &=[\eta\expc(sK)^T (\ee^{-s\Omega})^T\eta(\ee^{sK})^T(\ee^{-s\Omega})^T \eta \expc(sK)^T]_{00} \nonumber\\
  &=[\expc(sK)^T (\ee^{-s\Omega})^T\eta(\ee^{sK})^T(\ee^{-s\Omega})^T \eta \expc(sK)^T]_{00} \nonumber\\
  &= [\expc(sK)\eta \ee^{-s\Omega} \ee^{sK} \eta \ee^{-s\Omega}  \expc(sK)]_{00}.
\end{align}
Consequently,~\eqref{eq:Jdef} holds for an entire analytic function $\Jc_\Omega$ that is real-valued on $\RR^+$ because $C(s,s')$ and $D(s,s')$ are real matrices. The expansion~\eqref{eq:J_exp} follows by symbolic calculation and in the case $K=0$,~\eqref{eq:TrCDssp} immediately reduces to~\eqref{eq:TrCD_kappa0}.
\end{proof}

\section{Asymptotic behaviour} 
\label{appx:asymptotic} 

In this appendix, we prove Lemma~\ref{lem:K_asympt}, describing the asymptotic behaviour of the functions $K_\Omega(\kappa^2 s^2)$ and $K_{\textnormal{e.m.},\Omega}(\kappa^2 s^2)$ as $s\to\pm\infty$ in $\RR$. To do this, we need to understand the behaviour of $1/\Upsilon(\kappa^2s^2)$ in this limit, and also 
$\Tr C(s,s')$, $\Tr D(s,s')$ and $\Tr C(s,s')D(s,s')$ as 
$(s-s')\to \pm\infty$. 

The results divide into three cases depending on whether $K$ is (i) diagonalisable with $\chi=0$ (covering inertial and elliptic worldlines), (ii)  diagonalisable with $\chi>0$ (hyperbolic worldlines) or (iii) nondiagonalisable (parabolic worldlines). Recall that case  (ii) includes the linearly accelerated, catenary and loxodromic motions.  

Starting with $\Upsilon(\kappa^2 s^2)$, if $\chi>0$  then the $\sinhc$ term in~\eqref{eq:Upsilon_exact_a} dominates. Alternatively, if 
$\chi=0$ and $\omega>0$ (elliptic motion) then $\Upsilon(\kappa^2 s^2)$ tends to a nonzero constant; this is also true in the inertial case, in which $\Upsilon\equiv 1$. The behaviour in the parabolic case is immediate from~\eqref{eq:Upsilon_exact_b}. Overall, one has
\begin{equation}\label{eq:Upsilon_asymptotics}
    \frac{1}{\Upsilon(\kappa^2 s^2)} = 
    \begin{cases} 
    \mathcal{O}(1) & \textnormal{inertial, elliptic}\\    
    \mathcal{O}(s^2\ee^{-\chi |s|}) & \text{hyperbolic cases}\\
    \mathcal{O}(s^{-2}) & \textnormal{parabolic}
    \end{cases}
\end{equation}
as $|s|\to\infty$. 

Next, consider $ \exp(sK)$ as $s\to\pm\infty$ in $\RR$. 
Suppose that $K$ is diagonalisable and $\chi=0$. Then $K\Pi=0$ so~\eqref{eq:expsK} gives 
\begin{equation}
    \exp(sK) =\Pi +  (\cos(\omega s)+s\sinc(\omega s)K)(I-\Pi),
\end{equation}
reducing to $\exp(sK)=I$ if $K=0$.  
Thus in either case we have $\exp(sK)=\mathcal{O}(1)$ as $|s|\to\infty$. Alternatively, if $K$ is diagonalisable with $\chi>0$ then~\eqref{eq:expsK} immediately implies that
\begin{equation}
    \exp(sK) = \frac{\ee^{\chi|s|}}{2\chi}(\chi I\pm K)\Pi +\mathcal{O}(1) = \ee^{\chi|s|}\Pi_\pm +\mathcal{O}(1)
\end{equation}
as $|s|\to\infty$, where we recall from  Lemma~\ref{lem:K} that $\Pi_\pm$ is the rank-$1$ projection (not necessarily orthogonal) onto the $K$-eigenspaces of eigenvalue $\pm\chi$. 
Finally, in the parabolic case $\exp(sK)$ is given by~\eqref{eq:expsK_nondiag}. 
Altogether, we have 
    \begin{equation}
        \exp(sK) = \begin{cases} 
        \mathcal{O}(1) & \text{inertial, elliptic}\\
        \ee^{\chi |s|} \Pi_\pm + \mathcal{O}(1) & \text{hyperbolic cases}\\
        \tfrac{1}{2}s^2K^2 + sK + I & \text{parabolic}.
        \end{cases}
    \end{equation}
as $s\to\pm\infty$.  
Consequently, as $\exp(-(s-s')\Omega)$ is bounded, we may read off from~\eqref{eq:TrC_simp} that
\begin{equation}
    \Tr C(s,s') = \begin{cases} 
        \mathcal{O}(1) & \text{inertial, elliptic}\\
        \ee^{\chi |s-s'|} \Tr (\eta \ee^{-(s-s')\Omega}\Pi_\pm) + \mathcal{O}(1) & \text{hyperbolic cases}\\
        \tfrac{1}{2}(s-s')^2 \Tr (\eta  \ee^{-(s-s')\Omega}K^2)+ \mathcal{O}(s-s') & \text{parabolic}.
        \end{cases}
    \end{equation}
as $s\to\pm\infty$.  

Turning to the asymptotics of $\expc(sK)$,~\eqref{eq:fsK} implies that 
\begin{equation}
    \expc(sK) = \frac{\ee^{\chi |s|}}{2\chi^2 |s|}(\chi I \pm K)\Pi + \mathcal{O}(1) = 
\frac{\ee^{\chi |s|}}{\chi |s|}\Pi_{\pm} + \mathcal{O}(1)
\end{equation}
as $s\to\pm\infty$ in the case where $K$ is diagonalisable and $\chi>0$. Adding in the other cases, one has
\begin{equation}\label{eq:fsKasympt}
    \expc(sK) = \begin{cases}
    \mathcal{O}(1) & \text{inertial, elliptic}\\
    \frac{\ee^{\chi |s|}}{\chi |s|}\Pi_{\pm} + \mathcal{O}(1) & \text{hyperbolic cases}\\ 
         s^2 K^2/6 + s K/2+I & \text{parabolic}
    \end{cases}
\end{equation}
as $s\to\pm\infty$. Here, the elliptic case follows because $\expc(\pm\ii\omega s)$ are bounded as $|s|\to\infty$ -- this can also be seen from~\eqref{eq:fsK} noting that $K\Pi=0$ if $\chi=0$.

Using~\eqref{eq:TrDssp}, we can now read off the asymptotic behaviour of $\Tr D$ as
\begin{equation}\label{eq:TrDasympt}
    \Tr D(s,s') = \begin{cases}
    \mathcal{O}((s-s')^2) & \text{inertial, elliptic}\\
    -\ee^{2\chi |s-s'|}[\Pi_{\pm}\eta\ee^{-(s-s')\Omega}\Pi_{\pm}]_{00}/\chi^2 + \mathcal{O}((s-s')\ee^{\chi |s-s'|}) & \text{hyperbolic cases}\\ 
         -(s-s')^6 [K^2\eta \ee^{-(s-s')\Omega} K^2]_{00}/36 +\mathcal{O}((s-s')^5) & \text{parabolic.}
    \end{cases}
\end{equation}
There is a useful simplification in the hyperbolic cases because $\Pi_\pm$ has rank-$1$, leading to the factorisation
\begin{equation}
    [\Pi_{\pm}\eta\ee^{-(s-s')\Omega}\Pi_{\pm}]_{00} =
    [\Pi_\pm]_{00} \Tr (\eta \ee^{-(s-s')\Omega}\Pi_\pm).
\end{equation}
 
Finally, we use~\eqref{eq:TrCDssp} to obtain  
\begin{multline}\label{eq:TrCDasympt}
    \Tr C(s,s')D(s,s') = \\
    \begin{cases}
    \mathcal{O}((s-s')^2) & \text{inertial, elliptic}\\
    -\ee^{3\chi |s-s'|}\chi^{-2}[\Pi_{\pm}\eta\ee^{-(s-s')\Omega}\Pi_{\pm}\eta\ee^{-(s-s')\Omega}\Pi_{\pm}]_{00} + \mathcal{O}((s-s')\ee^{2\chi |s-s'|}) & \text{hyperbolic}\\ 
         -(s-s')^8 [K^2\eta \ee^{-(s-s')\Omega}K^2\eta\ee^{-(s-s')\Omega} K^2]_{00}/72 +\mathcal{O}((s-s')^7) & \text{parabolic}
    \end{cases}
\end{multline} 
Again, because $\Pi_\pm$ have rank-$1$, there is a factorisation
\begin{equation}
    [\Pi_{\pm}\eta\ee^{-(s-s')\Omega}\Pi_{\pm}\eta\ee^{-(s-s')\Omega}\Pi_{\pm}]_{00} = 
    [\Pi_\pm]_{00}(\Tr \eta \ee^{-(s-s')\Omega}\Pi_{\pm})^2
\end{equation}
which entails that $\Tr C \Tr D$ and $\Tr CD$ have the same leading term in the hyperbolic situation.
In the parabolic situation, there is a similar factorisation 
\begin{equation}
    [K^2\eta \ee^{-(s-s')\Omega}K^2\eta\ee^{-(s-s')\Omega} K^2]_{00} = 
     [K^2\eta \ee^{-(s-s')\Omega} K^2]_{00} \Tr (\eta  \ee^{-(s-s')\Omega}K^2)
\end{equation}  
because $K^2$ has rank-$1$ in that case (specifically, $K^2 =\kappa^2 (1,0,-1,0)^T(1,0,1,0)$).
Thus the leading terms of  $\Tr C \Tr D$ and $\Tr CD$  coincide again. There is no such cancellation in the inertial or elliptic cases (see, for example~\eqref{eq:TrCTrDminusTrCD_kappa0}). 
In summary, we find that  
\begin{equation}
    \Tr C(s,s')\Tr D(s,s')-\Tr C(s,s')D(s,s') = \begin{cases}  
    \mathcal{O}((s-s')^2) & \text{inertial, elliptic}\\ 
    \mathcal{O}((s-s')\ee^{2\chi |s-s'|}) & \text{hyperbolic}\\ 
         \mathcal{O}((s-s')^7) & \text{parabolic}
    \end{cases}
\end{equation}
as $|s|\to\infty$. 

Finally, the results obtained above can be combined straightforwardly to find the asymptotics of 
\begin{equation} 
    \Kc_{\Omega}(\kappa^2s^2)= \frac{\Gc_\Omega(\kappa^2s^2)\Upsilon(\kappa^2s^2)-4\Hc(\kappa^2s^2)}{\Upsilon(\kappa^2s^2)^3} 
\end{equation}
and  
\begin{equation}
    \Kc_{\textnormal{e.m.},\Omega}(\kappa^2s^2)= -4\frac{\Xi_\Omega(\kappa^2s^2)}{\Upsilon(\kappa^2s^2)^3}, 
\end{equation}
thus proving Lemma~\ref{lem:K_asympt}. The cancellation between the leading terms of $\Tr CD$ and $(\Tr C)(\Tr D)$ is crucial in the hyperbolic case to ensure that $\Kc_{\textnormal{e.m.},\Omega}$ decays rather than growing exponentially.

\section{Optimisation of the scalar QEI}\label{appx:optimisation}

In this appendix we give details on the approximations given in Section~\ref{sec:optimisation} and the numerical method used to generate Fig.~\ref{fig:min}.

The problem is to minimise $B_\Omega$ (see~\eqref{eq:B_scalar_Omega}) over $\boldsymbol{\varpi}$ with fixed curvature scalars. We may exclude the case $\kappa=\tau=\upsilon=0$ for which $B_\Omega$ vanishes identically. 
Noting that
\begin{equation}
  \frac{\partial B_\Omega}{\partial\varpi_2}= \frac{2\varpi_2}{3}\left(
  5 \kappa^{2}+\tau^{2}+\upsilon^{2}+2 \varpi_{2}^{2}+2 \left(\varpi_{1}-\upsilon \right)^{2}+2 \left(\varpi_{3}-\tau \right)^{2}
\right),
\end{equation}
it is then clear that critical points all have $\varpi_2=0$, so we may restrict to this situation. We also write $\tau=t\kappa$, $\upsilon=u\kappa$ and $\varpi_1=\upsilon+\zeta\kappa$, $\varpi_3=\tau+\theta\kappa$, allowing us to write $B_\Omega$ in terms of the curvature $\kappa$ and dimensionless parameters $t\ge 0$, $u\in\RR$ and $\zeta,\theta\in\RR$ to give
\begin{equation} 
    B_\Omega = \frac{\kappa^4}{90}\left(
    33+26 t^{2}+30q(\zeta,\theta)\right),
\end{equation} 
where the function $q$, which depends on $t$ and $u$ as parameters, is 
\begin{equation} 
   q(\zeta,\theta) = (\theta^{2}+ \zeta^{2})^2+ 2 \zeta^{2}
   +5(\theta+t/5)^2 +(u\theta-t\zeta)^2. 
\end{equation}
In these variables, the Fermi--Walker tetrad corresponds to $\zeta=\theta=0$, and 
\begin{equation}
\beta=26+30 \min_{\zeta,\theta\in\RR}\frac{q(\zeta,\theta)}{t^2}
\end{equation}
clearly depends only on $t$ and $u$.
As a sum of squares, it is clear that $q\ge 0$, and since the
squares cannot all vanish simultaneously if $t>0$ we see that $q>0$ for $t>0$. Thus
$\beta>26$ if $t>0$, which improves on the lower bound $\beta\ge 22$ given by~\eqref{eq:B_bounds}. 
If $u=0$ then $q(0,-t/5)/t^2=t^2/625\to 0$ as $t\to 0$, and we deduce that  
$\beta$ has infimum $26$ over all $t>0$, $u,\zeta,\theta\in \RR$. 

We wish to find the global minimum of $q$ for given $t,u$. Straightforward calculation gives the trace and determinant of the Hessian $\Hess(q)$ as
\begin{align}
    \Tr \Hess(q)&=16(\theta^2+\zeta^2)+2(t^2+u^2)+14 \\
    \det \Hess(q)&= 48 \left(\theta^{2}+\zeta^{2}\right)^{2}+16 \left(t \theta +u \zeta \right)^{2}+8 \left(t^{2}+u^{2}+11\right) \theta^{2}+8 \left(t^{2}+u^{2}+17\right) \zeta^{2}
    \nonumber \\
    &\qquad
    +20 t^{2}+8 u^{2}+40,
\end{align}
which are both manifestly positive for all parameter values, so $\Hess(q)$ is everywhere positive definite and $q$ is therefore strictly convex, with a unique local minimum that is also the global minimum. Therefore the minimum can be found for given $(t,u)$ by solving for a critical point $(\zeta_*,\theta_*)$ of $q$ and substituting back into $q$. As $q(\zeta_*,\theta_*)\le t^2/5$ (because $\beta\le 32$) we have $\zeta_*^2+\theta_*^2\le t/\sqrt{5}$, so the search space is bounded for given $t$. This method was used to generate Fig.~\ref{fig:min}, which indicates that $\beta\to 26$ as $(t,u)\to (0,0)$ and the Fermi--Walker value $32$ approached for any $t>0$ as $|u|\to\infty$.

The numerics can be supported by the following analytical observations. The equations for a critical point are
\begin{align}\label{eq:dqdzeta}
    0=\frac{1}{2}\frac{\partial q}{\partial\zeta} &= 2\zeta(\theta^2+\zeta^2)-tu\theta+(t^2+2)\zeta  =2\zeta(1+\theta^2+\zeta^2) -t(u\theta-t\zeta)
    \\
    \label{eq:dqdtheta}
    0=\frac{1}{2}\frac{\partial q}{\partial\theta} &=
    2\theta(\theta^2+\zeta^2)-tu\zeta+(u^2+5)\theta  +t = t + 2\theta\left(\frac{5}{2}+\theta^2+\zeta^2\right)   +u (u\theta-t\zeta)
    .
\end{align}
Substituting $\theta=t \alpha$, $\zeta=ut^2 \beta$, and dividing out factors of $t$ or $ut^2$, these equations become (in reverse order)
\begin{align}
    1+5\alpha + \alpha u^2 +2 t^{2} \alpha^{3} -t^{2} u^{2} \beta +
2 \alpha  \beta^{2} t^{4} u^{2}  &= 0 \\
2\beta - \alpha +(2\alpha^2+1)\beta t^2 + 2t^4u^2\beta^3   &=0.
\end{align}
These equations can be solved by power series expansion methods to give
$\alpha= -1/5 + \mathcal{O}((t^2+u^2))$, $\beta= -1/10 + \mathcal{O}((t^2+u^2))$, 
and consequently 
\begin{align}
    \theta_* &=-\frac{1}{5} t\left(1+\mathcal{O}(t^2+u^2)\right), \qquad  \zeta_*=-\frac{1}{10} ut^2 \left(1+\mathcal{O}(t^2+u^2)\right),
\end{align}
as $t,u\to 0$, corresponding to the values in~\eqref{eq:optvarpi_small}, and the approximations for $\beta$ and $B_\textnormal{min}$ given in~\eqref{eq:beta_small} and~\eqref{eq:Bmin_small}. For example, parabolic stationary worldlines correspond to $t=1$, $u=0$. Here, we find the numeric values
\begin{equation}
    \theta_* = -0.1969444375,\qquad \zeta_*=0,\qquad \beta(1,0)= 26.04653366,
\end{equation}
corresponding to the values given after~\eqref{eq:optvarpi_small}. 

Now consider limits  $t^2+u^2\to\infty$ with $t/|u|>\epsilon$ for some $\epsilon>0$. 
Equivalently, we write $t=r\cos\alpha$, $u=r\sin\alpha$ with $|\alpha|<\pi/2$ and consider $r\to\infty$ with $\cos\alpha$ bounded away from zero. Writing $\theta_*=-\rho\cos\beta$, $\zeta_* = -\rho\sin\beta$, the combination
$u\times\eqref{eq:dqdzeta}+t\times\eqref{eq:dqdtheta}$ gives
\begin{equation}\label{eq:firstcomb}
   r\cos^2\alpha - 2\rho^3\cos(\beta-\alpha) - \frac{\rho}{2} \left( 7\cos(\beta-\alpha) + 3\cos(\alpha+\beta)\right)  = 0
\end{equation}
on dividing by $r$. It follows that $|\rho|\to\infty$. On the other hand, 
$u\times \eqref{eq:dqdtheta}- t\times\eqref{eq:dqdzeta}$ gives
\begin{equation}\label{eq:secondcomb}
   \left(1+2\frac{\rho^2}{r^2}\right) \sin(\beta-\alpha)+ \frac{\sin(2\alpha)}{2\rho r} -
   \frac{1}{2r^2}\left(3\sin(\alpha+\beta) + 7\sin(\alpha-\beta)\right)
   =0 
\end{equation}
on dividing by $r^3\rho$,
from which we find that $\alpha-\beta\to 0$ and so $\cos(\beta-\alpha)\to 1$. Going back to~\eqref{eq:firstcomb} we find $\rho=\mathcal{O}(r^{1/3})$ as $r\to\infty$. Thus both $\theta$ and $\zeta$ are $\mathcal{O}(r^{1/3})$, while $r=t/\cos\alpha=\mathcal{O}(t)$. From~\eqref{eq:secondcomb} we now deduce that $\sin(\alpha-\beta)=\mathcal{O}(t^{-4/3})$.
In this limit, 
\begin{equation}
    q(\zeta_*,\theta_*) = \frac{t^2}{5} + \mathcal{O}(t^{4/3}) 
\end{equation}
and hence $\beta(t,u)= 32+\mathcal{O}(t^{-1/3})$ in such limits. In fact, $ \rho \sim \left(r\cos^2\alpha/2\right)^{1/3}$, so
\begin{align} 
\theta_* &\sim -\cos\alpha \left(\frac{r\cos^2\alpha}{2}\right)^{1/3}= -\frac{2^{-1/3} t^{5/3}}{(u^2+t^2)^{2/3}}, \nonumber \\  \zeta_* &\sim -\sin\alpha\left(\frac{r\cos^2\alpha}{2}\right)^{1/3} = -\frac{2^{-1/3}t^{2/3}u}{(u^2+t^2)^{2/3}}, 
\end{align}
giving
\begin{equation}
    \varpi_1 \sim \upsilon\left(1 -\frac{(\kappa\tau/\sqrt{2})^{2/3}}{(\tau^2+\upsilon^2)^{2/3}} \right) , \qquad
    \varpi_3 \sim \tau\left(1 -\frac{(\kappa\tau/\sqrt{2})^{2/3}}{(\tau^2+\upsilon^2)^{2/3}}\right),
\end{equation}
which shows an asymptotic approach to the Fermi--Walker tetrad.

Other limits can be studied. For example, if $|u|\to\infty$ with $t$ bounded (and hence $\zeta_*,\theta_*$ bounded), then~\eqref{eq:dqdtheta} implies that $u\theta_*-t\zeta_*=\mathcal{O}(u^{-1})$, whereupon~\eqref{eq:dqdzeta} gives $\zeta_*=\mathcal{O}(u^{-1})$ and therefore $\theta_*=t\zeta_*/u=\mathcal{O}(u^{-2})$. It follows that
\begin{equation}
    u\theta_*-t\zeta_* \sim -\frac{t}{u}, \qquad \zeta_*\sim -\frac{t^2}{2u}, \qquad \theta_* \sim -\frac{t(t^2+2)}{2u^2},
\end{equation}
and we again find $\beta(t,u)\to 32$ and $\varpi_1\sim\upsilon$, $\varpi_3\sim \tau$ (assuming $\tau/\kappa$ is kept bounded away from zero).

% 
%\bibliographystyle{spmpsci_mod2019}
%{\small
%\bibliography{references1}

\begin{thebibliography}{10}\setlength{\itemsep}{-1.5mm}
	\providecommand{\url}[1]{{#1}}
	\providecommand{\urlprefix}{URL }
	\expandafter\ifx\csname urlstyle\endcsname\relax
	\providecommand{\doi}[1]{DOI~\discretionary{}{}{}#1}\else
	\providecommand{\doi}{DOI~\discretionary{}{}{}\begingroup
		\urlstyle{rm}\Url}\fi
	
	\bibitem{Alcubierre_1994}
	Alcubierre, M.: The warp drive: hyper-fast travel within general relativity.
	\newblock Classical and Quantum Gravity \textbf{11}(5), L73--L77 (1994).
	\newblock \doi{10.1088/0264-9381/11/5/001}.
	\newblock \urlprefix\url{https://doi.org/10.1088%2F0264-9381%2F11%2F5%2F001}
	
	\bibitem{BellLeinaas:1983}
	Bell, J., Leinaas, J.: Electrons as accelerated thermometers.
	\newblock Nuclear Physics B \textbf{212}(1), 131--150 (1983).
	\newblock \doi{https://doi.org/10.1016/0550-3213(83)90601-6}.
	\newblock
	\urlprefix\url{https://www.sciencedirect.com/science/article/pii/0550321383906016}
	
	\bibitem{Bunney:2024}
	Bunney, C.R.D.: Stationary trajectories in {M}inkowski spacetimes.
	\newblock J. Math. Phys. \textbf{65}(5), Paper No. 052501, 14 (2024).
	\newblock \doi{10.1063/5.0205471}.
	\newblock \urlprefix\url{https://doi.org/10.1063/5.0205471}
	
	\bibitem{CandelasDeutsch:1977}
	Candelas, P., Deutsch, D.: On the vacuum stress induced by uniform acceleration
	or supporting the ether.
	\newblock Proc. Roy. Soc. London Ser. A \textbf{354}(1676), 79--99 (1977).
	\newblock \doi{10.1098/rspa.1977.0057}.
	\newblock \urlprefix\url{https://doi.org/10.1098/rspa.1977.0057}
	
	\bibitem{DawsonFewster:2006}
	Dawson, S.P., Fewster, C.J.: An explicit quantum weak energy inequality for
	{D}irac fields in curved spacetimes.
	\newblock Classical Quantum Gravity \textbf{23}(23), 6659--6681 (2006).
	\newblock \doi{10.1088/0264-9381/23/23/005}.
	\newblock \urlprefix\url{https://doi-org/10.1088/0264-9381/23/23/005}
	
	\bibitem{DeBievreMerkli:2006}
	De~Bi\`evre, S., Merkli, M.: The {U}nruh effect revisited.
	\newblock Classical Quantum Gravity \textbf{23}(22), 6525--6541 (2006).
	\newblock \doi{10.1088/0264-9381/23/22/026}.
	\newblock \urlprefix\url{https://doi.org/10.1088/0264-9381/23/22/026}
	
	\bibitem{EpsteinGlaserJaffe:1965}
	Epstein, H., Glaser, V., Jaffe, A.: Nonpositivity of the energy density in
	quantized field theories.
	\newblock Nuovo Cimento (10) \textbf{36}, 1016--1022 (1965).
	\newblock \doi{10.1007/BF02749799}.
	\newblock \urlprefix\url{https://doi.org/10.1007/BF02749799}
	
	\bibitem{FewsterEveson:2007}
	Eveson, S.P., Fewster, C.J.: Mass dependence of quantum energy inequality
	bounds.
	\newblock J. Math. Phys. \textbf{48}(9), 093506, 13 (2007).
	\newblock \doi{10.1063/1.2779137}.
	\newblock \urlprefix\url{https://doi.org/10.1063/1.2779137}
	
	\bibitem{GenQEI2000CFewster}
	Fewster, C.J.: A general worldline quantum inequality.
	\newblock Classical and Quantum Gravity \textbf{17}(9), 1897–1911 (2000).
	\newblock \doi{10.1088/0264-9381/17/9/302}.
	\newblock \urlprefix\url{http://dx.doi.org/10.1088/0264-9381/17/9/302}
	
	\bibitem{Fews05}
	Fewster, C.J.: Energy inequalities in quantum field theory.
	\newblock In: J.C. Zambrini (ed.) XIVth International Congress on Mathematical
	Physics. World Scientific, Singapore (2005).
	\newblock \doi{10.48550/arXiv.math-ph/0501073}.
	\newblock \urlprefix\url{https://doi.org/10.48550/arXiv.math-ph/0501073}.
	\newblock An expanded and updated version is available as {\tt math-ph/0501073}
	
	\bibitem{Fewster2017}
	Fewster, C.J.: Quantum energy inequalities.
	\newblock In: F.S.N. Lobo (ed.) Wormholes, Warp Drives and Energy Conditions,
	\emph{Fundamental Theories of Physics}, vol. 189, pp. 215--254. Springer
	International Publishing (2017).
	\newblock \doi{10.1007/978-3-319-55182-1_10}.
	\newblock \urlprefix\url{https://doi.org/10.1007/978-3-319-55182-1_10}
	
	\bibitem{BoundNegEnDenFewsterEveson1998}
	Fewster, C.J., Eveson, S.P.: Bounds on negative energy densities in flat
	spacetime.
	\newblock Phys. Rev. D (3) \textbf{58}(8), {084}{010}, 6 (1998).
	\newblock \doi{10.1103/PhysRevD.58.084010}.
	\newblock \urlprefix\url{https://doi.org/10.1103/PhysRevD.58.084010}
	
	\bibitem{FewsterKontou2022}
	Fewster, C.J., Kontou, E.A.: A semiclassical singularity theorem.
	\newblock Classical Quantum Gravity \textbf{39}(7), Paper No. 075028, 32
	(2022).
	\newblock \doi{10.1088/1361-6382/ac566b}.
	\newblock \urlprefix\url{https://doi.org/10.1088/1361-6382/ac566b}
	
	\bibitem{FewsterMistry:2003}
	Fewster, C.J., Mistry, B.: Quantum weak energy inequalities for the {D}irac
	field in flat spacetime.
	\newblock Phys. Rev. D (3) \textbf{68}(10), 105010, 6 (2003).
	\newblock \doi{10.1103/PhysRevD.68.105010}.
	\newblock \urlprefix\url{https://doi-org/10.1103/PhysRevD.68.105010}
	
	\bibitem{QEI2006FewsterPfenning}
	Fewster, C.J., Pfenning, M.: Quantum energy inequalities and local covariance.
	{I}. {G}lobally hyperbolic spacetimes.
	\newblock Journal of Mathematical Physics \textbf{47}(8), 082303 (2006).
	\newblock \doi{10.1063/1.2212669}.
	\newblock \urlprefix\url{http://dx.doi.org/10.1063/1.2212669}
	
	\bibitem{FewsterPfenning:2003}
	Fewster, C.J., Pfenning, M.J.: A quantum weak energy inequality for spin-one
	fields in curved space-time.
	\newblock J. Math. Phys. \textbf{44}(10), 4480--4513 (2003).
	\newblock \doi{10.1063/1.1602554}.
	\newblock \urlprefix\url{https://doi.org/10.1063/1.1602554}
	
	\bibitem{FewsterRoman:2005}
	Fewster, C.J., Roman, T.A.: On wormholes with arbitrarily small quantities of
	exotic matter.
	\newblock Phys. Rev. D (3) \textbf{72}(4), 044023, 15 (2005).
	\newblock \doi{10.1103/PhysRevD.72.044023}.
	\newblock \urlprefix\url{https://doi.org/10.1103/PhysRevD.72.044023}
	
	\bibitem{FewsterTeo:1999}
	Fewster, C.J., Teo, E.: Bounds on negative energy densities in static
	space-times.
	\newblock Phys. Rev. D \textbf{59}, 104016 (1999).
	\newblock \doi{10.1103/PhysRevD.59.104016}.
	\newblock \urlprefix\url{http://dx.doi.org/10.1103/PhysRevD.59.104016}
	
	\bibitem{FewsterTeo:2000}
	Fewster, C.J., Teo, E.: Quantum inequalities and ``quantum interest'' as
	eigenvalue problems.
	\newblock Phys. Rev. D (3) \textbf{61}(8), 084012, 14 (2000).
	\newblock \doi{10.1103/PhysRevD.61.084012}.
	\newblock \urlprefix\url{https://link.aps.org/doi/10.1103/PhysRevD.61.084012}
	
	\bibitem{FewsterThompson2023}
	Fewster, C.J., Thompson, J.: Quantum energy inequalities along stationary
	worldlines.
	\newblock Classical Quantum Gravity \textbf{40}(17), Paper No. 175008, 32
	(2023).
	\newblock \doi{10.1088/1361-6382/ace233}.
	\newblock \urlprefix\url{https://doi.org/10.1088/1361-6382/ace233}
	
	\bibitem{FewsterVerch:2001}
	Fewster, C.J., Verch, R.: A quantum weak energy inequality for {D}irac fields
	in curved spacetime.
	\newblock Comm. Math. Phys. \textbf{225}(2), 331--359 (2002).
	\newblock \doi{10.1007/s002200100584}.
	\newblock \urlprefix\url{https://doi-org/10.1007/s002200100584}
	
	\bibitem{FewsterVerch:2003}
	Fewster, C.J., Verch, R.: Stability of quantum systems at three scales:
	passivity, quantum weak energy inequalities and the microlocal spectrum
	condition.
	\newblock Comm. Math. Phys. \textbf{240}(1-2), 329--375 (2003).
	\newblock \doi{10.1007/s00220-003-0884-7}.
	\newblock \urlprefix\url{https://doi-org/10.1007/s00220-003-0884-7}
	
	\bibitem{Ford1978QuantumCE}
	Ford, L.H.: Quantum coherence effects and the second law of thermodynamics.
	\newblock Proceedings of the Royal Society of London. A. Mathematical and
	Physical Sciences \textbf{364}, 227 -- 236 (1978).
	\newblock \doi{10.1098/rspa.1978.0197}.
	\newblock \urlprefix\url{https://doi.org/10.1098/rspa.1978.0197}
	
	\bibitem{Ford:1991}
	Ford, L.H.: Constraints on negative-energy fluxes.
	\newblock Phys. Rev. D \textbf{43}, 3972--3978 (1991).
	\newblock \doi{10.1103/PhysRevD.43.3972}.
	\newblock \urlprefix\url{https://link.aps.org/doi/10.1103/PhysRevD.43.3972}
	
	\bibitem{FordRoman:1995}
	Ford, L.H., Roman, T.A.: Averaged energy conditions and quantum inequalities.
	\newblock Phys. Rev. D (3) \textbf{51}(8), 4277--4286 (1995).
	\newblock \doi{10.1103/PhysRevD.51.4277}.
	\newblock \urlprefix\url{https://doi.org/10.1103/PhysRevD.51.4277}
	
	\bibitem{FordRoman:1996worm}
	Ford, L.H., Roman, T.A.: Quantum field theory constrains traversable wormhole
	geometries.
	\newblock Phys. Rev. D (3) \textbf{53}(10), 5496--5507 (1996).
	\newblock \doi{10.1103/PhysRevD.53.5496}.
	\newblock \urlprefix\url{https://doi.org/10.1103/PhysRevD.53.5496}
	
	\bibitem{GoodJuarezAubryetal:2020}
	Good, M., Ju\'{a}rez-Aubry, B.A., Moustos, D., Temirkhan, M.: Unruh-like
	effects: effective temperatures along stationary worldlines.
	\newblock J. High Energy Phys. \textbf{2020}(6), 059, 25 (2020).
	\newblock \doi{10.1007/jhep06(2020)059}.
	\newblock \urlprefix\url{https://doi.org/10.1007/jhep06(2020)059}
	
	\bibitem{Gooding_etal_PRL:2020}
	Gooding, C., Biermann, S., Erne, S., Louko, J., Unruh, W.G., Schmiedmayer, J.,
	Weinfurtner, S.: Interferometric {U}nruh detectors for {B}ose-{E}instein
	condensates.
	\newblock Phys. Rev. Lett. \textbf{125}(21), 213603, 6 (2020).
	\newblock \doi{10.1103/physrevlett.125.213603}.
	\newblock
	\urlprefix\url{https://link.aps.org/doi/10.1103/PhysRevLett.125.213603}
	
	\bibitem{Hawking:1966}
	Hawking, S.W.: The occurrence of singularities in cosmology. {I}.
	\newblock Proc. Roy. Soc. London Ser. A \textbf{294}, 511--521 (1966).
	\newblock \doi{10.1098/rspa.1966.0221}.
	\newblock \urlprefix\url{https://doi.org/10.1098/rspa.1966.0221}
	
	\bibitem{KontouSanders:2020}
	Kontou, E.A., Sanders, K.: Energy conditions in general relativity and quantum
	field theory.
	\newblock Classical Quantum Gravity \textbf{37}(19), 193001, 63 (2020).
	\newblock \doi{10.1088/1361-6382/ab8fcf}.
	\newblock \urlprefix\url{https://dx.doi.org/10.1088/1361-6382/ab8fcf}
	
	\bibitem{Kottler1912}
	Kottler, F.: Über die {R}aumzeitlinien der {M}inkowski'schen {W}elt.
	\newblock Sitzungsberichte der mathematisch-naturwissenschaftlichen Classe der
	Kaiserlichen Akademie der Wissenschaften, Wien \textbf{121, 2a}, 1659--1759
	(1912).
	\newblock
	\urlprefix\url{https://viewer.acdh.oeaw.ac.at/viewer/image/MN_2Abt_a_121_2_1912/238/}
	
	\bibitem{Letaw:1981}
	Letaw, J.R.: Stationary world lines and the vacuum excitation of noninertial
	detectors.
	\newblock Phys. Rev. D \textbf{23}, 1709--1714 (1981).
	\newblock \doi{10.1103/PhysRevD.23.1709}.
	\newblock \urlprefix\url{https://link.aps.org/doi/10.1103/PhysRevD.23.1709}
	
	\bibitem{LetawPfautsch:1981}
	Letaw, J.R., Pfautsch, J.D.: Quantized scalar field in the stationary
	coordinate systems of flat spacetime.
	\newblock Phys. Rev. D (3) \textbf{24}(6), 1491--1498 (1981).
	\newblock \doi{10.1103/PhysRevD.24.1491}.
	\newblock \urlprefix\url{https://doi.org/10.1103/PhysRevD.24.1491}
	
	\bibitem{LetawPfautsch:1982}
	Letaw, J.R., Pfautsch, J.D.: The stationary coordinate systems in flat
	spacetime.
	\newblock J. Math. Phys. \textbf{23}(3), 425--431 (1982).
	\newblock \doi{10.1063/1.525364}.
	\newblock \urlprefix\url{https://doi.org/10.1063/1.525364}
	
	\bibitem{misner2017gravitation}
	Misner, C., Thorne, K., Wheeler, J., Kaiser, D.: Gravitation.
	\newblock Princeton University Press (2017)
	
	\bibitem{MorettiVanzo:1996}
	Moretti, V., Vanzo, L.: Thermal {W}ightman functions and renormalized stress
	tensors in the {R}indler wedge.
	\newblock Phys. Lett. B \textbf{375}(1-4), 54--59 (1996).
	\newblock \doi{10.1016/0370-2693(96)00223-7}.
	\newblock \urlprefix\url{https://doi.org/10.1016/0370-2693(96)00223-7}
	
	\bibitem{MorrisThorne:1988}
	Morris, M.S., Thorne, K.S.: Wormholes in spacetime and their use for
	interstellar travel: a tool for teaching general relativity.
	\newblock Amer. J. Phys. \textbf{56}(5), 395--412 (1988).
	\newblock \doi{10.1119/1.15620}.
	\newblock \urlprefix\url{https://doi.org/10.1119/1.15620}
	
	\bibitem{Penrose:1965}
	Penrose, R.: Gravitational collapse and space-time singularities.
	\newblock Phys. Rev. Lett. \textbf{14}, 57--59 (1965).
	\newblock \doi{10.1103/PhysRevLett.14.57}.
	\newblock \urlprefix\url{https://doi.org/10.1103/PhysRevLett.14.57}
	
	\bibitem{Pfenning:2002}
	Pfenning, M.J.: Quantum inequalities for the electromagnetic field.
	\newblock Phys. Rev. D (3) \textbf{65}(2), 024009, 13 (2002).
	\newblock \doi{10.1103/PhysRevD.65.024009}.
	\newblock \urlprefix\url{https://doi-org/10.1103/PhysRevD.65.024009}
	
	\bibitem{Pfenning:2009}
	Pfenning, M.J.: Quantization of the {M}axwell field in curved spacetimes of
	arbitrary dimension.
	\newblock Classical Quantum Gravity \textbf{26}(13), 135017, 20 (2009).
	\newblock \doi{10.1088/0264-9381/26/13/135017}.
	\newblock \urlprefix\url{https://doi-org/10.1088/0264-9381/26/13/135017}
	
	\bibitem{PfenningFord:1997}
	Pfenning, M.J., Ford, L.H.: The unphysical nature of ``warp drive''.
	\newblock Classical Quantum Gravity \textbf{14}(7), 1743--1751 (1997).
	\newblock \doi{10.1088/0264-9381/14/7/011}.
	\newblock \urlprefix\url{https://doi.org/10.1088/0264-9381/14/7/011}
	
	\bibitem{Synge1967}
	Synge, J.L.: Timelike helices in flat space-time.
	\newblock Proc. Roy. Irish Acad. Sect. A \textbf{65}, 27--42 (1967) (1967).
	\newblock \urlprefix\url{http://www.jstor.org/stable/20488646}
	
	\bibitem{Unruh:1976}
	Unruh, W.G.: Notes on black-hole evaporation.
	\newblock Phys. Rev. D \textbf{14}, 870--892 (1976).
	\newblock \doi{10.1103/PhysRevD.14.870}.
	\newblock \urlprefix\url{https://link.aps.org/doi/10.1103/PhysRevD.14.870}
	
	\bibitem{YuWu:2004}
	Yu, H., Wu, P.: Quantum inequalities for the free {R}arita-{S}chwinger fields
	in flat spacetime.
	\newblock Phys. Rev. D (3) \textbf{69}(6), 064008, 7 (2004).
	\newblock \doi{10.1103/PhysRevD.69.064008}.
	\newblock \urlprefix\url{https://doi-org/10.1103/PhysRevD.69.064008}
	
\end{thebibliography}
%}
\end{document}